\definecolor{blue}{rgb}{0.1,0.2,0.5}
\definecolor{brown}{rgb}{0.6,0.6,0.2}
\theoremstyle{plain}
\newtheorem{theorem}{Theorem}
\newcommand{\newtheoremwithcrefformat}[2]{%
  \newtheorem{#1}[lemma]{#2}%
  \crefformat{#1}{##2\MakeUppercase#1~##1##3}%
  \Crefformat{#1}{##2\MakeUppercase#1~##1##3}%
}
\newcommand{\newseptheoremwithcrefformat}[2]{%
  \newtheorem{#1}{#2}%
  \crefformat{#1}{##2\MakeUppercase#1~##1##3}%
  \Crefformat{#1}{##2\MakeUppercase#1~##1##3}%
}
\theoremstyle{nonumberplain}
\newtheorem{proof}{Proof.}
\newtheorem{clproof}{Proof.}
\newcommand{\wcol}{\mathrm{wcol}}
\newcommand{\WReach}{\mathrm{WReach}}
\newcommand{\Oof}{\mathcal{O}}
\newcommand{\CCC}{\mathcal{C}}
\newcommand{\WWW}{\mathcal{W}}
\newcommand{\PPP}{\mathcal{P}}
\newcommand{\FFF}{\mathcal{F}}
\newcommand{\YYY}{\mathcal{Y}}
\newcommand{\nei}{\mathrm{nei}}
\renewcommand{\ker}{\mathrm{ker}}
\newcommand{\grad}{\nabla}
\newcommand{\ds}{\mathbf{ds}}
\newcommand{\cl}{\mathrm{cl}}
\newcommand{\cst}{\alpha}
\newcommand{\fnei}{f_{\nei}}
\newcommand{\fwcol}{f_{\wcol}}
\newcommand{\fker}{f_{\ker}}
\newcommand{\fproj}{f_{\mathrm{proj}}}
\newcommand{\fcl}{f_{\cl}}
\newcommand{\fgrad}{f_{\grad}}
\newcommand{\fpaths}{f_{\mathrm{pth}}}
\newcommand{\fapx}{f_{\mathrm{apx}}}
\newcommand{\fcore}{f_{\mathrm{core}}}
\newcommand{\ffin}{f_{\mathrm{fin}}}
\newcommand{\suchthat}{ \colon }
\newcommand{\N}{\mathbb{N}}
\renewcommand{\phi}{\varphi}
\renewcommand{\epsilon}{\varepsilon}
\newcommand{\minor}{\preccurlyeq}
\newcommand{\dist}{\mathrm{dist}}
\newcommand{\indx}{\mathrm{index}}
\renewcommand{\mid}{~:~}
\newcommand{\profnum}{\widehat{\nu}}
\newcommand{\projnum}{\mu}
\newcommand{\projprof}{\widehat{\mu}}
\newcommand{\abs}[1]{\ensuremath{\left\lvert#1\right\rvert}}
\title{Neighborhood complexity and kernelization\\ for nowhere dense classes of graphs\thanks{
The authors from Technische Universit\"at Berlin (AG, SK, OK, and RR)
have been supported by the
European Research Council (ERC) under the European Union's Horizon
2020 research and innovation programme (ERC consolidator grant DISTRUCT,
agreement No.\ 648527).
The work of the authors from University of Warsaw (MP and Seb.\ S) is supported by the National Science Centre of Poland via POLONEZ grant agreement UMO-2015/19/P/ST6/03998, 
which has received funding from the European Union's Horizon 2020 research and 
innovation programme (Marie Sk\l odowska-Curie grant agreement No.\ 665778).
M. Pilipczuk is supported by Foundation for Polish Science (FNP) via the START stipend programme.
 }}
\author{
     Kord Eickmeyer\thanks{Technische Universit\"at Darmstadt, Germany, \texttt{eickmeyer@mathematik.tu-darmstadt.de}}
\and Archontia C.\@ Giannopoulou\thanks{Technische Universit\"at Berlin, Germany, \texttt{\{firstname.surname : Authors\}@tu-berlin.de}}
\and Stephan Kreutzer$^\ddagger$
\and O-joung Kwon$^\ddagger$
\and Micha\l~Pilipczuk\thanks{Institute of Informatics, University of Warsaw, Poland, \texttt{\{michal.pilipczuk,siebertz\}@mimuw.edu.pl}}
\and Roman Rabinovich$^\ddagger$
\and Sebastian Siebertz$^\S$}
\begin{document}

\maketitle
\begin{abstract}
We prove that whenever $G$ is a graph from a nowhere dense graph class $\CCC$,
and $A$ is a subset of vertices of $G$, then the number of subsets of~$A$
that are realized as intersections of~$A$ with $r$-neighborhoods of vertices of $G$ is
at most $f(r,\epsilon)\cdot |A|^{1+\epsilon}$, where
$r$ is any positive integer, $\epsilon$ is any positive real, and $f$ is a function that depends only on the class $\CCC$.
This yields a characterization of nowhere dense classes of graphs in terms of {\em{neighborhood complexity}}, 
which answers a question posed by Reidl et al.~\cite{reidl2016characterising}.
As an algorithmic application of the above result, we show that for every fixed $r$, the parameterized 
{\sc{Distance-$r$ Dominating Set}} problem admits an almost linear kernel on any nowhere dense graph class.
This proves a conjecture posed by Drange et al.~\cite{drange2016kernelization}, and shows that the limit of
parameterized tractability of {\sc{Distance-$r$ Dominating Set}} on subgraph-closed graph classes lies exactly
on the boundary between nowhere denseness and somewhere denseness.
\end{abstract}

\begin{picture}(0,0) \put(395,-237)
{\hbox{\includegraphics[scale=0.25]{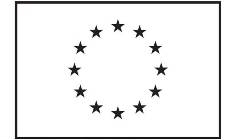}}} \end{picture} 
\vspace{-0.8cm}

\section{Introduction}

\paragraph*{Sparse graphs.}The notion of nowhere denseness was 
introduced by Ne\v set\v ril and
Ossona de Mendez~\cite{nevsetvril2011nowhere,nevsetvril2010first} as
a general model of \emph{uniform sparseness} of graphs. Many
familiar classes of sparse graphs, like planar
graphs, graphs of bounded treewidth, graphs of bounded degree, and,
in fact, all classes that exclude a fixed (topological) minor, are nowhere
dense. Notably, classes of bounded average degree or bounded
degeneracy are not necessarily nowhere dense. In an algorithmic context this is
reasonable, as every graph can be turned into a graph of
degeneracy at most~$2$ by subdividing every edge once; however, the
structure of the graph is essentially preserved under this operation.

Formally, a graph $H$ is a
\emph{minor} of a graph $G$, written $H\minor G$, if there are
pairwise vertex-disjoint connected subgraphs $(I_u)_{u\in V(H)}$ of~$G$
such that whenever $\{u,v\}$ is an edge in $H$, there are $u'\in I_u$ and $v'\in I_v$ for which $\{u',v'\}$ is an edge in $G$.
Such a family $(I_u)_{u\in V(H)}$ is called a {\em{minor model}} of $H$ in $G$.
The graph $H$ is a {\em{depth-$r$ minor}} of $G$, denoted $H\minor_rG$, if there is a minor model
$(I_u)_{u\in V(H)}$ of~$H$ in $G$ such that each subgraph $I_u$ has radius at most $r$.

\begin{definition}\label{def:nwd}
  A class $\CCC$ of graphs is \emph{nowhere dense} if there is a
  function $f\,\colon\,\N\rightarrow \N$ such that
  $K_{f(r)}\not\minor_r G$ for all $r\in \N$ and all $G\in \CCC$.
\end{definition}

Nowhere denseness turns out to be a very robust concept with several
seemingly unrelated natural characterizations.  These include
characterizations by the density of shallow (topological)
minors~\cite{nevsetvril2011nowhere,nevsetvril2010first},
quasi-wideness~\cite{nevsetvril2011nowhere} (a notion introduced by
Dawar~\cite{dawar2010homomorphism} in his study of homomorphism
preservation properties), low tree-depth
colorings~\cite{nevsetvril2008grad}, generalized coloring
numbers~\cite{zhu2009coloring}, sparse neighborhood
covers~\cite{GroheKRSS15,grohe2014deciding}, by a game called the
splitter game~\cite{grohe2014deciding} and by the model-theoretic
concepts of stability and independence~\cite{adler2014interpreting}.
For a broader discussion we refer to the book
of Ne\v{s}et\v{r}il and Ossona de Mendez~\cite{sparsity}.

This wealth of concepts and results has been applied to multiple areas in mathematics and computer science, 
but a particularly frutiful line of research concerns the design of efficient
algorithms for hard computational problems on specific sparse classes of
graphs that occur naturally in applications. For instance, using low
tree-depth colorings, Ne\v set\v ril and Ossona de
Mendez~\cite{nevsetvril2010first} showed that the {\sc{Subgraph Isomorphism}}
and {\sc{Homomorphism}} problems are fixed-parameter tractable on any nowhere dense
class, parameterized by the size of the pattern graph. 
It was later shown by Grohe et al.~\cite{grohe2014deciding}
that, in fact, every first-order definable problem can be decided in
almost linear time on any nowhere dense graph class.

An important and related concept is the notion of a graph class of \emph{bounded expansion}.
Precisely, a class of graphs $\CCC$ has bounded expansion if for any $r\in \N$, the
ratio between the numbers of edges and vertices in any $r$-shallow minor of a graph from $\CCC$ is bounded
by a constant depending on $r$ only.
Obviously, every class of bounded expansion is also nowhere dense, but the converse is not always true.
Classes of bounded expansion were also introduced by Ne\v set\v ril
and Ossona de Mendez~\cite{nevsetvril2008grad,nevsetvril2008gradb,nevsetvril2008gradc}, 
and in this setting even more algorithmic
applications are known. These include database query
answering and enumeration \cite{KazanaS13}, or approximation and kernelization algorithms for domination problems~\cite{drange2016kernelization,dvovrak2013constant}, 
a topic that we explore in detail next.

\paragraph*{Domination problems.}
In the parameterized {\sc{Dominating Set}} problem we are given a
graph~$G$ and an integer parameter $k$, and the task is
to determine the existence of a subset $D\subseteq V(G)$ of size at
most $k$ such that every vertex $u$ of $G$ is \emph{dominated} by
$D$, that is, $u$ either belongs to $D$ or has a neighbor in $D$.
More generally, for fixed $r\in \N$ we can consider the \textsc{Distance-$r$ Dominating Set} problem,
where we are asked to determine the existence of a subset $D\subseteq V(G)$ of size at most
$k$ such that every vertex $u\in V(G)$ is within distance at most~$r$
from a vertex from $D$. 
The \textsc{Dominating Set} problem plays a central role in the theory of
parameterized complexity, as it is a prime example of a
$\mathsf{W}[2]$-complete problem, thus considered intractable in full generality from the parameterized point of view.
For this reason, \textsc{Dominating Set} and \textsc{Distance-$r$ Dominating Set}
have been extensively studied in restricted graph classes, including the sparse setting.

The study of parameterized algorithms for {\sc{Dominating Set}} on sparse and topologically-constrained graph classes has a long history, 
and, arguably, it played a pivotal role in the development of modern parameterized complexity.
A point of view that was particularly fruitful, and most relevant to our work, is {\em{kernelization}}.
Recall that a kernelization algorithm is a polynomial-time preprocessing algorithm that transforms a given instance into an equivalent one whose size is bounded by a function
of the parameter only, independently of the overall input size.
We are mostly interested in kernelization algorithms whose output guarantees are polynomial in the parameter, or maybe even linear.
For {\sc{Dominating Set}} on topologically-restricted graph classes, linear kernels 
were given for planar graphs~\cite{alber2004polynomial}, bounded genus
graphs~\cite{bodfomlok+09}, apex-minor-free graphs~\cite{fomin10},
graphs excluding a fixed minor~\cite{fomin2012linear}, and
graphs excluding a fixed topological minor~\cite{FominLST13}.
All these results relied on applying tools of topological nature, most importantly deep decomposition theorems for graphs excluding a fixed (topological) minor.
Notably, the research on kernelization for {\sc{Dominating Set}} directly led to the introduction of the technique of {\em{meta-kernelization}}~\cite{bodfomlok+09}, which applies to a much larger family of problems
on bounded-genus and $H$-minor-free graph classes.

The first application of the abstract sparsity theory to domination problems, beyond the topologically-constrained setting, was given by 
Dawar and Kreutzer~\cite{DawarK09}, who showed that for every $r\in \N$ and every nowhere dense class $\CCC$,
\textsc{Distance-$r$ Dominating Set} is fixed-parameter
tractable on $\CCC$. This result is based on the characterization of
nowhere denseness in terms of uniform quasi-wideness. 
It is known that nowhere dense classes are the limit for the fixed-parameter tractability of the problem:
Drange et al.~\cite{drange2016kernelization} showed that whenever $\CCC$ is a somewhere dense class closed under taking subgraphs, there is some $r\in \N$ for which
\textsc{Distance-$r$ Dominating Set} is $\mathsf{W}[2]$-hard on $\CCC$.

As far as polynomial kernelization is concerned, 
Drange et al.~\cite{drange2016kernelization} gave a linear kernel for \textsc{Distance-$r$ Dominating Set} on any graph class of bounded expansion\footnote{Precisely, the kernelization
algorithm of Drange et al.~\cite{drange2016kernelization} outputs an instance of an annotated problem where some vertices are not required to be dominated; this will be the case in this paper as well.}, 
for every $r\in \N$,
and an almost linear kernel for \textsc{Dominating Set} on any nowhere dense graph class; that is, a kernel of size $f(\epsilon)\cdot k^{1+\epsilon}$ for some function $f$.
Drange et al. could not extend their techniques to larger domination radii $r$ on nowhere dense classes, however, they conjectured that this should be possible.
Together with the aforementioned negative result, this would confirm the following dichotomy conjecture for the parameterized complexity of \textsc{Distance-$r$ Dominating Set}.

\begin{conjecture}[\cite{drange2016kernelization}]\label{con:kernel}
Let $\CCC$ be a class of graphs closed under taking subgraphs.
If $\CCC$ is nowhere dense, then for each $r\in \N$, the \textsc{Distance-$r$ Dominating Set} problem admits an almost linear kernel on~$\CCC$.
Otherwise, if $\CCC$ is somewhere dense, then for some $r\in \N$, the \textsc{Distance-$r$ Dominating Set} problem is $\mathsf{W}[2]$-hard.
\end{conjecture}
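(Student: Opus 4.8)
We treat only the positive direction; the negative one --- that somewhere dense, subgraph-closed classes admit, for some $r$, a $\mathsf{W}[2]$-hard instance of \textsc{Distance-$r$ Dominating Set} --- is already due to Drange et al.~\cite{drange2016kernelization}. So fix a nowhere dense class $\CCC$, a radius $r\in\N$, and a target $\epsilon>0$. As in~\cite{drange2016kernelization}, we pass to the \emph{annotated} problem: an instance is a triple $(G,W,k)$ with $W\subseteq V(G)$, and the question is whether there is $D\subseteq V(G)$ with $|D|\le k$ and $W\subseteq\bigcup_{u\in D}N_r[u]$ (here $N_r[u]$ is the set of vertices within distance $r$ from $u$); the plain instance $(G,k)$ with $G\in\CCC$ is equivalent to $(G,V(G),k)$, and the kernel we produce will again be an annotated instance. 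We may assume the input is a yes-instance, returning a trivial no-instance otherwise. Finally we fix once and for all a small $\epsilon_0=\epsilon_0(r,\epsilon)>0$, chosen so that the bounded number of almost-linear losses incurred below compose to give an exponent at most $1+\epsilon$.

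\emph{Step 1: an almost-linear domination core.} We first compute, in polynomial time, a set $Z\subseteq W$ with $|Z|\le f_1(r,\epsilon_0)\cdot k^{1+\epsilon_0}$ such that every $D$ with $|D|\le k$ that $r$-dominates $Z$ also $r$-dominates $W$. The set $Z$ is obtained by the familiar greedy scheme: start with $Z:=W$, and while some $v\in Z$ is \emph{irrelevant} --- meaning that every $D$ with $|D|\le k$ that $r$-dominates $Z\setminus\{v\}$ necessarily $r$-dominates $v$ as well --- delete $v$ from $Z$. Each such deletion preserves the invariant that $Z$ is a domination core of $W$, so the final $Z$ has the required domination property; what remains is to show (a) that an irrelevant vertex can be located in polynomial time whenever $|Z|$ exceeds the claimed bound, and (b) that a core with no irrelevant vertex obeys the bound. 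Both points rely on the sparsity toolkit: uniform quasi-wideness and the generalized coloring numbers yield, inside a too-large $Z$, a sizeable scattered family of vertices that are pairwise indistinguishable with respect to the rest of $Z$, while the neighborhood-complexity theorem bounds by $f(r,\epsilon_0)\cdot|Z|^{1+\epsilon_0}$ the number of distinct sets $N_r[u]\cap Z$ over $u\in V(G)$; confronting these two facts with the witnesses $D_v$ --- which exist for every $v\in Z$ once no vertex is irrelevant --- delivers both the efficient detection and the almost-linear size bound. This part extends the domination-core argument given for the $r=1$ case, \textsc{Dominating Set}, on nowhere dense classes in~\cite{drange2016kernelization}.

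\emph{Step 2: shrinking the graph.} After Step 1 we replace $W$ by $Z$, so the task is to reduce $|V(G)|$ to $f(r,\epsilon)\cdot k^{1+\epsilon}$. A vertex $v\notin Z$ is only ever useful as a dominator, and at radius $r$ the pertinent information about it is its trace $N_r[v]\cap Z$; by the neighborhood-complexity theorem applied with $A=Z$, the number of distinct traces $N_r[v]\cap Z$ over $v\in V(G)$ is at most $f(r,\epsilon_0)\cdot|Z|^{1+\epsilon_0}$, hence $f_2(r,\epsilon_0)\cdot k^{(1+\epsilon_0)^2}$ after substituting the bound on $|Z|$. For each trace we retain $O_r(k)$ representative vertices and discard the rest. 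A size-$\le k$ solution uses at most $k$ vertices of any fixed trace, so we would be finished if a solution using a discarded vertex could always be rerouted through a retained vertex of the same trace; the difficulty is that two vertices with identical traces on $Z$ need not admit distance-$\le r$ paths to $Z$ through the surviving vertices, since the internal vertices of such paths may have been discarded. We handle this by processing radii downward: before reducing we enlarge $Z$ to a \emph{closure} $\widehat Z\supseteq Z$ --- a superset of comparable, still almost-linear, size produced by the closure operation for quasi-wide classes, engineered so that every shortest path of length at most $r$ joining vertices close to $Z$ either runs inside $\widehat Z$ or can be rerouted within it --- then apply the radius-$(r-1)$ reduction to the instance annotated with $\widehat Z$, and finally restrict the annotation back to $Z$. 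Unravelling the recursion and using the choice of $\epsilon_0$ yields $|V(G)|\le f(r,\epsilon)\cdot k^{1+\epsilon}$; a relabelling then produces the kernel, and together with the negative direction of~\cite{drange2016kernelization} this settles the conjecture.

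\emph{The main obstacle.} The heart of the proof is making Step 2's rerouting work while keeping all counts almost linear. In the bounded-expansion setting of~\cite{drange2016kernelization} the number of traces is genuinely linear in $k$, so crude constants and a simple closure suffice; here the only available control is the almost-linear neighborhood-complexity estimate, so one must (i) invoke that estimate only a bounded --- in terms of $r$ --- number of times, each with the same fixed slack $\epsilon_0$, so that the final exponent stays below $1+\epsilon$, and (ii) design the quasi-wideness-based closure to be simultaneously almost linear in $k$, large enough to absorb every path-internal vertex that the radius-$(r-1)$ recursion must preserve, and compatible with the greedy core construction of Step 1. Reconciling these three demands is where the bulk of the technical work lies.
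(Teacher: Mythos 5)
Your skeleton (pass to the annotated problem, Phase 1: shrink to an $r$-domination core of size $\approx k^{1+\epsilon}$, Phase 2: shrink the whole graph, both powered by almost-linear neighborhood complexity plus quasi-wideness) is the same as the paper's, but both phases as you describe them have genuine gaps. In Phase 1 you never supply the mechanism that measures ``too large'' against the parameter $k$. The paper's proof of \cref{lem:findredundantvertex} first computes an approximate $(Z,r)$-dominator of size $\approx k^{1+\epsilon}$ --- and this is itself a sticking point, since the iterated Dvo\v{r}\'ak approximation used by Drange et al.\ is not known to terminate in constantly many rounds on a nowhere dense class; the paper replaces it by an $\Oof(\log \mathsf{OPT})$-approximation via bounded VC-dimension (\cref{lem:dvorak}) --- then takes its $3r$-closure $X_{\cl}$ (\cref{lem:closure}) and classifies the vertices of $Z$ by their $3r$-\emph{projection profiles} on $X_{\cl}$ (\cref{lem:projection-complexity}), so the number of classes is $\approx k^{1+\Oof(\epsilon)}$; only then do polynomial quasi-wideness inside one big class and an exchange argument yield an irrelevant vertex. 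Your classification runs in the opposite direction (the traces $N_r[u]\cap Z$ over $u\in V(G)$, of which there are $\approx |Z|^{1+\epsilon_0}$), which cannot bound $|Z|$ in terms of $k$; moreover plain traces are not enough even qualitatively --- the exchange argument needs exact distance/projection information, which is why the paper proves the distance-profile strengthening of the neighborhood-complexity bound and the layered-graph reduction behind \cref{lem:projection-complexity}.

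In Phase 2, retaining $\Oof_r(k)$ representatives \emph{per trace} already destroys the size bound: with $\approx k^{1+\Oof(\epsilon)}$ traces this gives $\approx k^{2+\Oof(\epsilon)}$ vertices, not $k^{1+\epsilon}$. One representative per $r$-projection-profile class suffices, since any solution vertex can be swapped for the representative of its class without losing dominated vertices of $Z$; and the rerouting difficulty you identify is resolved in a single step by the Short Paths Closure Lemma (\cref{lem:pathsclosure}), which adds only an almost-linear number of vertices while preserving all distances at most $r$ among the retained set (\cref{lem:core-kernel}). The downward-radius recursion you propose instead is not specified precisely enough to check either correctness or that its accumulated losses stay within exponent $1+\epsilon$, and it is unnecessary. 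So the high-level plan is right, but the two technical devices that make it work in the nowhere dense setting --- the VC-dimension-based approximation feeding a projection-profile classification against a small closed set, and the one-representative-per-profile reduction followed by the path closure --- are missing or replaced by steps that do not meet the required bounds.
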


An important step towards proving~\cref{con:kernel} was made recently by a subset of the authors~\cite{siebertz2016polynomial}, who gave
a polynomial kernel for \textsc{Distance-$r$ Dominating Set} on any nowhere dense class $\CCC$. 
This result, in fact, follows from the new polynomial bounds on uniform quasi-wideness, proved in~\cite{siebertz2016polynomial}, in combination with the old approach of Dawar and Kreutzer~\cite{DawarK09}.
However, the degree of the polynomial bound on the kernel size in~\cite{siebertz2016polynomial} depends badly on $r$ and the class we are working with.

\paragraph*{Neighborhood complexity.} 
One of the crucial ideas in the work of Drange et al.~\cite{drange2016kernelization}
was to focus on the {\em{neighborhood complexity}} in sparse graph classes.
More precisely, for an integer $r\in \N$, a graph $G$, and a subset
$A\subseteq V(G)$ of vertices of~$G$,
the {\em{$r$-neighborhood complexity}} of~$A$, denoted $\nu_r(G, A)$, is defined as the number of different
subsets of $A$ that are of the form $N_r[u]\cap A$ for some vertex $u$ of $G$; here, $N^G_r[u]$ denotes the ball of radius $r$ around $u$. 
That is,
\[\nu_r(G, A)=|\{N_r^G[u]\cap A\ \colon\ u\in V(G)\}|.\]
It was proved by Reidl et al.~\cite{reidl2016characterising} that linear neighborhood complexity
exactly characterizes subgraph-closed classes of bounded expansion.
More precisely, a subgraph-closed class $\CCC$ has bounded expansion if and only if for each $r\in \N$ there is a constant $c_r$ such
that $\nu_r(G, A)\leq c_r\cdot |A|$ for all graphs $G\in \CCC$ and vertex subsets $A\subseteq V(G)$.
They posed as an open problem whether nowhere denseness can be similarly characterized by almost linear neighborhood complexity, thus essentially stating the following conjecture.

\begin{conjecture}[\cite{reidl2016characterising}]\label{con:nei}
Let $\CCC$ be a graph class closed under taking subgraphs.
Then $\CCC$ is nowhere dense if and only if there exists a function $\fnei(r,\epsilon)$ such that
$\nu_r(G,A)\leq \fnei(r,\epsilon)\cdot |A|^{1+\epsilon}$ for all $r\in \N$, $\epsilon>0$, graphs $G\in \CCC$, and vertex subsets $A\subseteq V(G)$.
\end{conjecture}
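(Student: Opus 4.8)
The statement is an equivalence, and since the converse direction is short I describe it first. Suppose $\CCC$ is subgraph-closed and somewhere dense. By a standard characterization of subgraph-closed somewhere dense classes (see, e.g.,~\cite{sparsity}) there is a fixed $p\in\N$ such that $\CCC$ contains the $p$-subdivision of every graph, i.e., the graph obtained by replacing each edge with a path of length $p+1$. Let $H_n$ be the \emph{membership graph} of the power set of $\{a_1,\dots,a_n\}$: it has a vertex $a_i$ for each element and a vertex $u_S$ for each $S\subseteq\{a_1,\dots,a_n\}$, with $\{a_i,u_S\}$ an edge iff $a_i\in S$. Take $G$ to be the $p$-subdivision of $H_n$, so that $G\in\CCC$, and let $A=\{a_1,\dots,a_n\}$. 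One checks that $\dist_G(u_S,a_i)=p+1$ when $a_i\in S$ and $\dist_G(u_S,a_i)>p+1$ otherwise, hence $N_{p+1}[u_S]\cap A=S$ and so $\nu_{p+1}(G,A)\ge 2^{|A|}$. As $|A|=n$ is unbounded, no bound of the form $\fnei(p+1,\epsilon)\cdot|A|^{1+\epsilon}$ can hold; this is exactly the contrapositive of the implication that almost linear neighborhood complexity forces nowhere denseness.

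For the forward direction, fix $r\in\N$, $\epsilon>0$, $G\in\CCC$, and $A\subseteq V(G)$. I would begin with a structural reduction trading neighborhoods for \emph{projections}. For $v\in V(G)$ let $M_r[v,A]$ be the set of vertices $a\in A$ joined to $v$ by a path of length at most $r$ with no internal vertex in $A$, and call $\{(a,\dist_G(v,a)):a\in M_r[v,A]\}$ the \emph{$r$-projection profile} of $v$ on $A$. For each $b\in N_r[v]\cap A$, following a shortest $v$--$b$ path up to its first vertex in $A$ shows that $N_r[v]\cap A=\bigcup_{a\in M_r[v,A]}\bigl(N_{r-\dist_G(v,a)}[a]\cap A\bigr)$ for every $v\in V(G)$; thus $N_r[v]\cap A$ is determined by the $r$-projection profile of $v$ on $A$, and $\nu_r(G,A)$ is bounded by the number of distinct $r$-projection profiles of vertices of $G$ on $A$. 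The task is therefore reduced to showing that this number is at most $\fproj(r,\epsilon)\cdot|A|^{1+\epsilon}$.

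Proving this projection-profile bound is the heart of the argument, and where I expect the real difficulty to lie. The natural tool is the generalized coloring numbers: nowhere denseness is equivalent to $\wcol_s(G')\le\fwcol(s,\epsilon')\cdot|V(G')|^{\epsilon'}$ for all $s$, $\epsilon'>0$, and all subgraphs $G'$ of members of $\CCC$, together with the elementary fact that for any linear order $\wcolorder$ of $V(G)$ and any $u,v$ with $\dist_G(u,v)\le s$, the $\wcolorder$-minimum vertex on a shortest $u$--$v$ path lies in $\WReach_s[u]\cap\WReach_s[v]$. Fixing such an order, one can try to encode the $r$-projection profile of $v$ by a bounded amount of data attached to the vertices of $\WReach_{2r}[v]$ --- which of them are minima on shortest paths from $v$ to the projection, and at which distances --- and to bound the number of profiles in terms of $\wcol_{2r}$. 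The genuine obstacle is twofold: arranging such an encoding so that it actually determines the profile (hence injectively counts it) is technically delicate; and, more seriously, $\wcol$-bounds scale with $|V(G)|$, which may be astronomically larger than $|A|$, so ordering $G$ globally leaves far too much slack.

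To overcome the latter --- following the route that has proved workable for the related kernelization results --- I would first pass to a \emph{closure} of $A$: a superset $\widehat A\supseteq A$ with $|\widehat A|\le\fcl(r,\epsilon)\cdot|A|^{1+\epsilon}$ in which every $r$-projection onto $\widehat A$ has size bounded by a constant and which faithfully captures all $r$-projection profiles on $A$. The counting then takes place on a substructure of size polynomial in $|A|$, where the coloring-number slack becomes only $|A|^{\epsilon}$. Constructing $\widehat A$ and bounding its size is itself driven by an iterated application of the projection-profile estimate, so closures and profile counts must be developed in tandem; managing this interleaving and tracking the quantitative dependence on $r$ and $\epsilon$ throughout is the delicate bookkeeping. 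Once all of this is in place, choosing the auxiliary $\epsilon$-parameters so that $\fcl$, $\fproj$, and the constant bound on projection sizes compose into a single function $\fnei(r,\epsilon)$ completes the forward direction, and with it the equivalence.
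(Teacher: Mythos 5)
Your converse direction is fine and is essentially the paper's own sketch (subdivisions of graphs with all possible neighborhoods on $A$). In the forward direction, however, there is a genuine gap, and it sits exactly where you locate the ``real difficulty''. First, trading $N_r[v]\cap A$ for $r$-projection profiles does not by itself make progress: counting projection profiles is at least as hard as counting neighborhoods (in the paper the implication is in fact derived the other way round, via a layered-graph construction reducing projection profiles to distance profiles). Second, and more seriously, your plan to encode a profile by data attached to $\WReach_{2r}[G,L,v]$ runs into an \emph{exponential} blow-up in $\wcol_{2r}$: the Reidl et al.\ count is of the form $(2r+2)^{\wcol_{2r}(G)}\cdot\wcol_{2r}(G)\cdot|A|$, and even after you arrange that the ambient structure has size polynomial in $|A|$, so that $\wcol_{2r}\leq \fwcol(2r,\epsilon')\cdot|A|^{\epsilon'}$, this is $\exp(|A|^{\epsilon'})$, which is superpolynomial. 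Your proposal contains no mechanism for taming this exponential; reducing the slack from $|V(G)|^{\epsilon}$ to $|A|^{\epsilon}$ is necessary but nowhere near sufficient. The missing idea in the paper is a stability-type input: the family $\{\WReach_r[G,L,v]\colon v\in V(G)\}$ has VC-dimension bounded by a constant $x(r)$ depending only on $\CCC$ and $r$ (proved via uniform quasi-wideness), so by Sauer--Shelah only about $\wcol_{2r}(G')^{x(r)}\approx |A|^{\epsilon' x(r)}$ traces of weak-reachability sets occur on each relevant set; combined with a bookkeeping device (the sets $Y[v]=\WReach_r[G',L,v]\cap\WReach_r[G',L,A]$, whose union has size $\Oof(|A|^{1+\epsilon'})$) this converts the exponential count into the desired almost-linear one.

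Your substitute for the size reduction is also shaky. The closure lemma in the nowhere dense setting only guarantees projections of size $\fcl(r,\epsilon)\cdot|A|^{\epsilon}$, not constant; it is proved from density bounds on shallow minors, not ``in tandem'' with profile counts, and the circular scheme you gesture at is never resolved. Moreover, the vertices realizing profiles live outside any closure of $A$, so restricting the counting to an induced subgraph on the closure loses profiles; the paper instead first uses Adler--Adler stability plus Sauer--Shelah to show that the number of $r$-distance profiles on $A$ is at most $|A|^{d(r)}$, adds one representative per profile class together with shortest paths, and only then passes to the induced subgraph $G'$ of size polynomial in $|A|$. Your outline never invokes stability or VC-dimension at all, yet both steps of the paper's proof (the polynomial pre-bound enabling the reduction, and the taming of the exponential in $\wcol_{2r}$) depend on it; finally, even granting constant-size projections with recorded distances, a naive count gives only a polynomial bound $|A|^{\Oof(1)}$, not $\fnei(r,\epsilon)\cdot|A|^{1+\epsilon}$, so the stated goal would still not follow.
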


It is easy to see the right-to-left implication of~\cref{con:nei}. If $\CCC$ is somewhere dense and closed under taking subgraphs, then for some $r\in \N$ it contains the 
exact $r$-subdivision
of every graph~\cite{nevsetvril2011nowhere}. In this case it is easy to see that the $r$-neighborhood complexity of a vertex subset~$A$ can be as large as $2^{|A|}$.
The lack of the left-to-right direction was a major, however not the only, obstacle preventing Drange et al.~\cite{drange2016kernelization} from extending their kernelization results to {\sc{Distance-$r$ Dominating Set}}
on any nowhere dense class. The proof of this direction for $r=1$ was given by Gajarsk\'y et al.~\cite{GajarskyHOORRVS13}, and this result was used by Drange et al.~\cite{drange2016kernelization}
in their kernelization algorithm for {\sc{Dominating Set}} (with distance $r=1$) on nowhere dense graph classes.

\paragraph*{Our results.}
In this paper we resolve in affirmative both~\cref{con:kernel} and~\cref{con:nei}. Let us start with the latter.

\begin{theorem}\label{thm:main-neighborhood-complexity}
Let $\CCC$ be a graph class closed under taking subgraphs.
Then $\CCC$ is nowhere dense if and only if there exists a function $\fnei(r,\epsilon)$ such that
$\nu_r(G,A)\leq \fnei(r,\epsilon)\cdot |A|^{1+\epsilon}$ for all $r\in \N$, $\epsilon>0$, $G\in \CCC$, and $A\subseteq V(G)$.
\end{theorem}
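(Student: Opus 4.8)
The right-to-left implication is routine and already sketched in the excerpt: if $\CCC$ is somewhere dense and subgraph-closed, it contains exact $r$-subdivisions of all graphs for some fixed $r$, and in the $r$-subdivision of $K_n$ one can take $A$ to be the $n$ original vertices and realize every subset of $A$ as an $r$-ball, giving $\nu_r(G,A) = 2^{|A|}$, which no bound of the form $\fnei(r,\epsilon)\cdot|A|^{1+\epsilon}$ can match. So the whole content is the left-to-right direction, and the plan is to prove it by induction on $r$ using the two standard structural tools for nowhere dense classes: low tree-depth colorings (equivalently, bounded weak coloring numbers) and uniform quasi-wideness with polynomial bounds.

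The backbone of the argument should be the following inductive scheme. Fix a nowhere dense class $\CCC$, a radius $r$, and $\epsilon>0$; we want to bound $\nu_r(G,A)$ for $G\in\CCC$ and $A\subseteq V(G)$. The base case $r=1$ is exactly the result of Gajarsk\'y et al.\ cited in the excerpt, so I would take it as given. For the inductive step, I would fix a weak-coloring order $\wcolorder$ on $V(G)$ witnessing a bound $\wcol_s(G)\leq c$ for a suitably chosen threshold $s = s(r,\epsilon,\CCC)$; such an order exists since bounded weak coloring numbers characterize nowhere denseness. For a vertex $u$, the set $N_r[u]\cap A$ is controlled by the "profile" of $u$ toward $A$: which weakly $s$-reachable vertices lie in certain balls, and how the low-$\wcolorder$ part of the graph looks from $u$. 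The key point is that two vertices with the same such profile see the same subset of $A$, so it suffices to bound the number of distinct profiles. This is where uniform quasi-wideness enters: after deleting a bounded-size set (whose choice depends on the profile), the class becomes "almost" of bounded degree on the relevant scale, and one can split $A$ into a part handled by the low tree-depth coloring (where the bounded-expansion-style linear bound of Reidl et al.\ applies, but now only to the color classes, which form bounded tree-depth graphs) and a part handled recursively at a strictly smaller radius.

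Concretely, the main steps, in order, would be: (1) reduce to connected $G$ and to the case where every vertex of $G$ is within distance $r$ of $A$ (otherwise delete irrelevant vertices, which only shrinks the class); (2) fix the weak coloring order and, via uniform quasi-wideness with the polynomial bounds of Siebertz et al.\ \cite{siebertz2016polynomial}, find for each "type" of vertex a small set $S$ such that the $(2r)$-balls around the remaining vertices interact with $A$ in a structurally simple way; (3) for the part of $A$ reachable "through $S$", charge the neighborhood count to $|S|$ and the number of types, both bounded; (4) for the part of $A$ reachable without going through $S$, decompose using a low tree-depth $(r$-dependent$)$ coloring and observe that within each bounded-tree-depth piece, an $r$-ball decomposes into a bounded union of pieces each of which is, roughly, an $(r-1)$-ball in a minor/subgraph, letting the induction hypothesis fire with radius $r-1$ and a slightly larger $\epsilon'$; (5) combine: $\nu_r(G,A)\leq (\text{number of types})\cdot\big(\text{poly}(|S|)\cdot \fnei(r-1,\epsilon')\cdot|A|^{1+\epsilon'}\big)$, and choose $s$, the coloring, and $\epsilon'$ so that $(1+\epsilon')$ accumulated over the $r$ levels of recursion stays below $1+\epsilon$ and all the bounded quantities fold into $\fnei(r,\epsilon)$. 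The delicate bookkeeping is that each recursive level multiplies the exponent overhead, so one must start with $\epsilon$ and hand down $\epsilon/2$, $\epsilon/4$, etc., or more precisely $\epsilon' = \epsilon'(r,\epsilon)$ chosen so that $(1+\epsilon')^{O(r)} \le 1+\epsilon$ — fine since $r$ is fixed.

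The main obstacle, I expect, is exactly step (4): making precise the statement that an $r$-neighborhood, once we have peeled off a bounded "core" set $S$ and restricted to a single color class of a low tree-depth coloring, decomposes into boundedly many sets each expressible as an $(r-1)$-neighborhood \emph{in a graph still belonging to the class} (or a closely related class, e.g.\ a shallow minor, which is still nowhere dense). The subtlety is that contracting or deleting to reduce the radius can blow up $A$ or change which vertices are "close", so one needs the uniform quasi-wideness step to be tightly coupled with the coloring so that the radius genuinely drops while $|A|$ and the number of profiles stay under control. Getting the quantitative dependencies to close — in particular ensuring the polynomial-in-$|S|$ factor does not interact badly with the $|A|^{1+\epsilon'}$ factor, which forces $|S|$ to be bounded by a function of $r,\epsilon,\CCC$ only and \emph{not} growing with $|A|$ — is where the recent polynomial bounds on uniform quasi-wideness are essential and where the proof is most technical.
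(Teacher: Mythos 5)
Your plan breaks at the very first structural step. You propose to ``fix a weak-coloring order $\wcolorder$ on $V(G)$ witnessing a bound $\wcol_s(G)\leq c$'' and to use low tree-depth colorings with a bounded number of colors, asserting that ``bounded weak coloring numbers characterize nowhere denseness.'' They do not: uniformly constant bounds on $\wcol_s$ (and constant-size low tree-depth colorings) characterize \emph{bounded expansion}. On a nowhere dense class the only available bound is $\wcol_s(G)\leq \fwcol(s,\epsilon)\cdot |V(G)|^{\epsilon}$, where $|V(G)|$ may be arbitrarily large compared to $|A|$; this is precisely where the bounded-expansion argument of Reidl et al.~\cite{reidl2016characterising} fails, because their bound on $\nu_r(G,A)$ is \emph{exponential} in $\wcol_{2r}(G)$. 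Your outline addresses neither difficulty: you never convert the $|V(G)|^{\epsilon}$ factor into an $|A|^{O(\epsilon)}$ factor (the paper does this by first replacing $G$ with an induced subgraph on $\mathrm{poly}(|A|)$ vertices that still realizes all $r$-distance profiles, built from class representatives plus shortest paths), and you never explain how to avoid paying exponentially in the weak coloring number (the paper does this by proving, via uniform quasi-wideness, that the family of weak $r$-reachability sets $\{\WReach_r[G,L,v]\colon v\in V(G)\}$ has VC-dimension bounded by a constant depending only on $\CCC$ and $r$, and then applying the Sauer--Shelah lemma, together with the Adler--Adler stability bound~\cite{adler2014interpreting} which gives a polynomial bound on the number of distance profiles). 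Without some substitute for these two moves, every quantity you treat as a constant in steps (2)--(5) is in fact $|V(G)|^{\epsilon}$-sized, and the bookkeeping cannot close.

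The second gap is the one you flag yourself: step (4), the claim that after deleting a small set $S$ and restricting to a color class, an $r$-ball decomposes into boundedly many $(r-1)$-balls in graphs from a related nowhere dense class with a controlled number of traces on $A$, is unsubstantiated, and it is not a mere technicality. The naive decomposition of $N_r[u]\cap A$ over the neighbors of $u$ has unbounded arity, and confining paths to a single bounded-tree-depth class does not help, since a path of length $r$ meets up to $r+1$ classes and, in the nowhere dense setting, the number of classes is itself only bounded by $|V(G)|^{\epsilon}$, so even unions over bounded-size subsets of classes are unavailable. The paper's proof uses no induction on $r$ and does not rely on the $r=1$ case of Gajarsk\'y et al.~\cite{GajarskyHOORRVS13}; uniform quasi-wideness~\cite{siebertz2016polynomial} enters only inside the VC-dimension lemma for weak reachability sets, not as a device for reducing the radius. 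As written, your proposal is a plan whose two load-bearing steps are, respectively, based on a false premise and left open.
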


To prove the above, we carefully analyze the argument of Reidl et al.~\cite{reidl2016characterising} for linear neighborhood complexity in classes of bounded expansion.
This argument is based on the analysis of vertex orderings certifying the constant upper bound on the {\em{weak coloring number}} of any graph from a fixed class of bounded expansion.
In the nowhere dense setting, we only have an $n^{\epsilon}$ upper bound on the weak coloring number, and therefore the reasoning breaks whenever one tries to use a bound that is exponential
in this number. We circumvent this issue by applying tools based on model-theoretic properties of nowhere
dense classes of graphs. More precisely, we use the fact that every nowhere dense class is {\em{stable}} in the sense of Shelah, and hence every graph~$H$ which is obtained from a graph $G$ from a nowhere dense class via a first-order interpretation has bounded VC-dimension~\cite{adler2014interpreting}.
Then exponential blow-ups can be reduced to polynomial using the Sauer-Shelah lemma.
These tools were recently used by a subset of the authors to give polynomial bounds for uniform quasi-wideness~\cite{siebertz2016polynomial}, 
a fact that also turns out to be useful in our proof.

We remark that we were informed by Micek, Ossona de Mendez, Oum, and Wood~\cite{Mendez17} that they 
have independently resolved~\cref{con:nei} by proving the same statement of~\cref{thm:main-neighborhood-complexity}, however using different methods.

Having the almost linear neighborhood complexity for any nowhere dense class of graphs, we can revisit the argumentation of Drange et al.~\cite{drange2016kernelization}
and complete the proof of~\cref{con:kernel} by showing the following result.

\begin{theorem}\label{thm:main-kernel}
  Let $\CCC$ be a fixed nowhere dense class of graphs, let $r$ be a fixed positive integer, and let $\epsilon>0$ be any fixed real.
  Then there is a polynomial-time algorithm that, given a graph $G\in \CCC$ and a positive integer $k$,
  returns a subgraph $G'\subseteq G$ and a vertex subset $Z\subseteq V(G')$ with the following properties:
  \begin{itemize}
  \item there is a set $D\subseteq V(G)$ of size at most $k$ which 
  $r$-dominates $G$ if and only if there is a set $D'\subseteq V(G')$
  of size at most~$k$ which $r$-dominates $Z$ in $G'$; and
  \item $|V(G')|\leq \fker(r,\epsilon)\cdot k^{1+\epsilon}$, for some function $\fker(r,\epsilon)$ depending only on the class $\CCC$.
  \end{itemize}
\end{theorem}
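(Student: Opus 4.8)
The plan is to follow the kernelization blueprint of Drange et al.~\cite{drange2016kernelization} for classes of bounded expansion, feeding in \cref{thm:main-neighborhood-complexity} in place of the single ingredient of that blueprint that was previously available only in the bounded-expansion regime. The construction has two phases. In the \emph{first phase} we shrink the set of vertices that a solution must $r$-dominate down to an \emph{$r$-domination core}: a set $Z\subseteq V(G)$ with $\abs{Z}\le\fcore(r,\epsilon)\cdot k^{1+\epsilon}$ such that every $D\subseteq V(G)$ with $\abs{D}\le k$ that $r$-dominates $Z$ already $r$-dominates all of $G$. In the \emph{second phase}, keeping $Z$ fixed, we pass to an induced subgraph $G'\subseteq G$ that still contains $Z$, has size $\ffin(r,\epsilon)\cdot k^{1+\epsilon}$, and is rich enough to witness $r$-domination of $Z$; here \cref{thm:main-neighborhood-complexity} bounds the number of essentially different roles a vertex can play relative to $Z$. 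A point that pervades both phases — and the reason this works only now — is that every counting step must incur only an almost-linear, not exponential, blow-up: in the nowhere dense setting the weak $r$-coloring number of $G$ is bounded only by $\abs{V(G)}^{\epsilon}$, and uniform quasi-wideness yields only polynomial bounds~\cite{siebertz2016polynomial}, so wherever the bounded-expansion argument uses a quantity exponential in $\wcol_r$ we must instead collapse the exponential blow-up to a polynomial one using bounded VC dimension and the Sauer--Shelah lemma, exactly as in the proof of \cref{thm:main-neighborhood-complexity}.

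For the first phase we use a polynomial-time constant-factor approximation algorithm for \textsc{Distance-$r$ Dominating Set} on nowhere dense classes to maintain an approximate solution of size $O(k)$ (and to detect clear no-instances). The core is built by iteratively discarding redundant vertices: starting from $Z=V(G)$, as long as $\abs{Z}$ exceeds the target bound we locate a vertex $z\in Z$ that is \emph{redundant}, meaning that every $D$ with $\abs{D}\le k$ that $r$-dominates $Z\setminus\{z\}$ also $r$-dominates $z$, and remove it. The heart of the phase is to show that a redundant vertex exists once $Z$ is large. Using polynomial uniform quasi-wideness one extracts a small set $S$ and a subset $Y\subseteq Z$, still large relative to $k$ and to $\abs{S}$, whose elements are pairwise at distance more than $2r$ in $G-S$; any $D$ with $\abs{D}\le k$ that $r$-dominates $Z\setminus\{z\}$ reaches only boundedly many elements of $Y$ through vertices far from $S$, while its interaction with $S$ takes only polynomially many forms (bounded via Sauer--Shelah), so two elements of $Y$ are interchangeable for domination and one of them is redundant. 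Since each removal strictly decreases $\abs{Z}$, the process halts with a core of the stated almost-linear size; the core property is preserved at every step, so the argument is insensitive to whether the input is a yes- or no-instance.

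For the second phase we fix $Z$ and classify every vertex $v$ of $G$ by its \emph{$r$-projection} $N^G_r[v]\cap Z$. By \cref{thm:main-neighborhood-complexity} the number of distinct $r$-projections is $\nu_r(G,Z)\le\fnei(r,\epsilon')\cdot\abs{Z}^{1+\epsilon'}$, which, after substituting the bound on $\abs{Z}$ and taking $\epsilon'$ small, is almost linear in $k$. We keep $Z$, a set $R$ consisting of one representative $v_W$ per realized projection $W$ (so $\abs{R}\le\nu_r(G,Z)$), and — the technically most delicate ingredient — a set $C$ of \emph{connector} vertices, of almost-linear size, chosen so that in $G':=G[Z\cup R\cup C]$ every representative $v_W\in R$ still $r$-dominates all of $W$; that is, $C$ must supply, for every $z\in W$, a $v_W$--$z$ path of length at most $r$ inside $G'$. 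Keeping one arbitrary shortest path per such pair would only give a quadratic bound, so instead we take $C$ as the union of a carefully pruned family of short paths whose total size we bound by an almost-linear function of $\abs{Z}$ — essentially a \emph{distance-$r$ preserver} for the terminal set $T=Z\cup R$: in any graph from a fixed nowhere dense class, for any terminal set $T$ there is a subgraph of size $\fpaths(r,\epsilon)\cdot\abs{T}^{1+\epsilon}$ preserving all distances at most $r$ between vertices of $T$. This preserver lemma is proved in the spirit of \cref{thm:main-neighborhood-complexity}: route all relevant short paths through a vertex order witnessing an almost-linear bound on $\wcol_{2r}(G)$, prune the resulting family of paths, and invoke the exclusion of shallow topological minors (equivalently, bounds on $\grad_{2r}(G)$) together with bounded VC dimension and the Sauer--Shelah lemma to replace the count exponential in $\wcol_{2r}$ that the bounded-expansion argument would produce by one polynomial in $\abs{T}$. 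I expect this distance-preserver step to be the main obstacle: it is precisely where the bounded-expansion proof, which can afford a bound exponential in a constant weak coloring number, has to be replaced by a genuinely new sparsity argument.

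It remains to check the two conclusions. Taking $\epsilon'$ and the precision inside the preserver lemma small enough that the product of all incurred exponents stays below $1+\epsilon$ gives $\abs{V(G')}\le\abs{Z}+\abs{R}+\abs{C}\le\fker(r,\epsilon)\cdot k^{1+\epsilon}$, and the whole procedure runs in polynomial time. For the equivalence, if $D\subseteq V(G)$ with $\abs{D}\le k$ $r$-dominates $G$, set $D':=\{v_{W(v)}:v\in D\}\subseteq R$, where $W(v)=N^G_r[v]\cap Z$; then $\abs{D'}\le k$, and since every $z\in Z$ lies in $W(v)$ for some $v\in D$ and each $v_W\in R$ $r$-dominates $W$ in $G'$, the set $D'$ $r$-dominates $Z$ in $G'$. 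Conversely, if $D'\subseteq V(G')$ with $\abs{D'}\le k$ $r$-dominates $Z$ in $G'$, then since $G'$ is an induced subgraph of $G$ we have $\dist_{G'}\ge\dist_G$, so $D'$ $r$-dominates $Z$ in $G$ as well, and by the defining property of the core it $r$-dominates all of $G$; hence $G$ admits an $r$-dominating set of size at most $k$. Both bullet points of the theorem follow.
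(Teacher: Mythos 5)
Your second phase is essentially the paper's (representatives per projection class plus a short-paths closure of $Z$ together with the representatives), and the preserver lemma you flag as the main obstacle is in fact the easy part: it is Lemma 2.11 of Drange et al.\ adapted verbatim, where the only change is to replace the constant $\grad_{r-1}(\CCC)$ by the almost-linear bound of \cref{lem:gradbound}; no weak-coloring orders or VC-dimension arguments are needed there (see \cref{lem:pathsclosure}). The genuine gap is in your first phase, i.e.\ in the construction of an almost-linear core. Your argument for finding a removable vertex is the Dawar--Kreutzer scheme: apply uniform quasi-wideness to a large portion of $Z$, extract $S$ and a $2r$-independent set $Y$ that is ``large relative to $k$'', and argue by pigeonhole over the at most $k$ dominators. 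For that pigeonhole you genuinely need $|Y|>k$ vertices of the same type, and by \cref{thm:uqw} extracting a $2r$-independent set of size about $k$ requires $|Z|$ to be of order $k^{p(2r)}$. So your iteration only stalls once $|Z|$ drops below a \emph{polynomial} in $k$ whose degree depends on $\CCC$ and $r$ --- this reproves the polynomial kernel of~\cite{siebertz2016polynomial}, but the claimed halting at $\fcore(r,\epsilon)\cdot k^{1+\epsilon}$ is asserted, not proved.

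The missing idea is how the paper avoids ever asking quasi-wideness for an independent set of size $\approx k$: it first computes an approximate $(Z,r)$-dominator $X$ (via the VC-dimension-based $\Oof(\log \mathsf{OPT})$ Hitting Set approximation of \cref{lem:dvorak} --- note that the constant-factor approximation on nowhere dense classes that you invoke is not available, only Dvo\v{r}\'ak's bounded-expansion version is), closes it to $X_{\cl}=\cl_{3r}(X)$ so that all $3r$-projections onto $X_{\cl}$ have size only $k^{\Oof(\epsilon)}$ (\cref{lem:closure}), and classifies $Z\setminus X_{\cl}$ by $3r$-projection profiles on $X_{\cl}$, of which there are only $\Oof(k^{1+\Oof(\epsilon)})$ many by \cref{lem:projection-complexity} --- this is where the new neighborhood-complexity theorem enters phase one, which your proposal omits. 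Quasi-wideness is then applied \emph{inside one class} with target size only $k^{\Oof(\epsilon)}$, so the exponent $p(2r)$ hits a quantity of size $k^{\Oof(\epsilon)}$ rather than $k$. The price is that a class of size $k^{\Oof(\epsilon)}\ll k$ cannot exhaust a budget-$k$ dominator, so the simple pigeonhole is replaced by an exchange argument (swap the $|R|-1$ private dominators for $M_{3r}(z,X_{\cl})\cup S$), and that argument needs the core to be defined via \emph{minimum-size} $(Z,r)$-dominators, not via all dominators of size at most $k$ as in your definition --- with your quantification there is no minimality to contradict, and indeed vertices that are ``redundant'' in your strong sense need not exist at the required density. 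So to repair the proof you would have to both change the core definition and add the approximate-dominator/closure/projection-profile machinery of \cref{lem:findredundantvertex}.
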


Just as in Drange et al.~\cite{drange2016kernelization}, the obtained triple $(G',Z,k)$ is formally not an instance of {\sc{Distance-$r$ Dominating Set}},
but of an annotated variant of this problem where some vertices (precisely $V(G')\setminus Z$) are not required to be dominated.
This is an annoying formal detail, however it can be addressed as follows.
As observed by Drange et al., such annotated instance can be reduced back to the standard problem in the following way:
add two fresh vertices $w,w'$, connect them by a path of length $r$, and connect $w$ with each vertex of $V(G')\setminus Z$ using a path of length $r$.
Then the obtained graph $G''$ has a dominating set of size at most $k+1$ if and only if $G'$ admits a set of at most $k$ vertices that $r$-dominates $Z$.
Drange et al.\ argued that this modification increases edge densities of $d$-shallow minors of the graph by at most $1$, for each $d\in \N$.
It is straightforward to see that the same modification in the nowhere dense setting may increase the size of the largest clique found as a depth-$d$ minor by at most $1$, for each $d\in \N$.
Thus, the graph $G''$ obtained in the reduction still belongs to a nowhere dense class $\CCC'$, where the sizes of excluded shallow clique minors, at all depths, are one larger than for $\CCC$.

Our proof of~\cref{thm:main-kernel} revisits the line of reasoning of Drange et al.~\cite{drange2016kernelization} for bounded expansion classes, and improves it in several places where the arguments
could not be immediately lifted to the nowhere dense setting.
The key ingredient is, of course, the usage of the newly proven almost linear neighborhood complexity for nowhere dense classes (\cref{thm:main-neighborhood-complexity}), however, this was not the only piece missing.
Another issue was that the algorithm of Drange et al.\@ starts by iteratively applying the constant-factor approximation algorithm for {\sc{Distance-$r$ Dominating Set}} of Dvo\v{r}\'ak~\cite{dvovrak2013constant} in order
to get very strong structural properties of the instance; in essence, they exploit also a large $2r$-independent set that Dvo\v{r}\'ak's algorithm provides in case of failure.
For the argument to work, it is crucial that this iteration finishes after a constant number of rounds, as each round introduces a constant multiplicative blow-up of the kernel size. However,
the argument for this is crucially based on the class having bounded expansion, and seems hard to lift to the nowhere dense setting.
We circumvent this problem by removing the whole iterative scheme of Drange et al.~\cite{drange2016kernelization} whatsoever, and replacing it with a simple $\Oof(\log \mathsf{OPT})$-approximation
following from the fact that the class we are working with has bounded VC-dimension. As a result, we do not obtain the large $2r$-independent set needed by Drange 
et al.\ for further stages, 
but we are able to extract a sufficient
substitute for it using the new polynomial bounds for uniform quasi-wideness~\cite{siebertz2016polynomial}. Thus, the whole algorithm is actually conceptually simpler than that of Drange et al., but this
comes at the cost of using deeper black-boxes relying on stability theory.

\paragraph*{Organization.} In~\cref{sec:prelim} we give the
necessary background about nowhere dense classes of graphs, and we introduce key definitions. 
\cref{sec:neighborhood-complexity} is devoted to the proof
of~\cref{thm:main-neighborhood-complexity}. In~\cref{sec:kernel} we
prove~\cref{thm:main-kernel}. \cref{sec:conclusion} contains concluding remarks.

\section{Preliminaries}\label{sec:prelim}

\paragraph*{Notation.}
By $\N$ we denote the set of non-negative integers.
For a set $A$, by $2^A$ we denote the family of all subsets of $A$.
For an equivalence relation $\equiv$, by $\indx(\equiv)$ we denote the number of equivalence classes of $\equiv$.

We use standard graph notation; see
e.g.~\cite{diestel2012graph} for reference. 
All graphs considered in
this paper are finite, simple, and undirected.  For a graph $G$, by
$V(G)$ and $E(G)$ we denote the vertex and edge sets of $G$,
respectively.  A graph~$H$ is a \emph{subgraph} of~$G$, denoted
$H\subseteq G$, if $V(H)\subseteq V(G)$ and $E(H)\subseteq E(G)$. 
A class of graphs $\CCC$ is \emph{monotone} if for every $G\in \CCC$ and every subgraph $H\subseteq G$, we also have $H\in \CCC$.
For
any $M\subseteq V(G)$, by $G[M]$ we denote the subgraph induced 
by~$M$.  We write $G-M$ for the graph $G[V(G)\setminus M]$ and if
$M=\{v\}$, we write $G-v$ instead. 

For a non-negative integer
$\ell$, a \emph{path of length $\ell$} in~$G$ is a sequence
$P=(v_1,\ldots, v_{\ell+1})$ of pairwise distinct vertices such that
$\{v_i,v_{i+1}\}\in E(G)$ for all $1\leq i\leq \ell$. We write $V(P)$ for
the vertex set $\{v_1,\ldots, v_{\ell+1}\}$ of~$P$ and $E(P)$ for the
edge set $\{\{v_i,v_{i+1}\} : 1\leq i\leq \ell\}$ of~$P$ and identify~$P$
with the subgraph of $G$ with vertex set $V(P)$ and edge set
$E(P)$. We say that the path~$P$ \emph{connects} its \emph{endpoints}
$v_1,v_{\ell+1}$, whereas $v_2,\ldots, v_\ell$ are the \emph{internal
  vertices} of $P$. 
Two vertices $u,v\in V(G)$ are \emph{connected} if there is a
path in $G$ with endpoints $u,v$.  The {\em{distance}} $\dist(u,v)$
between two connected vertices $u,v$ is the minimum length of a path
connecting $u$ and $v$; if $u,v$ are not connected, we put
$\dist(u,v)=\infty$. The \emph{radius} of~$G$ is
$\min_{u\in V(G)}\max_{v\in V(G)}\dist(u,v)$. 
A graph $G$ is $c$-\emph{degenerate} if every subgraph
$H\subseteq G$ has a vertex of degree at most $c$. For a positive
integer~$t$, we write $K_t$ for the complete graph on $t$ vertices.

For a vertex $v$ in graph $G$, the set of all
neighbors of a vertex $v$ in $G$ is denoted by $N^G(v)$, and the set
of all vertices at distance at most $r$ from $v$ (including $v$) is denoted 
by $N^G_r[v]$. We may omit the superscript if the graph is
clear from the context. 
The \emph{$r$-neighborhood complexity} of a subset $A\subseteq V(G)$ in
$G$ is defined as \[\nu_r(G, A)=|\{N_r[v]\cap A \colon v\in V(G)\}|.\]

\paragraph*{Nowhere dense classes of graphs.}
Let $G$ be a graph and let $r\in \N$. A graph $H$ with vertex set
$\{v_1,\ldots, v_n\}$ is a \emph{depth-$r$ minor} of~$G$, written
$H\minor_r G$, if there are connected and pairwise vertex disjoint
subgraphs $H_1,\ldots, H_n\subseteq G$, each of radius at most $r$,
such that if $\{v_i,v_j\}\in E(H)$, then there are $w_i\in V(H_i)$ and
$w_j\in V(H_j)$ such that $\{w_i,w_j\}\in E(G)$.


\begin{definition}
  A class $\CCC$ of graphs is \emph{nowhere dense} if there exists a
  function $f\colon \N\rightarrow \N$ such that $K_{f(r)}\not\minor_r G$ for
  all $r\in\N$ and for all $G\in \CCC$.
\end{definition}

Nowhere denseness admits several equivalent definitions, see~\cite{sparsity} for a wider discussion.
We next recall those that will be used in this paper.

\paragraph*{Edge density of shallow minors.}
For a graph $H$, the {\em{edge density}} of $H$ is the ratio $\frac{|E(H)|}{|V(H)|}$.
The {\em{grad}}, or {\em{greatest reduced average density}}, of a graph $G$ at depth $r\in \N$, is defined as:
\[\grad_r(G)=\max_{H\minor_r G}\frac{|E(H)|}{|V(H)|}.\]
For a class of graphs $\CCC$, let $\grad_r(\CCC)=\sup_{G\in \CCC} \grad_r(G)$.
Classes of bounded expansion are exactly those, for which $\grad_r(\CCC)$ is finite for each $r$.
In nowhere dense classes the edge density of shallow minors is not necessarily bounded by a constant, but it is small, as explained in the next result.

\begin{theorem}[\cite{dvorak2007asymptotical, nevsetvril2011nowhere}]\label{lem:gradbound}
Let $\CCC$ be a nowhere dense class of graphs. There is a function $\fgrad(r,\epsilon)$ such that for every $r\in \N$, $\epsilon>0$, graph $G\in \CCC$, and its $r$-shallow minor $H\minor_r G$,
it holds that $\frac{|E(H)|}{|V(H)|}\leq \fgrad(r,\epsilon) \cdot |V(H)|^\epsilon$.
\end{theorem}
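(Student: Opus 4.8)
The plan is to prove the statement in contrapositive form. Fix $r\in\N$ and $\epsilon>0$, which we may assume satisfies $\epsilon<1$ (for $\epsilon\ge 1$ the bound is trivial since $|E(H)|/|V(H)|<|V(H)|$, and the bound for small $\epsilon$ implies it for larger $\epsilon$). Set
\[
\fgrad(r,\epsilon)\;:=\;\sup\Bigl\{\tfrac{|E(H)|}{|V(H)|^{\,1+\epsilon}}\;:\;G\in\CCC,\ H\minor_r G\Bigr\},
\]
so that it suffices to show $\fgrad(r,\epsilon)<\infty$; then $|E(H)|/|V(H)|=|V(H)|^{\epsilon}\cdot|E(H)|/|V(H)|^{1+\epsilon}\le\fgrad(r,\epsilon)\cdot|V(H)|^{\epsilon}$ as desired. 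Suppose towards a contradiction that this supremum is infinite: then there are $G_i\in\CCC$ and depth-$r$ minors $H_i\minor_r G_i$ with $|E(H_i)|/|V(H_i)|^{1+\epsilon}\to\infty$. Writing $m_i:=|V(H_i)|$ and using $|E(H_i)|<m_i^2$, we get $m_i\to\infty$, and $|E(H_i)|\ge m_i^{1+\epsilon}$ for all large $i$.

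The crux is the following purely combinatorial fact, which is where essentially all the work of the cited references lies: \emph{for every $\epsilon>0$ there are $s=s(\epsilon)\in\N$ and a function $t(\cdot)$ with $t(m)\to\infty$ such that every graph $H$ with $|E(H)|\ge|V(H)|^{1+\epsilon}$ has $K_{t(|V(H)|)}$ as a depth-$s$ minor.} I would prove this as follows. First pass to a subgraph of minimum degree at least $|V(H)|^{\epsilon}$ (iteratively deleting vertices of smaller degree removes fewer than $|V(H)|^{1+\epsilon}$ edges in total, so cannot exhaust $H$). Then, as long as the current graph is not already visibly clique-rich, repeatedly \emph{fold} it by contracting a family of pairwise vertex-disjoint connected subgraphs of bounded radius, chosen so that the edge-to-vertex ratio strictly increases while the vertex set shrinks. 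The point of having a \emph{polynomial} — not merely unbounded — lower bound on the density is that after a number of rounds depending only on $\epsilon$ one reaches a graph which is dense enough relative to its (now much smaller) vertex set to contain a complete minor of size tending to infinity by a standard density-to-clique-minor bound, the resulting complete minor being shallow because the contracted pieces are small. Each round adds only a bounded amount to the depth, and composing a bounded number of bounded-depth minor operations keeps the depth bounded by a function of $\epsilon$ alone, using the folklore fact that a depth-$a$ minor of a depth-$b$ minor is a depth-$(2ab+a+b)$ minor.

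Granting the key fact, the proof finishes quickly. Applying it to each $H_i$ gives $K_{t(m_i)}\minor_{s(\epsilon)}H_i$, and since $H_i\minor_r G_i$ the composition bound yields $K_{t(m_i)}\minor_{r^\ast}G_i$ with $r^\ast=r^\ast(r,\epsilon):=2s(\epsilon)r+s(\epsilon)+r$ independent of $i$. Because $t(m_i)\to\infty$, for every $t\in\N$ there is some $G\in\CCC$ with $K_t\minor_{r^\ast}G$; in particular, given any candidate witness function $f\colon\N\to\N$ for nowhere denseness, taking $t=f(r^\ast)$ produces $G\in\CCC$ with $K_{f(r^\ast)}\minor_{r^\ast}G$, contradicting Definition~\ref{def:nwd}. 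Hence $\fgrad(r,\epsilon)<\infty$ for all $r,\epsilon$, which proves the theorem.

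The main obstacle is, unsurprisingly, the key combinatorial fact: converting a merely superlinear edge count into a complete minor whose \emph{depth does not grow} with $|V(H)|$. The two delicate points are (i) proving that contracting disjoint bounded-radius connected subgraphs genuinely boosts the edge-to-vertex ratio rather than only preserving it — this is where one must control parallel edges and short cycles, e.g.\ by contracting short paths or by first passing to a subgraph with favourable girth-type structure — and (ii) guaranteeing that the number of folding rounds is bounded in terms of $\epsilon$ only, which is precisely the feature that distinguishes polynomial from super-polynomial density growth, i.e.\ somewhere dense from nowhere dense classes. Everything else — the contrapositive reduction, the degree-pruning, the depth bookkeeping, and the final extraction of $K_t$ in the folded graph — is routine.
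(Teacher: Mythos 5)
The paper never proves this statement internally --- it is imported from the cited works \cite{dvorak2007asymptotical,nevsetvril2011nowhere} --- so your argument has to stand on its own, and its outer scaffolding does: the contrapositive set-up, the observation that an unbounded supremum forces graphs $H_i$ with $|E(H_i)|\geq |V(H_i)|^{1+\epsilon}$ and $|V(H_i)|\to\infty$, the min-degree pruning, the depth bookkeeping for composing shallow minors, and the final contradiction with \cref{def:nwd} are all correct and routine.

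The genuine gap is exactly where you place it: your ``key combinatorial fact'' (every graph on $m$ vertices with at least $m^{1+\epsilon}$ edges contains $K_{t(m)}$, with $t(m)\to\infty$, as a depth-$s(\epsilon)$ minor, where $s$ depends on $\epsilon$ only) is not an auxiliary lemma you may take for granted --- it \emph{is} the content of the theorem, and it is precisely what the cited references establish, via Dvo\v{r}\'ak's argument and Kostochka--Pyber/Jiang-type extraction of shallow clique subdivisions from graphs of polynomially superlinear density. (The fact is true, but not off the shelf: Kostochka--Pyber gives subdivisions whose depth still grows with the clique size; getting depth depending on $\epsilon$ alone is the later, harder refinement.) Your folding scheme does not yet prove it: contracting disjoint bounded-radius connected subgraphs need not increase the simple-graph edge-to-vertex ratio, since many boundary edges of a contracted piece can land on the same vertex or the same other piece; the known proofs handle this by a dichotomy (either many vertices share large chunks of neighborhood, in which case one extracts a small dense subgraph and finds a shallow clique minor directly, or the graph is locally expanding and ball contraction provably boosts the degree), and carrying out the second branch while keeping the number of rounds a function of $\epsilon$ alone is the technical heart of the matter. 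Since you explicitly leave both decisive steps --- (i) the density increase under contraction and (ii) the $\epsilon$-only bound on the number of rounds --- as open ``delicate points,'' what you have is a correct reduction of the theorem to its known hard core together with a heuristic sketch of that core, not a proof; to close it you would either have to reproduce Dvo\v{r}\'ak's or Jiang's argument or simply cite it, which is what the paper does.
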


\paragraph*{Weak coloring mumbers.}
For a graph $G$, by $\Pi(G)$ we denote the set of all linear orders
of $V(G)$. For $L\in\Pi(G)$, we write $u<_L v$ if $u$ is smaller than
$v$ in $L$, and $u\le_L v$ if $u<_L v$ or $u=v$.  For $u,v\in V(G)$ and
any $r\geq 0$, we say that~$u$ is \emph{weakly $r$-reachable} from~$v$
with respect to~$L$, if there is a path $P$ of length at most $r$ connecting $u$ and $v$ such that $u$ is 
the smallest among
the vertices of $P$ with respect to~$L$. By $\WReach_r[G,L,v]$ we
denote the set of vertices that are weakly $r$-reachable from~$v$ with
respect to~$L$. For any subset $A\subseteq V(G)$, we let
$\WReach_r[G,L,A] = \bigcup_{v\in A} \WReach_r[G,L,v]$.  The
\emph{weak $r$-coloring number $\wcol_r(G)$} of $G$ is defined as follows:
\begin{eqnarray*}
  \wcol_r(G)& = & \min_{L\in\Pi(G)}\:\max_{v\in V(G)}\:
                   \bigl|\WReach_r[G,L,v]\bigr|.\\
\end{eqnarray*}
The weak coloring numbers were introduced by Kierstead and
Yang~\cite{kierstead2003orders} in the context of coloring and
marking games on graphs. As proved by Zhu \cite{zhu2009coloring},
they can be used to characterize both bounded expansion and nowhere
dense classes of graphs. In particular, we use the following.

\begin{theorem}[\cite{zhu2009coloring}]\label{lem:wcolbound}
  Let $\CCC$ be a nowhere dense class of graphs.
  There is a function $f_{\wcol}(r,\epsilon)$ such that
  $\wcol_r(G')\leq f_{\wcol}(r,\epsilon) \cdot |V(G')|^\epsilon$ for
  every $r\in\N$, $\epsilon>0$ and $G'\subseteq G\in \CCC$.
\end{theorem}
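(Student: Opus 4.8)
The plan is to route through the notion of \emph{$r$-admissibility} $\adm_r(G)$, which is combinatorially easier to control than the weak coloring number, and then to bound admissibility by the edge densities of shallow minors, which are already tamed in the nowhere dense setting by \cref{lem:gradbound}. Recall that $\adm_r(G)$ is the minimum over linear orders $L$ of $V(G)$ of the maximum, over vertices $v$, of the largest number of paths of length between $1$ and $r$ that start at $v$, end at a vertex smaller than $v$ in $L$, have all internal vertices larger than $v$ in $L$, and are pairwise disjoint apart from sharing $v$. The key fact I would quote --- standard in this area (Kierstead and Yang~\cite{kierstead2003orders}, Zhu~\cite{zhu2009coloring}, Dvo\v{r}\'ak~\cite{dvovrak2013constant}, see also~\cite{sparsity}) --- is that $\wcol_r(G)$ is bounded by a polynomial in $\adm_r(G)$ whose degree depends only on $r$, say $\wcol_r(G)\le\adm_r(G)\cdot(\adm_r(G)-1)^{r-1}\le(\adm_r(G))^{r}$. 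It therefore suffices to prove an almost linear bound on $\adm_r(G')$: if $\adm_r(G')\le g(r,\delta)\cdot|V(G')|^{\delta}$ for every $\delta>0$, then setting $\delta=\epsilon/r$ and raising to the $r$-th power gives $\wcol_r(G')\le g(r,\epsilon/r)^{r}\cdot|V(G')|^{\epsilon}$, so that we may take $\fwcol(r,\epsilon)=g(r,\epsilon/r)^{r}$.

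To bound $\adm_r(G')$ I would invoke the remaining classical ingredient: large $r$-admissibility forces a dense shallow minor. Concretely, if $\adm_r(G')$ is large, then --- fixing a breadth-first layering of $G'$ and running a Menger/fan-type argument --- one can build branch sets of radius $O(r)$ witnessing a depth-$r'$ minor of $G'$ whose edge density is comparable to $\adm_r(G')$, where $r'$ depends only on $r$; the many short, internally disjoint back-paths guaranteed by high admissibility at sufficiently many vertices are what get contracted into the edges of this minor. This gives $\adm_r(G')\le p_r\bigl(\grad_{r'}(G')\bigr)$ for a polynomial $p_r$ and an integer $r'$ depending only on $r$, and this is exactly the step at which Zhu's characterization is obtained. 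Now, since $G'\subseteq G\in\CCC$, any $r'$-shallow minor of $G'$ is also an $r'$-shallow minor of $G$ on at most $|V(G')|$ vertices, so \cref{lem:gradbound} gives $\grad_{r'}(G')\le\fgrad(r',\delta)\cdot|V(G')|^{\delta}$ for every $\delta>0$. Substituting this, choosing $\delta$ small relative to the degree of $p_r$, and absorbing the resulting constants (using $|V(G')|\ge 1$), we obtain $\adm_r(G')\le g(r,\delta)\cdot|V(G')|^{\delta}$ for every $\delta>0$, which is the bound required in the previous paragraph and completes the proof.

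The step I expect to be the real work is the admissibility-to-density direction: extracting from a graph of large $r$-admissibility a \emph{shallow} minor --- one of bounded radius, depth $O(r)$ --- whose edge density stays comparable to the admissibility. This is the combinatorial heart of Zhu's theorem, and the fan/Menger construction has to be organized so that contracting the back-paths neither inflates the radius beyond $O(r)$ nor costs a super-constant factor in density; the auxiliary inequality $\wcol_r\le(\adm_r)^{r}$ is itself a nontrivial counting argument, but a well-documented one. By contrast, the passage to the nowhere dense conclusion is pure bookkeeping: one only needs to choose the auxiliary exponents --- raise to the $r$-th power when relating $\wcol_r$ to $\adm_r$, and divide by $\deg p_r$ when relating $\adm_r$ to $\grad$ --- so that the polynomial blow-ups compose to exactly $|V(G')|^{\epsilon}$, and since every exponent in sight depends only on $r$ this is harmless. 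In essence the argument is Zhu's~\cite{zhu2009coloring}, from which the theorem is merely quoted here.
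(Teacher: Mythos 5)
The paper does not prove this statement at all: it is imported wholesale from Zhu~\cite{zhu2009coloring}, with the almost-linear form $\fwcol(r,\epsilon)\cdot|V(G')|^{\epsilon}$ being the standard consequence of Zhu's polynomial relation between $\wcol_r$ and the densities of shallow minors, combined with \cref{lem:gradbound}. Your sketch reconstructs a correct proof of exactly this kind, and the bookkeeping at the end (all auxiliary exponents depend only on $r$, so rescaling $\epsilon$ through the two polynomial compositions works, and the subgraph issue is handled because any $r'$-shallow minor of $G'\subseteq G$ is an $r'$-shallow minor of $G$ on at most $|V(G')|$ vertices) is sound. The route you take is the admissibility-based one in the style of Dvo\v{r}\'ak~\cite{dvovrak2013constant} and Grohe et al.~\cite{GroheKRSS15}: bound $\wcol_r$ by a power of $\adm_r$ with exponent depending only on $r$, then bound $\adm_r$ polynomially by $\grad_{r'}$ for some $r'=O(r)$. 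Zhu's original argument is slightly different in flavor --- it bounds $\wcol_r$ directly in terms of the grads $\grad_i(G)$ for $i$ up to roughly $2^{r}$ (or $r/2$, depending on the formulation) via an inductive construction of the order, without passing through admissibility --- but both yield ``$\wcol_r$ is at most a polynomial, of degree depending only on $r$, in a shallow-minor density at depth depending only on $r$,'' which is all your final paragraph needs. Two small caveats: the specific inequality $\wcol_r(G)\le\adm_r(G)\cdot(\adm_r(G)-1)^{r-1}$ you quote is, in the literature, the bound for the strong coloring number $\mathrm{col}_r$ rather than for $\wcol_r$ (and degenerates when $\adm_r=1$); the documented bound for $\wcol_r$ is of the form $\adm_r(G)^{r}$, which is the only thing you actually use. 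Also, both of your main ingredients are quoted rather than proved (the fan/Menger extraction of a dense shallow minor from high admissibility is genuinely nontrivial), but since the paper itself quotes the whole theorem, this level of reliance on the literature is consistent with its practice.
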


The generalized coloring numbers play a key
role for Dvo\v{r}\'ak's approximation algorithm for $r$-dominating
sets~\cite{dvovrak2013constant} and also in the result on
neighborhood complexity by~\cite{reidl2016characterising}.  New
bounds for these numbers on restricted graph classes can be found
in~\cite{GroheKRSS15,KreutzerPRS16,siebertz16}.

\paragraph*{Quasi-wideness.}
A set $B\subseteq V(G)$ is called {\em{$r$-independent}} in $G$ if for all
distinct $u,v\in B$ we have $\dist_G(u,v)>r$.

\begin{definition}
  A class $\CCC$ of graphs is \emph{uniformly quasi-wide} if there are
  functions $N\colon \N\times\N\rightarrow \N$ and $s:\N\rightarrow \N$ such
  that for all $r,m\in \N$ and all subsets $A\subseteq V(G)$ for
  $G\in \CCC$ of size $\abs{A}\geq N(r,m)$ there is a set
  $S\subseteq V(G)$ of size $\abs{S}\leq s(r)$ and a set
  $B\subseteq A\setminus S$ of size $\abs{B}\geq m$ which is $r$-independent in
  $G-S$.  The functions $N$ and $s$ are called the \emph{margins} of
  the class~$\CCC$.
\end{definition}

It was shown by Ne\v{s}e\v{r}il and Ossona de
Mendez~\cite{nevsetvril2011nowhere} that a monotone class $\CCC$ of graphs is
nowhere dense if and only if it is uniformly quasi-wide. In particular,
every nowhere dense class is uniformly quasi-wide, even without the assumption of monotonicity.

For us it will be important that the margins $N$ and $s$ can be assumed to be
polynomial in~$r$ and that the sets $B$ and $S$ can be efficiently
computed. This was proved only recently by Kreutzer et
al.~\cite{siebertz2016polynomial}.  

\begin{theorem}[\cite{siebertz2016polynomial}]\label{thm:uqw}
  Let $\CCC$ be a nowhere dense class of graphs. For every $r\in \N$
  there exist constants~$p(r)$ and $s(r)$ such that
  for all $m\in \N$, all $G\in\CCC$ and all sets $A\subseteq V(G)$ of size at 
  least~$m^{p(r)}$, there is a set $S\subseteq V(G)$ of size at
  most $s(r)$ such that there is a set $B\subseteq A\setminus S$ of size at
  least~$m$ which is $r$-independent in $G-S$.

  Furthermore, if $K_c\not\minor_rG$ for all $G\in \CCC$, then
  $s(r)\leq c\cdot r$ and there is an algorithm, that given an $n$-vertex graph
  $G\in \CCC$, $r\in \N$, and $A\subseteq V(G)$ of size at least
  $m^{p(r)}$, computes a set $S$ of size at most $s(r)$ and an
  $r$-independent set $B\subseteq A\setminus S$ in $G-S$ of size at least $m$ in
  time $\Oof(r\cdot c\cdot |A|^{c+6}\cdot n^2)$, where $n=|V(G)|$.
\end{theorem}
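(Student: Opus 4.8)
The plan is to prove the statement by induction on $r$, where each step is a single ``round'' that trades radius $r$ for radius $r-1$ at the price of deleting at most $c$ more vertices and of raising the size threshold by a fixed polynomial factor; since there are at most $r$ rounds this gives $s(r)\le c\cdot r$ and a threshold that is polynomial in $m$ for every fixed $r$. The base case $r\le 1$ is immediate: for $r=0$ take $S=\emptyset$ and $B=A$; for $r=1$ note that $K_c\not\minor_1 G$ implies $K_c\not\minor_0 G$, so \cref{lem:gradbound} at depth $0$ shows that every subgraph of $G[A]$ has a vertex of degree $\Oof(|A|^\epsilon)$, i.e.\ $G[A]$ is $\Oof(|A|^\epsilon)$-degenerate; greedily one then extracts an independent set $B\subseteq A$ with $|B|\ge|A|^{1-2\epsilon}$ and takes $S=\emptyset$.

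For the inductive step, fix $r\ge 2$ and assume the claim for $r-1$. Given $A$ with $|A|\ge m^{p(r)}$ for $p(r)$ large enough, first invoke the induction hypothesis to get $S_1$ with $|S_1|\le c(r-1)$ and a still-large $A_1\subseteq A\setminus S_1$ that is $(r-1)$-independent in $G_1:=G-S_1$; equivalently, the vertices of $A_1$ are pairwise at distance at least $r$ in $G_1$. It now suffices to find $S_2\subseteq V(G_1)$ with $|S_2|\le c$ and a large $B\subseteq A_1\setminus S_2$ whose vertices are pairwise at distance more than $r$ in $G_1-S_2$; equivalently, $B$ must contain no \emph{close pair}, meaning a pair $\{u,v\}\subseteq A_1$ with $\dist_{G_1}(u,v)=r$ (no smaller distance can occur). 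Fix for each close pair a connecting path of length $r$; by $(r-1)$-independence its interior avoids $A_1$. The heart of the argument is that if the close pairs cannot be ``disarmed'' by deleting at most $c$ vertices, then one can select a subfamily of close pairs whose connecting paths are internally vertex-disjoint, which realizes $K_c$ as a topological minor in $G_1$ via paths of length at most $r$, hence as a depth-$\lceil r/2\rceil$ minor, and so a depth-at-most-$r$ minor of $G$ — contradicting $K_c\not\minor_r G$. Thus such an $S_2$ exists; after its removal the surviving close pairs are sparse (the close-pair graph on $A_1\setminus S_2$ is $\Oof(|A_1|^\epsilon)$-degenerate, for otherwise a rich clique-minor structure reappears), so greedily one obtains a close-pair-free, hence $r$-independent, set $B\subseteq A_1\setminus S_2$ in $G-(S_1\cup S_2)$ of size $\ge|A_1|^{1-2\epsilon}$. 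Setting $S=S_1\cup S_2$ gives $|S|\le cr$, and unwinding the size bounds shows that $|A|\ge m^{p(r)}$ suffices for $|B|\ge m$ with $p(r)$ constant.

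The real obstacle — and the reason that extracting quasi-wideness from the splitter game of \cite{grohe2014deciding} gives only a tower-type threshold $N(r,m)$ — is keeping the shrinkage of $A$ in each round polynomial: naively bounding the number of close pairs ``routed through'' a single vertex loses a factor exponential in $|A_1|$. The fix is that the set system governing this count, the family of ``$r$-projection profiles'' of vertices of $G_1$ onto $A_1$ (recording, for a vertex $w$, the trace on $A_1$ of the vertices reachable from $w$ by short paths internally avoiding $A_1$), has VC-dimension bounded by a constant depending only on $\CCC$: nowhere dense classes are stable, so any graph obtained from a member of $\CCC$ by a first-order interpretation has bounded VC-dimension~\cite{adler2014interpreting}, and the Sauer--Shelah lemma then bounds the number of distinct profiles by a polynomial in $|A_1|$ — exactly what makes the $r$-round recursion cost only a polynomial factor, so that $N(r,m)$ is polynomial in $m$ for fixed $r$. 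This is where the stability-theoretic input is indispensable. Everything above is constructive — distances and shortest paths via BFS, independent sets in almost-degenerate graphs greedily, and $S_2$ by enumerating $c$-element subsets of $A$ and testing each by BFS together with an attempted extraction of a shallow clique minor — and bookkeeping over the at most $c\cdot r$ deletions yields the stated running time $\Oof(r\cdot c\cdot|A|^{c+6}\cdot n^2)$. (Alternatively, the whole induction can be phrased via weak reachability, using \cref{lem:wcolbound} and the fact that two vertices within distance $r$ always share a vertex weakly $r$-reachable from both, which handles the separators $S$ directly.)
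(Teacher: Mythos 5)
First, a point of reference: the paper does not prove \cref{thm:uqw} at all --- it is imported as a black box from \cite{siebertz2016polynomial}, and the only original content in the surrounding text is the remark that the running time may be stated as $\Oof(r\cdot c\cdot |A|^{c+6}\cdot n^2)$ because each of the $r$ rounds evaluates $c$ quantifier-rank-one formulas with $c$ free variables ranging over $A$. So your attempt is a reconstruction of the cited result, and judged on its own it has a genuine gap at its heart.

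The gap is the dichotomy ``either the close pairs can be disarmed by deleting at most $c$ vertices, or a subfamily of close pairs with internally vertex-disjoint connecting paths realizes $K_c$ as a topological minor at depth about $r/2$.'' This implication is not justified, and as stated it is false: if the close pairs form a large induced matching realized by pairwise disjoint length-$r$ paths, then no $c$ deletions disarm them, yet no clique minor appears (one simply keeps one endpoint of each pair). Internally disjoint paths between close pairs realize the \emph{close-pair graph} (or a subgraph of it) as a shallow topological minor, not $K_c$; to reach a contradiction with $K_c\not\minor_r G$ you must show the close-pair graph is dense \emph{and} realize a dense piece of it by disjoint paths, which is obstructed exactly when many paths share interior ``hub'' vertices --- and bounding the number of hubs that must go into $S_2$ by $c$ (via an explicit depth-$r$ clique-minor construction from $c$ hubs with disjoint branch sets and edges between all pairs of them) is the crux that you assert rather than prove; the companion claim that the surviving close-pair graph is $\Oof(|A_1|^\epsilon)$-degenerate ``for otherwise a rich clique-minor structure reappears'' is the same unproven lemma in disguise. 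Moreover, the stated role of the stability input --- bounding the number of close pairs routed through a single vertex, which is trivially at most $|A_1|^2$ --- is not where the exponential losses in the classical tower-type arguments arise; in \cite{siebertz2016polynomial} the Adler--Adler/Sauer--Shelah machinery is used to replace Ramsey-type uniformization steps, and without spelling out that mechanism the polynomial margin $m^{p(r)}$, the bound $s(r)\leq c\cdot r$, and the running time $\Oof(r\cdot c\cdot |A|^{c+6}\cdot n^2)$ all remain unsupported.
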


The running time of the algorithm of~\cref{thm:uqw}
 is stated in~\cite{siebertz2016polynomial} only as 
$\Oof(r\cdot c\cdot n^{c+6})$.
However, it is easy to see that the algorithm actually runs in
time as stated above. The algorithm computes~$r$ times the truth 
value of $c$ first-order formulas of the form $\phi(x_1,\ldots, x_c)$, with $c$ free variables ranging over the parameter set~$A$.
That is, it decides for each tuple $(a_1,\ldots, a_c)\in A^c$ whether the formula $\phi(a_1,\ldots, a_c)$
is true in the graph. 
Each considered formula has quantifier rank one, so by brute-force
we can check all evaluations of the quantifier and we have to access
the adjacency list of each inspected vertex, giving a multiplicative factor $n^2$ in the running time. A more
careful analysis is certainly possible; however, we refrain from performing it as it is not crucial for the goals of this study.

\paragraph*{VC-dimension.}
Let $\FFF\subseteq 2^A$ be a family of
subsets of a set $A$. For a set $X\subseteq A$, we denote $X\cap \FFF=\{X\cap F : F\in \FFF\}$.
The set $X$ is \emph{shattered by $\FFF$} if $X\cap \FFF=2^X$.
The \emph{Vapnik-Chervonenkis dimension}, short \emph{VC-dimension},
of $\FFF$ is the maximum size of a set $X$ that is shattered by
$\FFF$. Note that if $X$ is shattered by $\FFF$, then also every
subset of $X$ is shattered by~$\FFF$.

The following theorem was first proved by Vapnik and
Chervonenkis~\cite{chervonenkis1971theory}, and rediscovered by
Sauer~\cite{sauer1972density} and
Shelah~\cite{shelah1972combinatorial}. It is often called the
Sauer-Shelah lemma in the literature.
\begin{theorem}[Sauer-Shelah Lemma]\label{thm:sauer_shelah}
  If $|A|\leq n$ and $\FFF\subseteq 2^A$ has VC-dimension $d$,
  then
  \[|\FFF|\leq \sum_{i=0}^{d}\binom{n}{i}\in \Oof(n^d).\]
\end{theorem}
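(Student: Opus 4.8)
The plan is to prove the lemma by induction on $n$, in the sharper form $\abs{\FFF}\le\sum_{i=0}^{d}\binom{\abs{A}}{i}$; since this quantity is nondecreasing in $\abs{A}$, the statement as quoted follows and we may assume $\abs{A}=n$. The base case $n=0$ is immediate: then $\FFF\subseteq\{\emptyset\}$ and $\binom{0}{i}=0$ for $i\ge 1$, so $\abs{\FFF}\le 1=\binom{0}{0}$.

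For the inductive step, fix an arbitrary $a\in A$ and set $A'=A\setminus\{a\}$. I would split $\FFF$ into two families on the ground set $A'$: the \emph{projection} $\FFF_1=\{F\cap A' : F\in\FFF\}$ and the \emph{doubled part} $\FFF_2=\{F\subseteq A' : F\in\FFF\text{ and }F\cup\{a\}\in\FFF\}$. The first key point is the counting identity $\abs{\FFF}=\abs{\FFF_1}+\abs{\FFF_2}$: for each $S\subseteq A'$ the number of sets of $\FFF$ whose intersection with $A'$ equals $S$ is $1$ or $2$; such a set contributes exactly $1$ to $\abs{\FFF_1}$, and it contributes $1$ to $\abs{\FFF_2}$ precisely when that number equals $2$, i.e.\ when both $S$ and $S\cup\{a\}$ lie in $\FFF$.

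The second key point is a pair of VC-dimension bounds. If $X\subseteq A'$ is shattered by $\FFF_1$, then $X$ is already shattered by $\FFF$, since for $F\in\FFF$ the trace $F\cap A'$ meets $X$ the same way $F$ does (as $X\subseteq A'$); hence $\mathrm{VC}(\FFF_1)\le d$. If $X\subseteq A'$ is shattered by $\FFF_2$, then $X\cup\{a\}$ is shattered by $\FFF$: given $Y\subseteq X\cup\{a\}$, pick $G\in\FFF_2$ with $G\cap X=Y\setminus\{a\}$; by definition of $\FFF_2$ both $G$ and $G\cup\{a\}$ belong to $\FFF$, and one of them traces $Y$ on $X\cup\{a\}$ according to whether $a\in Y$. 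Thus $\abs{X}+1\le d$, i.e.\ $\mathrm{VC}(\FFF_2)\le d-1$ (with the degenerate reading, for $d=0$, that $\FFF_2=\emptyset$). Applying the induction hypothesis on $A'$, which has size $n-1$, to both families and adding yields
\[
\abs{\FFF}\le\sum_{i=0}^{d}\binom{n-1}{i}+\sum_{i=0}^{d-1}\binom{n-1}{i}
=\binom{n-1}{0}+\sum_{i=1}^{d}\left(\binom{n-1}{i}+\binom{n-1}{i-1}\right)
=\binom{n}{0}+\sum_{i=1}^{d}\binom{n}{i}=\sum_{i=0}^{d}\binom{n}{i},
\]
using Pascal's identity, which closes the induction. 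Finally, since $\binom{n}{i}\le n^i\le n^d$ for $0\le i\le d$ and $n\ge 1$, we get $\sum_{i=0}^{d}\binom{n}{i}\le (d+1)\,n^d=\Oof(n^d)$ for fixed $d$.

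I expect the only real subtlety to be the verification that $\mathrm{VC}(\FFF_2)\le d-1$ — keeping straight that shattering by the doubled family on $A'$ forces shattering of one-larger sets by $\FFF$ — together with a careful treatment of the degenerate cases ($\FFF=\emptyset$, $d=0$, small $n$), where one must still check that the binomial-sum bound holds. Everything else is routine bookkeeping with Pascal's rule. (A compression/shifting argument on $\FFF$ would give an alternative route, but the induction above seems the cleanest to write out.)
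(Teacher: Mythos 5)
Your proof is correct: the decomposition of $\FFF$ over a distinguished element $a$ into the projection $\FFF_1$ and the doubled part $\FFF_2$, the counting identity $|\FFF|=|\FFF_1|+|\FFF_2|$, the two VC-dimension bounds, and the Pascal-rule induction are all sound, and the degenerate cases are handled appropriately. Note, however, that the paper does not prove this statement at all — it imports the Sauer--Shelah Lemma as a classical result with citations to Vapnik--Chervonenkis, Sauer, and Shelah — so there is no internal argument to compare against; what you have written is the standard textbook induction, and it is a perfectly valid self-contained derivation of the cited bound, including the easy observation $\sum_{i=0}^{d}\binom{n}{i}\leq (d+1)\,n^{d}=\Oof(n^{d})$ for fixed $d$.
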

Note that in the interesting cases $d\geq 2, n\geq 2$ it holds that
$\sum_{i=0}^{d}\binom{n}{i}\leq n^d$, and in general it holds that $\sum_{i=0}^{d}\binom{n}{i}\leq 2\cdot n^d$.

For a graph $G$, the VC-dimension of $G$ is defined as the VC-dimension
of the set family $\{N[v]\colon v\in V(G)\}$ over the set $V(G)$.

\paragraph*{VC-dimension and nowhere denseness.} Adler and
Adler~\cite{adler2014interpreting} have proved that any nowhere dense class $\CCC$
of graphs is {\em{stable}}, which in particular implies that any class of structures
obtained from $\CCC$ by means of a first-order interpretation has VC-dimension
bounded by a constant depending only on $\CCC$ and the interpretation.
In particular, the following is an immediate corollary of the results of Adler and Adler~\cite{adler2014interpreting}.

\begin{theorem}[\cite{adler2014interpreting}]\label{thm:adler}
  Let $\CCC$ be a nowhere dense class of graphs and let $\phi(x,y)$ be
  a first-order formula over the signature of graphs,
  such that for all $G \in \CCC$ and $u,v\in
  V(G)$ it holds that $G\models\phi(u,v)$ if and only if $G\models\phi(v,u)$. 
  For $G\in \CCC$, let $G_\phi$ be the graph with
  vertex set $V(G_\phi)=V(G)$ and edge set $E(G_\phi)=\{\{u,v\} \colon
  G\models\phi(u,v)\}$. Then there is an integer $c$ depending only on
  $\CCC$ and $\phi$ such that $G_\phi$ has VC-dimension at most $c$.
\end{theorem}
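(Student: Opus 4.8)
The plan is to read the bound off directly from the stability of nowhere dense classes recalled just above. Recall that stability of $\CCC$ means: for every first-order formula $\psi(\bar x;\bar y)$ there is an integer $\ell=\ell(\CCC,\psi)$ such that no $G\in\CCC$ admits an \emph{$\ell$-ladder} for $\psi$, that is, there are no tuples $\bar a_1,\dots,\bar a_\ell$ and $\bar b_1,\dots,\bar b_\ell$ over $V(G)$ with $G\models\psi(\bar a_i;\bar b_j)$ precisely when $i\le j$. (Monotonicity of $\CCC$, assumed in Adler and Adler's formulation, is not needed here: the class $\CCC^{\downarrow}$ of all subgraphs of members of $\CCC$ is again nowhere dense, hence stable, and a ladder bound for $\CCC^{\downarrow}$ is in particular one for $\CCC$.)

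First I would pass from $\phi$ to $\phi'(x,y):=\phi(x,y)\vee x=y$, which is again a symmetric first-order formula, and note that for every $G\in\CCC$ and $v\in V(G)$ one has $N_{G_\phi}[v]=\{u\in V(G):G\models\phi'(u,v)\}$ (here symmetry of $\phi$ is exactly what makes $G_\phi$ a well-defined undirected graph, so this identity is unambiguous). Hence the set family whose VC-dimension we must bound is $\FFF:=\{N_{G_\phi}[v]:v\in V(G)\}$. I would then take $\ell=\ell(\CCC,\phi')$ to be the ladder bound given by stability and claim that $c:=\ell-1$ works, that is, that $G_\phi$ has VC-dimension at most $\ell-1$ for every $G\in\CCC$.

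The one real step is to rule out a shattered set of size $\ell$. Suppose $X=\{u_1,\dots,u_\ell\}\subseteq V(G)$ is shattered by $\FFF$. For each $j\in\{1,\dots,\ell\}$, apply shattering to the initial segment $S_j:=\{u_1,\dots,u_j\}$ to obtain $b_j\in V(G)$ with $N_{G_\phi}[b_j]\cap X=S_j$; unfolding the definition of $\FFF$, this says precisely $G\models\phi'(u_i,b_j)\iff u_i\in S_j\iff i\le j$. Thus $(u_i)_{i=1}^{\ell}$ together with $(b_j)_{j=1}^{\ell}$ form an $\ell$-ladder for $\phi'$ in $G$, contradicting the choice of $\ell$. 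Since $\ell$ depends only on $\CCC$ and $\phi'$, and $\phi'$ is determined by $\phi$, the constant $c=\ell-1$ depends only on $\CCC$ and $\phi$, as required.

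I do not expect an internal obstacle: essentially all the difficulty is imported through the Adler--Adler stability theorem, whose proof goes through a Ramsey-type argument showing that a sufficiently long ladder (for the edge relation, after a reduction handling arbitrary formulas) produces a shallow clique minor of unbounded size, which is impossible in a nowhere dense class. Granting that, the argument above is merely the routine translation of ``$\FFF$ shatters a set of size $\ell$'' into ``$\phi'$ has an $\ell$-ladder''; the only points needing light care are the bookkeeping that the nested chain $S_1\subsetneq\cdots\subsetneq S_\ell$ paired with $u_1,\dots,u_\ell$ assembles into a ladder of the right length, and the already-noted observation that symmetry of $\phi$ is what legitimizes speaking of the undirected graph $G_\phi$.
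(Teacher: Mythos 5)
Your argument is correct and matches the paper's treatment: the paper does not prove this statement but cites it as an immediate corollary of Adler and Adler's theorem that nowhere dense classes are stable, which is exactly the black box you invoke. Your shattering-to-ladder translation (via $\phi'(x,y)=\phi(x,y)\vee x=y$, so that the sets $\{u : G\models\phi'(u,v)\}$ are precisely the closed neighborhoods in $G_\phi$) is the standard routine step being alluded to, and it is carried out correctly, including the uniform ladder bound over the class and the remark on monotonicity.
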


By applying \cref{thm:adler} to first-order formulas expressing the properties $\dist(u,v)\leq r$ and $\dist(u,v)=r$,
we immediately obtain the following.

\begin{corollary}\label{thm:adleradler}
  Let $\CCC$ be a nowhere dense class of graphs and let $r\in\N$. For
  $G\in \CCC$, let $G_{=r}$ be the
  graph with the same vertex set as $G$ and an edge
  $\{u,v\}\in E(G_{=r})$
  if and only if $\dist_G(u,v)=r$.
  Define the graph $G_{\leq r}$ in the same manner, but putting an edge $\{u,v\}$ 
  into $E(G_{\leq r})$ if and only if $\dist_G(u,v)\leq r$. 
  Then there is an integer $c(r)$ such that both $G_{=r}$ and $G_{\leq r}$ have VC-dimension at most~$c(r)$ for every $G\in \CCC$.
\end{corollary}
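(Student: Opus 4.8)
The statement follows by a direct application of \cref{thm:adler}, so the only real work is to exhibit suitable first-order defining formulas and to track the uniformity of the resulting bound. The plan is as follows.

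First I would recall the standard fact that for every fixed $r\in\N$ there is a first-order formula $\delta_{\le r}(x,y)$ over the signature of graphs (a single binary relation $E$, together with equality) such that for every graph $G$ and all $u,v\in V(G)$ we have $G\models\delta_{\le r}(u,v)$ if and only if $\dist_G(u,v)\le r$. Such a formula is built by induction on $r$: set $\delta_{\le 0}(x,y):=(x=y)$ and $\delta_{\le r+1}(x,y):=\exists z\,\bigl(\delta_{\le r}(x,z)\wedge(z=y\vee E(z,y))\bigr)$. From this one obtains a formula $\delta_{=r}(x,y)$ expressing $\dist_G(u,v)=r$, namely $\delta_{=0}(x,y):=(x=y)$ and, for $r\ge 1$, $\delta_{=r}(x,y):=\delta_{\le r}(x,y)\wedge\neg\,\delta_{\le r-1}(x,y)$. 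The crucial point is that both formulas depend only on $r$, not on $G$ or on $\CCC$.

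Next I would observe that both $\delta_{\le r}$ and $\delta_{=r}$ are symmetric in the sense required by \cref{thm:adler}: since $\dist_G$ is a symmetric function, $G\models\delta_{\le r}(u,v)$ iff $G\models\delta_{\le r}(v,u)$ for every graph $G$ and all $u,v\in V(G)$, and likewise for $\delta_{=r}$. Hence \cref{thm:adler} applies with $\phi:=\delta_{=r}$: for $G\in\CCC$ the graph $G_\phi$ produced by the theorem is exactly the graph $G_{=r}$ from the statement, and the theorem yields an integer $c_{=r}$, depending only on $\CCC$ and on the formula $\delta_{=r}$ — hence only on $\CCC$ and $r$ — such that $G_{=r}$ has VC-dimension at most $c_{=r}$ for every $G\in\CCC$. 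Applying \cref{thm:adler} once more with $\phi:=\delta_{\le r}$ gives, in the same way, an integer $c_{\le r}$ bounding the VC-dimension of $G_{\le r}$ uniformly over $G\in\CCC$ (note that since $v$ is at distance $0\le r$ from itself, $N_{G_{\le r}}[v]$ is precisely the ball $\{u:\dist_G(u,v)\le r\}$, so this is the natural statement). Setting $c(r):=\max(c_{=r},c_{\le r})$ completes the argument.

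I do not anticipate a genuine obstacle: the entire content is carried by \cref{thm:adler} (which in turn rests on the stability of nowhere dense classes established by Adler and Adler~\cite{adler2014interpreting}), and the remaining task — writing down the bounded-distance formulas and verifying their symmetry — is routine. The one point deserving a word of care is that the bound provided by \cref{thm:adler} is uniform over all $G\in\CCC$ exactly because the defining formula is fixed once $r$ is fixed; this is precisely what makes $c(r)$ well-defined and independent of the particular graph.
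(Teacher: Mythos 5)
Your proposal is correct and matches the paper's argument: the paper likewise obtains the corollary by applying \cref{thm:adler} to first-order formulas expressing $\dist(u,v)\leq r$ and $\dist(u,v)=r$, which is exactly what you do, just with the routine inductive construction of the formulas and the symmetry check written out explicitly. Taking $c(r)$ as the maximum of the two constants is the natural way to package the uniform bound, and nothing further is needed.
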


By combining \cref{thm:adleradler} with the Sauer-Shelah Lemma (\cref{thm:sauer_shelah}) we infer the following.

\begin{corollary}\label{lem:numberofneighborhoods}
  Let $\CCC$ be a nowhere dense class of graphs and let $r\in \N$.  Then
  $\nu_r(G,A)\leq |A|^{c(r)}$ for every graph $G\in \CCC$ and 
  $A\subseteq V(G)$ with $|A|\geq 2$, where $c(r)$ is the constant given by~\cref{thm:adleradler}.
\end{corollary}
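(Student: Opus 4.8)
The plan is to derive \cref{lem:numberofneighborhoods} directly from the two results it cites, with essentially no extra work beyond bookkeeping. Fix a nowhere dense class $\CCC$, an integer $r\in\N$, a graph $G\in\CCC$, and a set $A\subseteq V(G)$ with $|A|\geq 2$. We want to bound $\nu_r(G,A)=|\{N_r^G[v]\cap A : v\in V(G)\}|$. The key observation is that each set $N_r^G[v]\cap A$ is an intersection with $A$ of a \emph{closed neighborhood in the graph $G_{\leq r}$}: indeed, for $u\in V(G)$ we have $u\in N_r^G[v]$ if and only if $\dist_G(u,v)\leq r$, which for $u\neq v$ means exactly $\{u,v\}\in E(G_{\leq r})$, and for $u=v$ is trivially true. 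Hence $N_r^G[v]\cap A = N^{G_{\leq r}}[v]\cap A$, so that
\[
\{N_r^G[v]\cap A : v\in V(G)\} \subseteq A\cap \FFF, \qquad \text{where } \FFF=\{N^{G_{\leq r}}[v] : v\in V(G_{\leq r})\}.
\]

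First I would invoke \cref{thm:adleradler} to obtain the constant $c(r)$ such that $G_{\leq r}$ has VC-dimension at most $c(r)$, for every $G\in\CCC$. By definition, the VC-dimension of the graph $G_{\leq r}$ is the VC-dimension of the set family $\FFF$ over the ground set $V(G_{\leq r})=V(G)$; thus $\FFF$ has VC-dimension at most $c(r)$. Next, I would use the fact (noted right after the Sauer--Shelah statement, and in any case immediate) that restricting a set family to a subset of the ground set cannot increase its VC-dimension: if $X\subseteq A$ is shattered by $A\cap\FFF$, then $X$ is shattered by $\FFF$ as well, since $X\cap(A\cap\FFF)=X\cap\FFF$ because $X\subseteq A$. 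Therefore the family $A\cap\FFF\subseteq 2^A$ has VC-dimension at most $c(r)$.

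Now I would apply the Sauer--Shelah Lemma (\cref{thm:sauer_shelah}) with the ground set $A$, so $n=|A|$, and $d=c(r)$. We may assume $c(r)\geq 2$, since if the VC-dimension were smaller we could simply replace $c(r)$ by $2$ without weakening \cref{thm:adleradler}; combined with $|A|\geq 2$ this puts us in the regime where the remark after \cref{thm:sauer_shelah} gives $\sum_{i=0}^{c(r)}\binom{|A|}{i}\leq |A|^{c(r)}$. Consequently
\[
\nu_r(G,A) \;=\; |\{N_r^G[v]\cap A : v\in V(G)\}| \;\leq\; |A\cap\FFF| \;\leq\; \sum_{i=0}^{c(r)}\binom{|A|}{i} \;\leq\; |A|^{c(r)},
\]
which is exactly the claimed bound. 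The only point requiring any care is the identification $N_r^G[v]\cap A = N^{G_{\leq r}}[v]\cap A$ together with the monotonicity of VC-dimension under restriction of the ground set; both are routine, so there is no genuine obstacle here—the statement is a packaging of \cref{thm:adleradler} and \cref{thm:sauer_shelah}. If one wants the constant to be uniform and $\geq 2$, that is handled by the trivial adjustment mentioned above, and the hypothesis $|A|\geq 2$ is needed only so that the clean form $|A|^{c(r)}$ of the Sauer--Shelah bound applies rather than the slightly weaker $2\cdot|A|^{c(r)}$.
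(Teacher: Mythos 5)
Your proposal is correct and is exactly the argument the paper intends: the paper states this as an immediate consequence of combining \cref{thm:adleradler} (VC-dimension of $G_{\leq r}$ at most $c(r)$, whose closed neighborhoods intersected with $A$ are precisely the sets $N_r^G[v]\cap A$) with the Sauer--Shelah Lemma, using the remark that $\sum_{i=0}^{d}\binom{n}{i}\leq n^d$ for $d,n\geq 2$. Your handling of the restriction to the ground set $A$ and of the case $c(r)<2$ is the same routine bookkeeping the paper leaves implicit.
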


Thus, a polynomial bound on the neighborhood complexity for any nowhere dense class, and, in fact, for any stable class, already follows from known tools.
Our goal in the next section will be to show that with the assumption of nowhere denseness we can prove an almost linear bound, as described in \cref{thm:main-neighborhood-complexity}.

\paragraph*{Distance profiles.}
In our reasoning we will need a somewhat finer view of the neighborhood complexity.
More precisely, we would like to partition the vertices of the graph not only with respect to their $r$-neighborhood in a fixed set $A$, but also with respect to what are the exact distances
of the elements of this $r$-neighborhood from the considered vertex. With this intuition in mind, we introduce the notion of a {\em{distance profile}}.

Let $G$ be a graph and let $A\subseteq V(G)$ be a subset of its vertices.
For a vertex $u\in V(G)$, the {\em{$r$-distance profile}} of $u$, denoted $\pi^G_r[u,A]$, is
a function mapping vertices of $A$ to $\{0,1,\ldots,r,\infty\}$ defined as follows:
\[ \pi^G_r[u,A](v)=\begin{cases}\dist_G(u,v) & \quad \textrm{if }\dist_G(u,v)\leq r,\\ \infty & \quad \textrm{otherwise.}\end{cases} \]
We say that a function $f\colon A\to \{0,1,\ldots,r,\infty\}$ is \emph{realized as an $r$-distance profile on $A$} if there is
$u\in V(G)$ such that $f=\pi^G_r[u,A]$. We may drop the superscript if the graph is clear from the context.

Similarly to the neighborhood complexity, we define the {\em{distance profile complexity}} of a vertex subset $A\subseteq V(G)$ in a graph $G$, denoted $\profnum_r(G,A)$, as the
number of different functions realized as $r$-distance profiles on $A$ in $G$. Clearly it always holds that $\nu_r(G,A)\leq \profnum_r(G,A)$, 
thus~\cref{thm:main-neighborhood-complexity} will follow directly from the following result, which will be proved in the next section. 

\begin{theorem}\label{thm:main-profiles}
  Let $\CCC$ be a nowhere dense class of graphs. Then there is 
  a function $\fnei(r,\epsilon)$ such that for every $r\in \N$, 
  $\epsilon>0$, graph $G\in \CCC$, and vertex subset $A\subseteq V(G)$, 
  it holds that $\profnum_r(G,A)\leq \fnei(r,\epsilon)\cdot |A|^{1+\epsilon}$.
\end{theorem}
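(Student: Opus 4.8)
The plan is to follow the skeleton of the argument of Reidl et al.~\cite{reidl2016characterising} for linear neighborhood complexity in bounded expansion classes, but to replace every step where one uses a bound that is exponential in $\wcol_r$ with a tool coming from stability/VC-theory. Fix a nowhere dense class $\CCC$, a radius $r$, and a real $\epsilon>0$; let $G\in\CCC$ and $A\subseteq V(G)$ with $n=|A|$. We may freely pass to the subgraph of $G$ induced by all vertices within distance $r$ of $A$, since this does not change any $r$-distance profile on $A$; in particular we may assume $|V(G)|$ is polynomial in $n$ at the cost of worsening $\epsilon$ slightly (more carefully, one uses \cref{lem:wcolbound} with a small enough secondary parameter). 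Fix an order $L\in\Pi(G)$ witnessing $\wcol_{2r}(G)\leq \fwcol(2r,\delta)\cdot |V(G)|^{\delta}=:t$ for a suitably small $\delta$; the key point is that $t\leq n^{\epsilon/2}$ once $n$ is large, and for small $n$ the bound of \cref{lem:numberofneighborhoods} already suffices with a large enough $\fnei$.

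The combinatorial core is the following reorganization. For a vertex $u\in V(G)$ at distance at most $r$ from some $a\in A$, choose for each such $a$ a shortest $u$--$a$ path $P_{u,a}$ and let $m(u,a)$ be the $L$-minimal vertex on $P_{u,a}$; then $m(u,a)\in\WReach_{r}[G,L,u]\cap\WReach_{r}[G,L,a]$. Thus the pair $(u,a)$ is "routed through" a common weakly-$r$-reachable vertex. The idea, exactly as in~\cite{reidl2016characterising}, is to classify a vertex $u$ by the following data: (i) the set $\WReach_r[G,L,u]$, which has size at most $t$; and (ii) for each $w\in\WReach_r[G,L,u]$, a bounded amount of local information recording, among the vertices of $A$ that are weakly $r$-reachable \emph{from $w$} (a set of size $\le t$, depending only on $w$, not on $u$), which ones are reached through $w$ from $u$ and at what total distance. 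The number of choices for (i) is at most $\profnum$-irrelevant but bounded by the number of $\le t$-subsets of the $\le t$-element sets one can pick; here a naive count gives something like $|V(G)|^{t}$, which is too big — this is precisely where the bounded-expansion proof uses $2^{O(t)}$ and we cannot. Instead, I would argue that the map $u\mapsto \WReach_r[G,L,u]$ takes few distinct values \emph{up to the information that matters}, by observing that two vertices with the same distance profile restricted to $\WReach_{2r}[G,L,A]$ have the same $r$-distance profile on $A$ (since every shortest path from $u$ to $A$ passes through $\WReach_{2r}$ at its $L$-minimum), and then bounding the number of distance profiles on the set $W:=\WReach_{2r}[G,L,A]$. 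Since $|W|\le t\cdot n$, a direct application of \cref{lem:numberofneighborhoods}-style reasoning to $W$ gives only $|W|^{c}=(tn)^{c}$, still polynomial of too-high degree.

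The way to get the exponent down to $1+\epsilon$ is a two-layer recursion / bootstrapping on the weak-reachability order, which is the heart of the matter and the step I expect to be the main obstacle. One processes $A$ together with the "frontier" set $W$ of weakly reachable vertices; the frontier is \emph{not} much larger than $A$ only in an amortized sense, so one sets up an induction on $r$ where the bound for radius $r$ is derived from the bound for radius $r-1$ (or $\lceil r/2\rceil$) applied to an auxiliary set of size $O(t\cdot n)$, and the accumulated blow-up across the $O(r)$ (or $O(\log r)$) levels is absorbed into $\fnei(r,\epsilon)$ and a geometric loss in $\epsilon$ (replace $\epsilon$ by $\epsilon/2$ at each level, or by $\epsilon$ divided by the number of levels). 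At the base of each level, whenever one is tempted to count subsets of a $t$-element weakly-reachable set — giving $2^{t}$ — one instead invokes \cref{thm:adleradler}: the relevant set system (e.g.\ $\{N_{=d}[u]\cap W : u\}$ for $d\le r$) has VC-dimension bounded by a constant $c(r)$ depending only on $\CCC$ and $r$, so by the Sauer–Shelah lemma (\cref{thm:sauer_shelah}) the number of realized patterns on any $t$-element set is only $O(t^{c(r)})$ rather than $2^{t}$; since $t\le n^{\epsilon'}$, this is $n^{\epsilon' c(r)}$, negligible after choosing $\epsilon'$ small. Combining this polynomial-in-$t$ local count with a linear-in-$n$ "global" count of how the frontier attaches to $A$ — where the linear bound again comes from the amortized smallness of weak reachability sets in the order $L$, exactly as in~\cite{reidl2016characterising} — yields $\profnum_r(G,A)\le \fnei(r,\epsilon)\cdot n^{1+\epsilon}$. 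The uniform quasi-wideness bound (\cref{thm:uqw}) enters, as foreshadowed in the introduction, to control the size of an auxiliary separator or independent set used when passing between levels of the recursion, ensuring the frontier sets at each level stay of size $O(t\cdot n)$ and do not compound. Once \cref{thm:main-profiles} is established, \cref{thm:main-neighborhood-complexity} follows since $\nu_r(G,A)\le\profnum_r(G,A)$ and the right-to-left direction is the easy subdivision argument already sketched in the introduction.
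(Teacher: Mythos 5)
There is a genuine gap, on two counts. First, your opening reduction is not valid: passing to the subgraph induced by $N_r[A]$ does preserve $r$-distance profiles on $A$, but it does \emph{not} make $|V(G)|$ polynomial in $|A|$ (a star with its center in $A$ already defeats this), so you never get $\wcol_{2r}\leq n^{\epsilon/2}$ from \cref{lem:wcolbound}. The paper's reduction is different and essential: classify \emph{all} vertices by their $r$-distance profile on $A$, keep one representative per class (only $|A|^{d(r)}$ classes by \cref{crl:number-of-profiles}, i.e.\ by stability plus Sauer--Shelah), close the resulting set under shortest paths between pairs at distance at most $r$, and work in the induced subgraph $G'$, whose size is $|A|^{O(d(r))}$; only then does \cref{lem:wcolbound} give $\wcol_{2r}(G')\leq \fwcol(2r,\epsilon_4)\cdot|A|^{\epsilon_3}$.

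Second, the step you yourself flag as ``the heart of the matter'' is exactly the step that is missing, and the recursion on $r$ with frontier sets of size $O(t\cdot n)$ is not how it is resolved (nor is it clear it could be: you never explain why the per-level losses compose to $n^{1+\epsilon}$ rather than $n^{1+\Omega(1)}$, and the appeal to ``amortized smallness exactly as in Reidl et al.'' is precisely the place where their argument pays a factor exponential in $\wcol_{2r}$). The paper instead sets $Y[v]=\WReach_r[G',L,v]\cap\WReach_r[G',L,A]$, shows that the pair consisting of $Y[v]$ and the $r$-distance profile of $v$ on $Y[v]$ determines the profile of $v$ on $A$, bounds the profiles on each small set $Y[v]$ by the crude polynomial bound, and then bounds the number of \emph{distinct} sets $Y[v]$ by a new structural fact: the family $\{\WReach_r[G,L,v]\colon v\in V(G)\}$ has VC-dimension bounded by a constant $x(r)$ (\cref{lem:bd-vcdimension}), proved via uniform quasi-wideness and a signature argument, after which Sauer--Shelah is applied within each group of sets sharing the same $L$-maximal element. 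Your proposed VC applications (via \cref{thm:adleradler}, to systems like $\{N_{=d}[u]\cap W\}$) do not substitute for this, because weak-reachability sets depend on the order $L$ and are not first-order definable from $G$ alone; without \cref{lem:bd-vcdimension} (or an equivalent), the count of the ``global'' attachment data remains exponential in $t$, and the proposal does not close.
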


Let us observe that the polynomial bound of \cref{lem:numberofneighborhoods} carries over to distance profiles.
 
\begin{lemma}\label{crl:number-of-profiles}
Let $\CCC$ be a nowhere dense class of graphs. Then there is an integer $d(r)$ such that for every $r\in \N$, 
  $\epsilon>0$, graph $G\in \CCC$, and vertex subset $A\subseteq V(G)$ with $|A|\geq 2$, 
  it holds that $\profnum_r(G,A)\leq |A|^{d(r)}$.
\end{lemma}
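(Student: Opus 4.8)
The plan is to reduce the counting of distance profiles to the counting of neighborhoods, by encoding the extra distance information through a bounded number of auxiliary graphs. Fix a nowhere dense class $\CCC$ and a radius $r\in\N$. For a graph $G\in\CCC$ and an integer $i\in\{0,1,\ldots,r\}$, recall from \cref{thm:adleradler} the graph $G_{=i}$ with the same vertex set as $G$ in which $u$ and $v$ are adjacent precisely when $\dist_G(u,v)=i$. First I would observe that the distance profile $\pi^G_r[u,A]$ is completely determined by the tuple of sets $\bigl(N^{G_{=0}}[u]\cap A,\ N^{G_{=1}}[u]\cap A,\ \ldots,\ N^{G_{=r}}[u]\cap A\bigr)$: indeed, for each $v\in A$, the value $\pi^G_r[u,A](v)$ equals the unique $i\le r$ such that $v$ lies in the $i$-th set, and equals $\infty$ if $v$ lies in none of them. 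Hence the map sending $u$ to this $(r+1)$-tuple of subsets of $A$ is injective on distance profiles, and we get
\[
\profnum_r(G,A)\ \le\ \prod_{i=0}^{r}\bigl|\{N^{G_{=i}}[u]\cap A \colon u\in V(G)\}\bigr|.
\]

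Next I would bound each factor. The class $\CCC$ is nowhere dense, so by \cref{thm:adleradler} there is a constant $c(r)$ such that each graph $G_{=i}$ with $i\le r$ has VC-dimension at most $c(r)$ (the formula $\dist(u,v)=i$ is symmetric, so the corollary applies; we may take the single constant $c(r)$ to be the maximum over $i\in\{0,\ldots,r\}$). Now the set family $\{N^{G_{=i}}[u]\cap A \colon u\in V(G)\}$ is the trace on $A$ of the family $\{N^{G_{=i}}[u]\colon u\in V(G)\}$, and restricting a set system to a subset of the ground set cannot increase its VC-dimension; hence this traced family also has VC-dimension at most $c(r)$. By the Sauer--Shelah Lemma (\cref{thm:sauer_shelah}), applied with ground set $A$ of size $|A|\ge 2$, each factor is at most $|A|^{c(r)}$ (here I would invoke the remark after \cref{thm:sauer_shelah} that $\sum_{i=0}^{d}\binom{n}{i}\le n^d$ once $d\ge 2,\ n\ge 2$; if $c(r)<2$ we may harmlessly enlarge it to $2$).

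Combining the two displays yields $\profnum_r(G,A)\le \bigl(|A|^{c(r)}\bigr)^{r+1}=|A|^{(r+1)c(r)}$, so the statement holds with $d(r):=(r+1)\,c(r)$, where $c(r)$ is the constant from \cref{thm:adleradler}. I do not expect any genuine obstacle here: the argument is essentially the same polynomial bound already recorded in \cref{lem:numberofneighborhoods}, and the only new point is the elementary observation that a distance profile decomposes into $r+1$ ordinary neighborhoods in the graphs $G_{=0},\ldots,G_{=r}$, each of which is controlled by bounded VC-dimension. The mild bookkeeping points — that $G_{=i}$ itself need not lie in $\CCC$ but still has bounded VC-dimension by \cref{thm:adleradler}, that traces do not increase VC-dimension, and the small-parameter edge cases in Sauer--Shelah — are all routine. (Note that the parameter $\epsilon$ plays no role in this lemma; it is listed in the statement only for uniformity with \cref{thm:main-profiles}, and the bound $d(r)$ is independent of it.)
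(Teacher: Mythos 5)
Your argument is correct and takes essentially the same approach as the paper: the paper likewise decomposes the $r$-distance profile of $u$ into the tuple of traces $N^G_i[u]\cap A$ for $i=0,\ldots,r$ (cumulative balls, controlled via \cref{lem:numberofneighborhoods}) and multiplies the polynomial bounds, obtaining $d(r)=c(0)+c(1)+\cdots+c(r)$, so your substitution of the exact-distance graphs $G_{=i}$ is an immaterial variation. The only cosmetic slip is that when $u\in A$ the vertex $u$ belongs to every closed neighborhood $N^{G_{=i}}[u]$, so the index is not literally unique and one should decode the profile by the \emph{smallest} index containing $v$ (exactly as the paper does); this does not affect correctness.
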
 
\begin{proof}
Observe that for any vertex $u\in V(G)$, the $r$-distance profile $\pi^G_r[u,A]$ is uniquely determined by the tuple
\[(N^G_0[u]\cap A,N^G_1[u]\cap A,\ldots,N^G_{r-1}[u]\cap A,N^G_r[u]\cap A).\]
This is because for any $v\in A$, the value $\pi^G_r[u,A][v]$ is equal to the smallest integer $i\leq r$ such that $v\in N^G_i(u)$, or $\infty$ if no such integer exists.
By~\cref{lem:numberofneighborhoods}, we have $|A|^{c(i)}$ possibilities for the $i$-th entry of this tuple, hence we can set $d(r)=c(0)+c(1)+\ldots+c(r)$.
\end{proof}

\section{Neighborhood complexity of nowhere dense classes}
\label{sec:neighborhood-complexity}

Reidl et al.~\cite{reidl2016characterising} gave a linear bound on the neighborhood complexity for
all subsets of vertices of graphs from a class of bounded expansion; see Theorem 3 in~\cite{reidl2016characterising}.
More precisely, they show that for every graph $G$ and every
$A\subseteq V(G)$, it holds that
\[\nu_r(G,A) \le \frac{1}{2}(2r+2)^{\wcol_{2r}(G)}\cdot \wcol_{2r}(G)\cdot |A|+1.\] 
Note that if $G$ belongs to some fixed class $\CCC$ of bounded expansion, then $\wcol_{2r}(G)$ is bounded by a constant depending only on
the class $\CCC$ and $r$.
However, assuming only that $\CCC$ is nowhere dense,
by \cref{lem:wcolbound} we have that for all
$r\geq 1$ and $\epsilon>0$ the number $\wcol_{2r}(G)$ is bounded only
by $\fwcol(2r,\epsilon) \cdot |V(G)|^\epsilon$, which is not a constant.
In particular, since $\wcol_{2r}(G)$ appears in the exponent of the bound of Reidl et al.,
we do not obtain even a polynomial upper bound.

In this section we prove~\cref{thm:main-profiles}, which directly implies~\cref{thm:main-neighborhood-complexity}, as explained in the previous section.
Our approach is to carefully analyze the proof of Reidl et al.~\cite{reidl2016characterising}, and
to fix parts that break down in the nowhere dense setting using tools derived, essentially, from the stability of nowhere dense classes. 

We first prove the following auxiliary lemma. We believe it may be of
independent interest, as it seems very useful for the analysis of weak coloring numbers in the nowhere dense setting.

\pagebreak

\begin{lemma}\label{lem:bd-vcdimension}
  Let $\CCC$ be a nowhere dense class of graphs and let $G \in \CCC$. For
  $r \geq 0$ and a linear order $L \in \Pi(G)$, let
  \[
  \WWW_{r,L} = \{ \WReach_r[G,L,v]\ \colon\ v \in V(G) \}.
  \]
  Then there is a constant $x(r)$, depending only on $\CCC$ and $r$ (and
  not on $G$ and $L$), such that $\WWW_{r,L}$ has VC-dimension at most
  $x(r)$.
\end{lemma}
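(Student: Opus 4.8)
The plan is to bound the size of any set shattered by $\WWW_{r,L}$ by combining uniform quasi-wideness (\cref{thm:uqw}) with a counting argument; the model-theoretic input is hidden inside \cref{thm:uqw}. It is convenient to use the reformulation: for $u\in V(G)$ write $V_{\geq u}:=\{w\in V(G):u\leq_L w\}$; then $u\in\WReach_r[G,L,v]$ iff $\dist_{G[V_{\geq u}]}(v,u)\leq r$. In particular, any path witnessing $u\in\WReach_r[G,L,v]$ lies entirely in $G[V_{\geq u}]$, so any subpath of it joining a vertex $w$ to $u$ witnesses $u\in\WReach_r[G,L,w]$ (the ``truncation property''). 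Now suppose $X\subseteq V(G)$ is shattered by $\WWW_{r,L}$ and set $d:=|X|$. Put $m:=\lfloor d^{1/p(2r)}\rfloor$ and apply \cref{thm:uqw} with radius $2r$ to $X$, obtaining $S\subseteq V(G)$ with $q:=|S|\leq s(2r)$ and a set $B\subseteq X\setminus S$ with $|B|\geq m$ that is $2r$-independent in $G-S$. Since subsets of shattered sets are shattered, $B$ is shattered; for each $Z\subseteq B$ fix $v_Z$ with $\WReach_r[G,L,v_Z]\cap B=Z$.

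The $2r$-independence lets us route almost all of each $Z$ through the bounded set $S$. Fix, for each $b\in Z$, a shortest path $P_b$ in $G[V_{\geq b}]$ witnessing $b\in\WReach_r[G,L,v_Z]$. For distinct $b,b'\in Z$, the concatenation $P_b\cup P_{b'}$ is a walk of length $\leq 2r$ between $b$ and $b'$; as $b,b'\in B$ are $2r$-independent in $G-S$, this walk meets $S$, hence $P_b$ meets $S$ or $P_{b'}$ meets $S$. So at most one $b\in Z$ has $P_b$ disjoint from $S$; for every other $b$, picking $\sigma\in S$ on $P_b$ and applying the truncation property gives $b\in\WReach_r[G,L,\sigma]$. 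Therefore, setting $W:=\bigcup_{\sigma\in S}\bigl(\WReach_r[G,L,\sigma]\cap B\bigr)$ — one fixed set — we get $|Z\setminus W|\leq 1$, so $Z=(Z\cap W)\cup(Z\setminus W)$ is a disjoint union with $Z\cap W=\WReach_r[G,L,v_Z]\cap W$ and $Z\setminus W$ empty or a singleton from $B$. As the $Z$'s are pairwise distinct,
\[
2^{|B|}\ \leq\ (|B|+1)\cdot\bigl|\{\,\WReach_r[G,L,v]\cap W\ :\ v\in V(G)\,\}\bigr|.
\]

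It remains to bound the number of traces $\WReach_r[G,L,v]\cap W$ polynomially in $|W|$, which is the step I expect to be the main obstacle. Running the routing argument for an arbitrary $v$ shows that $\WReach_r[G,L,v]\cap W$ differs by at most one element from $T(v):=\{\,b\in W:\exists\,\sigma\in S\ \dist_{G[V_{\geq b}]}(v,\sigma)+\dist_{G[V_{\geq b}]}(\sigma,b)\leq r\,\}$, and in $T(v)$ the summand $\dist_{G[V_{\geq b}]}(\sigma,b)$ depends only on $(\sigma,b)$. So $T(v)$, and hence $\WReach_r[G,L,v]\cap W$ up to that one extra element, is determined by the functions $f_{v,\sigma}\colon W\to\{0,1,\dots,r,\infty\}$, where $f_{v,\sigma}(b)$ is $\dist_{G[V_{\geq b}]}(v,\sigma)$ if this is at most $r$ and $\infty$ otherwise, taken over $\sigma\in S$. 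The key point is that $f_{v,\sigma}$ is monotone along $L$: if $b_1<_L b_2$ then $G[V_{\geq b_1}]\supseteq G[V_{\geq b_2}]$, so $\dist_{G[V_{\geq b_1}]}(v,\sigma)\leq\dist_{G[V_{\geq b_2}]}(v,\sigma)$. Hence $f_{v,\sigma}$ restricted to $W$ is a nondecreasing step function determined by its $\leq r+1$ jump points, each an element of $W\cup\{\bot\}$; there are at most $(|W|+1)^{r+1}$ such functions. Consequently the number of traces is at most $(|W|+1)^{q(r+1)}\cdot(|W|+2)$. Plugging this into the inequality above and using $|W|\leq|B|$ yields $2^{|B|}\leq(|B|+2)^{\,q(r+1)+2}$, which bounds $|B|$ by a constant $m_0=m_0(\CCC,r)$ (recall $q\leq s(2r)$ is bounded in terms of $\CCC$ and $r$). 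Since $|B|\geq m=\lfloor d^{1/p(2r)}\rfloor$, this bounds $d$ by $x(r):=(m_0+1)^{p(2r)}$. The reason this succeeds where the naive first-order-interpretation approach fails — $L$ cannot be encoded by a sparse gadget — is that, once the bounded separator $S$ is pulled out, every vertex's weak reachability into $W$ is channelled through the $q$ vertices of $S$, and monotonicity of distances in the nested graphs $G[V_{\geq b}]$ turns this channelled behaviour into data of polynomial size.
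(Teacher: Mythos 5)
Your proposal is correct and follows essentially the same route as the paper's proof: extract $S$ and a $2r$-independent set $B$ from a shattered set via uniform quasi-wideness, observe that at most one element of $B$ is weakly $r$-reachable by a path avoiding $S$, channel everything else through the bounded set $S$, and beat $2^{|B|}$ by a polynomial count. The only difference is bookkeeping: you encode the channelled information by the $L$-monotone capped distance functions $b\mapsto\dist_{G[V_{\geq b}]}(v,\sigma)$ and their at most $r+1$ jump points, whereas the paper records the equivalent data as the signatures $b_{\sigma,\rho}(v)$ (the $L$-largest $b\in B$ admitting a length-$\rho$ path from $v$ to $\sigma$ above $b$); these encodings are interchangeable and yield the same $(m+1)^{\Oof(rs)}$ bound.
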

\begin{proof}
  Since $\CCC$ is nowhere dense, according to~\cref{lem:wcolbound}, it
  is uniformly quasi-wide, say with margins $N$ and $s$. We fix a
  number $m$ to be determined later, depending only on $r$ and $\CCC$.
  Let $x = x(r) = N(2r,m)$ and $s = s(r)$. Assume towards a
  contradiction that there is a
  set $A \subseteq V(G)$ of size $x$ which is shattered by
  $\WWW_{r,L}$. Fix sets $S \subseteq V(G)$ and $B \subseteq A\setminus S$ such that $|B| = m$,
  $|S| \leq s$, and $B$ is $2r$-independent in $G - S$.
  We will treat $L$ also as a linear order on the vertex set of $G-S$.
  
  As a subset of $A$, the set $B$ is also shattered by $\WWW_{r,L}$.
  That is, for every $X \subseteq B$ there is a vertex $v_X \in V(G)$ such that $X = \WReach_r[G,L,v_X] \cap B$.
  Note that since $B$ is $2r$-independent in $G - S$, so is $X$, and we have
  \[
  |\WReach_r[G-S,L,v_X] \cap B| \leq 1.
  \]
  
  For a vertex $\sigma \in S$, number $\rho \in \{ 0,\ldots,r-1 \}$, and vertex $v\in V(G)$,
  let us consider the set $\PPP_{v,\sigma,\rho}$ of all paths of length (exactly) $\rho$ that connect $\sigma$ and $v$.
  We define $b_{\sigma,\rho}(v)$ to be the largest (with respect to $L$) vertex $b\in B$, for which there exists a path $P\in \PPP_{v,\sigma,\rho}$
  such that every vertex on $P$ is strictly larger than $b$ with respect to $L$. If no such vertex in $B$ exists, we put $b_{\sigma,\rho}(v)=\bot$. 
  
  The \emph{signature} of a vertex $v \in V(G)$ is defined as
  \[
  \chi(v) =  \big(b_{\sigma,\rho}(v)\big)_{\sigma \in S,\, 0 \leq \rho \leq r-1}.
  \]
  It now follows that the number of possible signatures is small.
  \begin{claim}\label{cl:num-signatures}
  The set $\{\chi(v)\colon v\in V(G)\}$ has size at most $(m+1)^{r\cdot s}$.
  \end{claim}
  \begin{clproof}
  For each $\sigma\in S$ and $\rho\in\{0,\ldots,r-1\}$, there are $|B|+1=m+1$ possible values for $b_{\sigma,\rho}(v)$, while
  the number of relevant pairs $(\sigma,\rho)$ is at most $r\cdot s$.
  \end{clproof}
  
  The next claim intuitively shows that for a vertex $v$, the signature of $v$ plus the set of vertices of $B$ weakly reachable from $v$ in $G-S$ provides enough information to deduce 
  precisely the set of vertices of $B$ weakly reachable from $v$ in $G$.
  
  \begin{claim}\label{cl:signatures-enough}
  Suppose $v,w\in V(G)$ are such that 
  \[\chi(v) = \chi(w)\quad \textrm{ and }\quad \WReach_r[G-S,L,v]\cap B = \WReach_r[G-S,L,w]\cap B.\]
  Then $\WReach_r[G,L,v]\cap B = \WReach_r[G,L,w]\cap B$.
  \end{claim}
  \begin{clproof}
  By symmetry it suffices to show that for any $b \in B$, if
  $b \in \WReach_r[G,L,v]$, then $b \in \WReach_r[G,L,w]$. For every such
  $b$, there is a path $P$ of length a most $r$ that connects $v$ and $b$, on which, moreover, $b$ is the
  $L$-minimal vertex. If this path avoids $S$, then
  \[b \in \WReach_r[G-S,L,v] = \WReach_r[G-S,L,w]\subseteq \WReach_r[G,L,w],\]
  and we are done. Otherwise, let $\sigma$ be the first vertex in $S$ on $P$ (i.e., the closest to
  $v$); obviously $\sigma\neq b$, since $b\notin S$. 
  Let $\sigma$ be the $\rho$-th vertex on $P$ (starting from $v$), where we have $\rho\in \{0,1,\ldots,r-1\}$. 
  Let $P_1$ be the prefix of $P$ from $v$ to $\sigma$, and $P_2$ be the suffix of $P$ from $\sigma$ to~$b$. Since $b$ is the
  $L$-minimal vertex on $P$, and $\sigma\neq b$, all vertices on $P_1$ are greater 
  than $b$ with respect to $L$. Therefore $b_{\sigma,\rho}(v)\neq \bot$ and $b \leq_L b_{\sigma,\rho}(v)$. 
  Since $\chi(v) = \chi(w)$, we have that $b_{\sigma,\rho}(v)=b_{\sigma,\rho}(w)$. 
  Consequently, there is a path $Q$ of length $\rho$ connecting $w$ and $\sigma$ such that all 
  vertices of $Q$ are greater than $b$ with respect to $L$. 
  By concatenating $Q$ and $P_2$ we obtain a path of length at most $r$ connecting $w$ and $b$, such that $b$ is the $L$-minimal vertex on this path.
  This implies that $b \in \WReach_r[G,L,w]$.
  \end{clproof}
  
  By~\cref{cl:num-signatures} there are at most $(m+1)^{r\cdot s}$ possible signatures, while we argued that 
  the intersection $\WReach_r[G-S,L,v]\cap B$ is always of size at most $1$, hence there are at most $(m+1)$ possibilities for it.
  Thus, by~\cref{cl:signatures-enough} we conclude that only at most $(m+1)^{r\cdot s} \cdot (m+1)$  subsets
  of~$B$ are realized as $B \cap W$ for some $W \in \WWW_{r,L}$. 
  To obtain a contradiction with $B$ being shattered by $\WWW_{r,L}$, it suffices to select $m$ so that
  \[(m+1)^{r\cdot s+1}<2^m.\]
  Since $s=s(r)$ is a constant depending on $\CCC$ and~$r$ only, we may choose $m$ depending on $\CCC$ and~$r$ so that the above inequality holds. 
\end{proof}


With~\cref{lem:bd-vcdimension} in hand, we now are ready to prove~\cref{thm:main-profiles}. 

\setcounter{theorem}{7}

\begin{theorem}[restated]
Let $\CCC$ be a nowhere dense class of graphs. Then there is 
  a function $\fnei(r,\epsilon)$ such that for every $r\in \N$, 
  $\epsilon>0$, graph $G\in \CCC$, and vertex subset $A\subseteq V(G)$, 
  it holds that $\profnum_r(G,A)\leq \fnei(r,\epsilon)\cdot |A|^{1+\epsilon}$.
\end{theorem}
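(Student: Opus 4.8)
The plan is to revisit the proof of Reidl et al.~\cite{reidl2016characterising} of the \emph{linear} bound on neighborhood complexity in classes of bounded expansion, keep its combinatorial skeleton, and repair the single step where it crucially uses that the weak coloring number is a constant: the step bounding the number of ``interaction types'' of a weakly reachable set with $A$ by a quantity exponential in $\wcol_{2r}(G)$. In the nowhere dense regime \cref{lem:wcolbound} only gives $\wcol_{2r}(G)\le\fwcol(2r,\epsilon')\cdot|V(G)|^{\epsilon'}$, so this exponential term is worthless; instead I would bound the number of interaction types that actually \emph{occur}, using that they form a set system of bounded Vapnik--Chervonenkis dimension (through \cref{lem:bd-vcdimension} and its distance-profile refinement) and invoking the Sauer--Shelah lemma. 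This is the concrete incarnation of the slogan that, in nowhere dense classes, exponential blow-ups become polynomial because the relevant set systems come from stable structures.

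First I would normalize. Since the subgraph-closure of a nowhere dense class is again nowhere dense, I may assume $\CCC$ is monotone, and I may restrict $G$ to the ball $N_{2r}[A]$: vertices outside $N_r[A]$ have the all-$\infty$ profile, and for $u\in N_r[A]$ every shortest $u$--$a$ path of length at most $r$ with $a\in A$ stays inside $N_r[A]\subseteq N_{2r}[A]$, so $\profnum_r(G,A)$ changes by at most one. I may also assume $|A|$ exceeds an arbitrary threshold depending on $r,\epsilon$ (otherwise \cref{crl:number-of-profiles} already suffices), fix a small auxiliary $\epsilon'=\epsilon'(r,\epsilon)>0$, and fix a linear order $L\in\Pi(G)$ achieving $\wcol_{2r}(G)$. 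Writing $G_{\ge w}$ for the subgraph of $G$ induced by the vertices not $L$-smaller than $w$, cutting a shortest $u$--$a$ path at its $L$-minimal vertex gives the structural description borrowed from Reidl et al.:
\[
\pi^G_r[u,A](a)=\min\bigl\{\ \dist_{G_{\ge w}}(u,w)+\dist_{G_{\ge w}}(w,a)\ :\ w\in\WReach_r[G,L,u]\ \bigr\},
\]
capped at $r$. Setting $\sigma_w(a)=\min(r,\dist_{G_{\ge w}}(w,a))$ --- a quantity depending on $w$ but not on $u$ --- the profile $\pi^G_r[u,A]$ is determined by the family of traces $\WReach_j[G,L,u]\cap A^+$ for $j\le r$, where $A^+$ collects the vertices $w$ with $\sigma_w\not\equiv\infty$; each such $w$ lies in $\WReach_r[G,L,a]$ for some $a\in A$, so $|A^+|\le|A|\cdot\wcol_r(G)$. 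Reidl et al. then obtain a bound linear in $|A|$ by charging each nontrivial profile to a pair (a vertex of $A$, an interaction type), the number of interaction types of one weakly reachable set with $A$ being at most $(2r+2)^{\wcol_{2r}(G)}\cdot\wcol_{2r}(G)$.

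The fix replaces the exponential count by the number of interaction types that genuinely occur. By \cref{lem:bd-vcdimension} the set system $\{\WReach_j[G,L,u]\colon u\in V(G)\}$ has VC-dimension at most a constant $x(j)$ depending only on $\CCC$ and $j$; applying the Sauer--Shelah lemma (\cref{thm:sauer_shelah}) to each of the $O(r)$ coordinates of the tuple $(\WReach_j[G,L,u]\cap A^+)_{j\le r}$ --- exactly as \cref{crl:number-of-profiles} is deduced from \cref{lem:numberofneighborhoods} --- shows that the number of realized interaction types is polynomial, with exponent depending only on $r$ and $\CCC$, in the size of the relevant ground set; organized \emph{per single vertex of $A$} (ground set $\WReach_{2r}[G,L,a]$, of size at most $\wcol_{2r}(G)$) this is merely polynomial in $\wcol_{2r}(G)$. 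Substituting this into the charging argument in place of the exponential factor yields a bound of the form $\profnum_r(G,A)\le h(r)\cdot|A|\cdot\wcol_{2r}(G)^{e(r)}$ for constants $h(r),e(r)$ depending only on $\CCC$ and $r$.

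It remains to convert this into the almost linear bound, and this is where I expect the genuine difficulty. Applying \cref{lem:wcolbound} to $G=G[N_{2r}[A]]$ gives $\wcol_{2r}(G)^{e(r)}\le\fwcol(2r,\epsilon')^{e(r)}\cdot|V(G)|^{\epsilon'e(r)}$, and choosing $\epsilon'$ small (in terms of $r$, $e(r)$, $\epsilon$) absorbs this into $|A|^{\epsilon}$ --- \emph{provided} $|V(G)|$ is polynomially bounded in $|A|$, which is false after the naive reductions alone. To overcome this I would invoke the polynomial uniform quasi-wideness of nowhere dense classes (\cref{thm:uqw}), which already underlies the proof of \cref{lem:bd-vcdimension}, to argue that only vertices lying in a set whose size is controlled in terms of $|A|$ can contribute distinct nontrivial profiles, so that all weak-coloring-number estimates can be made against $|A|$ rather than $|V(G)|$. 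The main obstacle is precisely to reconcile the two halves of this program: the exponential-to-polynomial replacement must be arranged so that the resulting polynomial has a \emph{fixed} exponent (not one growing with the VC-dimension), and this must be done while keeping every parameter bounded in terms of $|A|$ and not $|V(G)|$; the clean availability of \cref{lem:bd-vcdimension} together with \cref{thm:uqw} is what makes this feasible.
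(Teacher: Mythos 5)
Your second half---replacing the $(2r+2)^{\wcol_{2r}(G)}$ count of interaction types by the number of types actually realized, bounded via the VC-dimension of the family of weakly reachable sets (\cref{lem:bd-vcdimension}) and the Sauer--Shelah lemma---is exactly the engine of the paper's proof (there it appears as the bound on the number of distinct sets $Y[v]=\WReach_r[G',L,v]\cap\WReach_r[G',L,A]$, grouped by their $L$-maximal element, combined with a count of distance profiles on each single $Y[v]$). The genuine gap is the step you yourself flag as the ``main obstacle'': making the weak-coloring-number estimate depend on $|A|$ rather than on $|V(G)|$. Restricting to $N_{2r}[A]$ does nothing here, since that ball can be the whole graph, and \cref{thm:uqw} is not the tool that closes the gap: uniform quasi-wideness enters only inside the proof of \cref{lem:bd-vcdimension}, and it does not by itself show that only vertices from a set of size controlled by $|A|$ can contribute distinct profiles. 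Without this step your intermediate bound $h(r)\cdot|A|\cdot\wcol_{2r}(G)^{e(r)}$ only yields bounds of the shape $|A|\cdot|V(G)|^{\epsilon'}$, not $\fnei(r,\epsilon)\cdot|A|^{1+\epsilon}$.

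The paper resolves this with an elementary but crucial preprocessing step based on \cref{crl:number-of-profiles}---the very lemma you invoke only to dismiss the case of small $|A|$. Since the number of realized $r$-distance profiles on $A$ is at most $|A|^{d(r)}$, one picks a single representative vertex $v_\kappa$ for each realized profile, lets $A'$ be $A$ together with these representatives, and then adds, for every pair of vertices of $A'$ at distance at most $r$ in $G$, the vertex set of one shortest path between them; the induced subgraph $G'$ on the resulting set has at most $|A|^{2d(r)+3}$ vertices, and because these shortest paths are preserved, every $r$-distance profile realized on $A$ in $G$ is still realized in $G'$. All weak-coloring and VC-based counting is then performed in $G'$, where \cref{lem:wcolbound} gives $\wcol_{2r}(G')\leq \fwcol(2r,\epsilon_4)\cdot|A|^{\epsilon_3}$ after rescaling the exponent by $2d(r)+3$; with that in hand, a charging argument of the kind you describe (and the additional factor counting the possible profiles on a fixed $Y[v]$, again via \cref{crl:number-of-profiles}) goes through and yields $\fnei(r,\epsilon)\cdot|A|^{1+\epsilon}$. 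So the missing idea is concrete: use the polynomial profile bound itself to shrink the host graph to size polynomial in $|A|$ \emph{before} invoking \cref{lem:wcolbound}, rather than trying to control the contributing vertices through quasi-wideness.
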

\setcounter{theorem}{6}
\begin{proof}
  Fix $r\geq 1$ and $\epsilon>0$. 
  In the proof we will use small constants $\epsilon_1,\epsilon_2, \epsilon_3,\epsilon_4>0$, which will be determined in the course of the
  proof.
  If $|A|\leq r$, then the number of different $r$-distance profiles on $A$ is at most $(r+2)^r$, which is bounded by a function of $r$ only.
  Hence, throughout the proof we can assume without loss of generality that $|A|>r$, in particular $|A|\geq 2$.
  
  The first step is to reduce the problem to the case when the size of the graph is bounded polynomially in $|A|$.
  According to~\cref{crl:number-of-profiles}, there is an integer
  $d(r)$ such that there are at most $|A|^{d(r)}$ different $r$-distance profiles on $A$. 
  We therefore classify the
  elements of $V(G)$ according to their $r$-distance profiles on $A$,
  that is, let define an equivalence relation $\sim$ on $V(G)$ as follows:
  \[v\sim w \quad \textrm{if and only if}\quad \pi^G_r[v,A] = \pi^G_r[w,A].\] 
  Construct a set $A'$ by taking $A$ and, for each equivalence class $\kappa$ of $\sim$, adding an arbitrary element $v_{\kappa}$ to $A'$.
  Then, construct a set $A''$ by starting with $A'$, and, for each distinct $u,v\in A''$, performing the following operation:
  if $\dist_G(u,v)\leq r$, then add the vertex set of any shortest path between $u$ and $v$ to $A''$. Finally, let $G'=G[A'']$.
  
  \begin{claim}\label{cl:size-Gp}
  The following holds: $|A''|\leq |A|^{2\cdot d(r)+3}$.
  \end{claim}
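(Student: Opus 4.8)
The plan is a routine two-step count, so I expect no real difficulty beyond careful bookkeeping. First I would bound $|A'|$. By definition, the number of equivalence classes of $\sim$ equals $\profnum_r(G,A)$, which by~\cref{crl:number-of-profiles} is at most $|A|^{d(r)}$, using $|A|\geq 2$. Since $A'$ is $A$ together with one representative per class, $|A'|\leq |A|+|A|^{d(r)}$. We may assume $d(r)\geq 1$ (we may enlarge the constant from~\cref{crl:number-of-profiles} if needed), so $|A|\leq |A|^{d(r)}$, and hence $|A'|\leq 2|A|^{d(r)}$.

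Next I would bound the growth in the second round of the construction, which I read as a single pass: $A''$ is $A'$ together with, for each of the at most $\binom{|A'|}{2}$ unordered pairs of distinct vertices of $A'$ at distance at most $r$ in $G$, the at most $r-1$ internal vertices of one chosen shortest path between them. Since $|A|>r$ we have $r\leq |A|$, so
\[
|A''|\ \leq\ |A'| + \binom{|A'|}{2}(r-1)\ \leq\ |A'| + \tfrac{1}{2}\,|A'|^{2}\,|A|\ \leq\ |A'|^{2}\,|A|,
\]
where the last inequality uses $|A'|\leq\tfrac{1}{2}|A'|^{2}|A|$, valid since $|A'|\geq|A|\geq 2$.

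Finally, substituting $|A'|\leq 2|A|^{d(r)}$ yields
\[
|A''|\ \leq\ |A'|^{2}\,|A|\ \leq\ 4\,|A|^{2d(r)+1}\ \leq\ |A|^{2d(r)+3},
\]
the last step using $4\leq|A|^{2}$, which holds as $|A|\geq 2$. This is exactly the claimed bound.

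The only point that genuinely needs care is that the construction of $A''$ must be taken as a single round of inserting shortest paths between pairs of vertices of $A'$: iterating the operation to a fixpoint would invalidate the $\binom{|A'|}{2}$ count and blow the bound up. Everything else is just keeping the exponents aligned while using the crude estimates $|A|\leq|A|^{d(r)}$, $r\leq|A|$, and $4\leq|A|^{2}$ afforded by $|A|\geq 2$ and $d(r)\geq 1$.
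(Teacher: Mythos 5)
Your proposal is correct and follows essentially the same counting as the paper's proof: bound $|A'|$ via \cref{crl:number-of-profiles}, then charge at most $r-1$ new vertices to each of the at most $|A'|^2$ pairs, and absorb the remaining factors using $|A|>r$ and $|A|\geq 2$. Your single-pass reading of the construction of $A''$ (shortest paths inserted only between pairs of vertices of $A'$, not iterated to a fixpoint) is also the reading the paper's own proof uses.
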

  \begin{clproof}
  By~\cref{crl:number-of-profiles}, the equivalence relation $\sim$ has at most $|A|^{d(r)}$ equivalence classes, hence $|A'|\leq |A|+|A|^{d(r)}$.
  When constructing $|A''|$ we consider at most $|A'|^2$ pairs of vertices from $A$, and for each of them we add at most $r-1$ vertices to $A''$.
  It follows that
  \[|A''|\leq |A'|+(r-1)|A'|^2 \leq r|A'|^2\leq r(|A|+|A|^{d(r)})^2\leq 4r|A|^{2\cdot d(r)}\leq |A|^{2\cdot d(r)+3}.\]
  \end{clproof}
  
  \begin{claim}\label{cl:preserve-Gp}
  The following holds: $\profnum_r(G',A)\geq \profnum_r(G,A)$.
  \end{claim}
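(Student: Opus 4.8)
The goal is to show that the set of functions $A\to\{0,1,\ldots,r,\infty\}$ realized as $r$-distance profiles on $A$ in $G$ is contained in the corresponding set for $G'$; since $A\subseteq A'\subseteq A''=V(G')$ (so that $\profnum_r(G',A)$ is even defined), this containment immediately yields $\profnum_r(G',A)\geq \profnum_r(G,A)$. Note that it does not matter whether $G'$ additionally realizes profiles that never occur in $G$: we only need the inequality in this direction.

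The first step is a distance-preservation statement for pairs of vertices in $A'$. Since $G'=G[A'']$ is an induced subgraph of $G$, distances cannot decrease, i.e.\ $\dist_{G'}(u,v)\geq \dist_G(u,v)$ for all $u,v\in A''$. In the other direction, by the construction of $A''$, whenever $u,v\in A'$ satisfy $\dist_G(u,v)\leq r$, the set $A''$ contains the vertex set of some shortest $u$--$v$ path in $G$; that path is present in $G'$, so $\dist_{G'}(u,v)=\dist_G(u,v)$. Combining the two observations, for all $u,v\in A'$ we get $\dist_{G'}(u,v)=\dist_G(u,v)$ when $\dist_G(u,v)\leq r$, and $\dist_{G'}(u,v)>r$ when $\dist_G(u,v)>r$. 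In particular, applying this with $u=v_\kappa$ (one of the representatives added to $A'$, one per equivalence class $\kappa$ of $\sim$) and $v=a$ ranging over $A$, and using $A\subseteq A'$, we obtain $\pi^{G'}_r[v_\kappa,A]=\pi^G_r[v_\kappa,A]$ as functions on $A$: the truncated distance from $v_\kappa$ to each $a\in A$ is the same computed in $G$ or in $G'$.

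It remains to cover every vertex of $G$ via its $\sim$-class. Let $w\in V(G)$ be arbitrary and let $\kappa$ be its equivalence class under $\sim$. By the definition of $\sim$ we have $\pi^G_r[w,A]=\pi^G_r[v_\kappa,A]$, and by the previous step this equals $\pi^{G'}_r[v_\kappa,A]$, which is a profile realized by the vertex $v_\kappa\in V(G')$. Hence every $r$-distance profile on $A$ realized in $G$ is also realized in $G'$, giving $\profnum_r(G',A)\geq \profnum_r(G,A)$. There is no real obstacle here; the only points to handle carefully are the two-sided distance comparison of the second paragraph (monotonicity under taking an induced subgraph, together with shortest paths being retained in $A''$) and the bookkeeping that each representative $v_\kappa$ indeed lies in $A''$, so that the profile of each class has a witness inside $G'$.
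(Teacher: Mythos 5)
Your proof is correct and follows essentially the same route as the paper: the shortest paths included in $A''$ guarantee $\pi^{G'}_r[v_\kappa,A]=\pi^{G}_r[v_\kappa,A]$ for every class representative $v_\kappa\in A'$, and since every vertex of $G$ shares its profile with some $v_\kappa$, every profile realized in $G$ is realized in $G'$. Your explicit mention of distance monotonicity in the induced subgraph (handling the $\dist_G(u,v)>r$ case) is a point the paper leaves implicit, but the argument is the same.
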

  \begin{clproof}
  Observe that for any $u\in A,v\in A'$, if $\dist_G(u,v)\leq r$, then $\dist_{G'}(u,v)=\dist_{G}(u,v)$; this follows immediately from the inclusion of shortest paths in the construction of $A''$.
  Therefore, we infer that for each equivalence class $\kappa$ of $\sim$, the $r$-distance profile of $v_\kappa$ on $A$ in $G'$ is the same as in $G$, because $v_\kappa\in A'$ and $A\subseteq A'$.
  It follows that every $r$-distance profile on $A$ that is realized in $G$ is also realized in $G'$.
  \end{clproof}
  
  Intuitively,~\cref{cl:preserve-Gp} shows that we can restrict our attention to graph $G'$ instead of $G$, since by doing this we do not lose any $r$-distance profiles realized on $A$.
  The gain from this step is that the size of $G'$ is bounded polynomially in terms of $|A|$ (\cref{cl:size-Gp}), hence we can use better bounds on the weak coloring numbers, as explained next.

  According to~\cref{lem:wcolbound}, there is a function $\fwcol$
  such that
  \[\wcol_{2r}(G')\leq \fwcol(2r,\epsilon_4) \cdot
  |A''|^{\epsilon_4}=\fwcol(2r,\epsilon_4) \cdot
  |A|^{(2\cdot d(r)+3)\cdot \epsilon_4}=\fwcol(2r,\epsilon_4) \cdot
  |A|^{\epsilon_3},\] where $\epsilon_4=\epsilon_3/(2\cdot d(r)+3)$.
  Let $L$ be a linear order of $V(G')$ witnessing
  $\wcol_{2r}(G')\leq \fwcol(2r,\epsilon_4)\cdot|A|^{\epsilon_3}$;
  that is, for each $v\in V(G')$ we have $|\WReach_{2r}[G',L,v]|\leq \fwcol(2r,\epsilon_4)\cdot|A|^{\epsilon_3}$.
  For each $v\in V(G')$, let us define the following set:
  \[
  Y[v] = \WReach_r[G', L, v] \cap \WReach_r[G',L,A].
  \]
  In other words, $Y[v]$ comprises all vertices that are weakly $r$-reachable both
  from $v$ and from some vertex of $A$. Since $Y[v]\subseteq \WReach_r[G', L, v]$ for each $v\in V(G')$,  we have
  \[|Y[v]| \leq |\WReach_r[G', L, v]|\leq \fwcol(2r,\epsilon_4)\cdot |A|^{\epsilon_3}. \]
    Furthermore, as for each $v\in V(G')$ we have
    $Y[v]\subseteq \WReach_r[G,L,A] = \bigcup_{w\in
      A}\WReach_r[G,L,w]$,
    \[ \Big|\bigcup_{v\in V(G')} Y[v]\Big| \leq  |A|\cdot \max_{w\in V(G')} |\WReach_r[G,L,w]| \leq
    \fwcol(2r,\varepsilon_4)\cdot|A|^{1+\varepsilon_3}.
  \]
  
  We now classify the vertices $v\in V(G')$ according to their
  distance profiles $\pi^{G'}_r[v, Y[v]]$. More precisely,
  let $\equiv$ be the equivalence relation on $V(G')$ defined as follows:
  \[v\equiv w\quad\textrm{if and only if}\quad Y[v] = Y[w]\textrm{ and }\pi^{G'}_r[v,Y[v]]=\pi^{G'}_r[w,Y[w]].\]
  We next show that the equivalence relation $\equiv$ refines the standard partitioning according to $r$-distance profiles on $A$.
  
  \begin{claim}\label{cl:equiv-refines}
  For every $v,w\in V(G)$, if $v\equiv w$, then $\pi^{G'}_r[v,A]=\pi^{G'}_r[w,A]$.
  \end{claim}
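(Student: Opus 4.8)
The plan is to show directly that the finer equivalence $\equiv$ determines the $r$-distance profile on $A$, by routing the argument through the weakly reachable vertices captured by $Y[\cdot]$. Fix $v\equiv w$ (so in particular $v,w\in A''=V(G')$) and an arbitrary $a\in A$; note that $A\subseteq A'\subseteq A''$, so $a$ is a vertex of $G'$. Since the roles of $v$ and $w$ are symmetric, and since the case $\dist_{G'}(v,a)>r$ and $\dist_{G'}(w,a)>r$ assigns the value $\infty$ on both sides, it suffices to prove the one-directional statement: \emph{if $\dist_{G'}(v,a)\le r$, then $\dist_{G'}(w,a)\le \dist_{G'}(v,a)$}. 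Indeed, granting this for both orderings of the pair, $\dist_{G'}(v,a)\le r$ forces $\dist_{G'}(w,a)\le r$ and then equality; and if $\dist_{G'}(v,a)>r$ then $\dist_{G'}(w,a)>r$ as well (otherwise the symmetric inequality would give $\dist_{G'}(v,a)\le\dist_{G'}(w,a)\le r$), so both profiles read $\infty$ at $a$. As $a\in A$ is arbitrary, this yields $\pi^{G'}_r[v,A]=\pi^{G'}_r[w,A]$.

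So assume $\ell:=\dist_{G'}(v,a)\le r$, let $P$ be a shortest $v$--$a$ path in $G'$, and let $z$ be the $L$-minimal vertex on $P$. The crux is that $z\in Y[v]$: the prefix of $P$ from $v$ to $z$ has length $\dist_{G'}(v,z)\le\ell\le r$ and $z$ is its $L$-minimal vertex, so $z\in\WReach_r[G',L,v]$, while the suffix of $P$ from $a$ to $z$ witnesses $z\in\WReach_r[G',L,a]\subseteq\WReach_r[G',L,A]$ (using $a\in A$); hence $z\in\WReach_r[G',L,v]\cap\WReach_r[G',L,A]=Y[v]$. Moreover, subpaths of shortest paths are shortest, so $\dist_{G'}(v,z)+\dist_{G'}(z,a)=\ell$; in particular $\dist_{G'}(v,z)\le\ell\le r$, so $z$ is genuinely recorded in the profile $\pi^{G'}_r[v,Y[v]]$, with value exactly $\dist_{G'}(v,z)$.

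Now invoke $v\equiv w$. Since $Y[w]=Y[v]$ we have $z\in Y[w]$, and since $\pi^{G'}_r[v,Y[v]]=\pi^{G'}_r[w,Y[w]]$, this profile takes the same value at $z$ whether computed from $v$ or from $w$; as that value is $\dist_{G'}(v,z)\le r$, we get $\dist_{G'}(w,z)=\dist_{G'}(v,z)$. The triangle inequality then closes the argument:
\[
\dist_{G'}(w,a)\;\le\;\dist_{G'}(w,z)+\dist_{G'}(z,a)\;=\;\dist_{G'}(v,z)+\dist_{G'}(z,a)\;=\;\ell\;=\;\dist_{G'}(v,a).
\]
I do not expect a real obstacle here: the entire content is the observation that the $L$-minimal vertex of a shortest $v$--$a$ path lies in $Y[v]$ --- precisely the set that $Y[v]$ was engineered to isolate --- after which the triangle inequality and the symmetry/$\infty$ bookkeeping are routine. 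The only subtle point is making sure $z$ is actually stored in $\pi^{G'}_r[\cdot,Y[\cdot]]$ (i.e.\ that $\dist_{G'}(v,z)$, hence $\dist_{G'}(w,z)$, are $\le r$ rather than truncated to $\infty$), and this is immediate from $\dist_{G'}(v,z)\le\ell\le r$.
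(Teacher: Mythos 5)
Your proof is correct and follows essentially the same route as the paper's: reduce by symmetry to showing $\dist_{G'}(w,a)\leq\dist_{G'}(v,a)$ when the latter is at most $r$, take the $L$-minimal vertex $z$ on a shortest $v$--$a$ path, observe $z\in Y[v]$ so that $v\equiv w$ forces $\dist_{G'}(w,z)=\dist_{G'}(v,z)$, and finish by the triangle inequality. Your extra care that $z$ is not truncated to $\infty$ in the profile is the same observation the paper makes via the prefix of the path certifying $\dist_{G'}(v,z)\leq r$.
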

  \begin{clproof}
  By symmetry, to show that $\pi^{G'}_r[v,A]=\pi^{G'}_r[w,A]$ it suffices to prove 
  that for each $x\in A$, if $\dist_{G'}(v,x)\leq r$, then $\dist_{G'}(w,x)\leq \dist_{G'}(v,x)$.
  Let $P$ be a shortest path from $v$ to $x$ in $G'$, say of length $\ell\leq r$.
  Let $y$ be the smallest vertex on $P$ with respect to $L$. 
  Then $P$ certifies that $y$ is both weakly $r$-reachable from $v$ and from $x\in A$, hence $y \in Y[v]$.
  Since $Y[v]=Y[w]$ and $\pi^{G'}_r[v,Y[v]]=\pi^{G'}_r[w,Y[w]]$ by assumption, and the prefix of $P$ from $v$ to $y$ certifies that $\dist_{G'}(v,y)\leq r$, we have that 
  $\dist_{G'}(v,y)=\dist_{G'}(w,y)$.
  Since $P$ was a shortest path, also the prefix of $P$ from $v$ to $y$ is a shortest path between $v$ and $y$, and the suffix of $P$ from $y$ to $x$ is a shortest path between $y$ and $x$.
  Concluding,
  \[\dist_{G'}(w,x)\leq \dist_{G'}(w,y)+\dist_{G'}(y,x)=\dist_{G'}(v,y)+\dist_{G'}(y,x)=\ell.\]
  \end{clproof}
  
  \cref{cl:equiv-refines} suggests the following approach to bounding $\profnum_r(G',A)$: first give an upper bound on the number of possible sets of the form $Y[v]$, and then
  for each such set, bound the number of $r$-distance profiles on it. We deal with the second part first, as it essentially follows from~\cref{crl:number-of-profiles}.
  Let us set $\epsilon_2=\epsilon_3\cdot d(r)$, where $d(r)$ is the constant given by~\cref{crl:number-of-profiles}.
  
  \begin{claim}\label{cl:prof-on-one}
  There is a function $g(r,\varepsilon_4)$ such that for each $v\in V(G)$, 
  we have $\profnum_r(G,Y[v])\leq g(r,\epsilon_4)\cdot |A|^{\epsilon_2}$.
  \end{claim}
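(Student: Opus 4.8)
The plan is to bound $\profnum_r(G,Y[v])$ purely in terms of the cardinality of $Y[v]$, which has already been controlled above, by invoking the generic polynomial bound on distance profile complexity from \cref{crl:number-of-profiles}. So essentially nothing new is needed beyond combining two facts already on the table: the size estimate $|Y[v]|\le\fwcol(2r,\epsilon_4)\cdot|A|^{\epsilon_3}$ and the ``black-box'' polynomial bound $\profnum_r\le(\,\cdot\,)^{d(r)}$.

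First I would dispose of the degenerate case $|Y[v]|\le 1$: a function from a set of size at most $1$ into $\{0,1,\ldots,r,\infty\}$ can be chosen in at most $r+2$ ways, so $\profnum_r(G,Y[v])\le r+2$ in this range. Since we have assumed $|A|>r$, in particular $|A|\ge 2$, and $\epsilon_2>0$, we have $|A|^{\epsilon_2}\ge 1$, so this case is subsumed by taking $g(r,\epsilon_4)\ge r+2$. Next, assume $|Y[v]|\ge 2$. Note that $Y[v]$ is a subset of $V(G')=A''\subseteq V(G)$, hence a subset of $V(G)$, and $G\in\CCC$; therefore \cref{crl:number-of-profiles} applies verbatim to $G$ and the subset $Y[v]$ and yields $\profnum_r(G,Y[v])\le |Y[v]|^{d(r)}$, where $d(r)$ is the constant provided by that lemma. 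We have already argued above that $|Y[v]|\le|\WReach_r[G',L,v]|\le\fwcol(2r,\epsilon_4)\cdot|A|^{\epsilon_3}$, using $\WReach_r[G',L,v]\subseteq\WReach_{2r}[G',L,v]$ together with the choice of the order $L$ witnessing $\wcol_{2r}(G')\le\fwcol(2r,\epsilon_4)\cdot|A''|^{\epsilon_4}=\fwcol(2r,\epsilon_4)\cdot|A|^{\epsilon_3}$. Plugging this in,
\[
\profnum_r(G,Y[v])\ \le\ \bigl(\fwcol(2r,\epsilon_4)\cdot|A|^{\epsilon_3}\bigr)^{d(r)}\ =\ \fwcol(2r,\epsilon_4)^{d(r)}\cdot|A|^{\epsilon_3\cdot d(r)}\ =\ \fwcol(2r,\epsilon_4)^{d(r)}\cdot|A|^{\epsilon_2},
\]
where the last equality uses $\epsilon_2=\epsilon_3\cdot d(r)$ as set just before the claim. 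It then suffices to put $g(r,\epsilon_4)=\max\bigl\{\,r+2,\ \fwcol(2r,\epsilon_4)^{d(r)}\,\bigr\}$.

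I do not expect any genuine obstacle in this step; it is essentially bookkeeping. The two points that require a little care are: (i) keeping the degenerate case $|Y[v]|\le 1$ in mind, so that the appeal to \cref{crl:number-of-profiles} (which needs a set of size at least $2$) is legitimate; and (ii) remembering that the size bound on $Y[v]$ comes through weak $2r$-reachability sets and the order $L$ witnessing $\wcol_{2r}(G')$, rather than from any bound on $\wcol_r$ directly. The earlier choice $\epsilon_4=\epsilon_3/(2\cdot d(r)+3)$ is precisely what converts the $\fwcol$-bound stated in terms of $|A''|$ (whose size is $\le|A|^{2\cdot d(r)+3}$ by \cref{cl:size-Gp}) into one of the form $\fwcol(2r,\epsilon_4)\cdot|A|^{\epsilon_3}$, so all exponents line up and the final bound is of the promised shape $g(r,\epsilon_4)\cdot|A|^{\epsilon_2}$.
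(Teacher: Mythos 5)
Your proposal is correct and follows essentially the same route as the paper's own proof: dispose of the degenerate case $|Y[v]|\le 1$ directly, then apply \cref{crl:number-of-profiles} to get $\profnum_r(G,Y[v])\le |Y[v]|^{d(r)}$ and plug in the bound $|Y[v]|\le \fwcol(2r,\epsilon_4)\cdot|A|^{\epsilon_3}$, using $\epsilon_2=\epsilon_3\cdot d(r)$. The only differences are cosmetic (your constant is a maximum rather than the paper's $(r+1+\fwcol(2r,\epsilon_4))^{d(r)}$).
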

  \begin{clproof}
  Fix some $Z=Y[v]$ for $v\in V(G)$. If $|Z|=1$, then there are
  obviously at most $r+1$ distance profiles on $Z$. Otherwise,
  according to~\cref{crl:number-of-profiles}, there is a constant
  $d(r)$ such that $\profnum_r(G,Z)\leq |Z|^{d(r)}$. 
  Since $|Z|=|Y[v]|\leq \fwcol(2r,\epsilon_4)\cdot |A|^{\epsilon_3}$, we obtain that the
  total number of profiles on $Z$ is bounded by
  \[(r+1+\fwcol(2r,\epsilon_4))^{d(r)}\cdot |A|^{\epsilon_3\cdot d(r)} = (r+1+\fwcol(2r,\epsilon_4))^{d(r)})\cdot |A|^{\epsilon_2}.\]
  Hence we can take $g(r,\varepsilon_4)=(r+1+\fwcol(2r,\varepsilon_4))^{d(r)}$.
  \end{clproof}
  
  It remains to give an upper bound on the number of distinct sets $Y[v]$.
  Let us set $\epsilon_1 = \epsilon_3\cdot x(r)$, where $x(r)$ is the constant given by~\cref{lem:bd-vcdimension}.
  
  \begin{claim}\label{cl:num-Y}
  The following holds: $\left| \{ Y[v] \colon v\in V(G') \} \right|\leq 1+2\cdot \fwcol(2r,\epsilon_4)^{x(r)+1} \cdot |A|^{1+\epsilon_1+\epsilon_3}.$
  \end{claim}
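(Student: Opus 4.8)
The plan is to anchor each nonempty set $Y[v]$ at its $L$-maximal element, and then, for each fixed anchor, to bound the number of sets $Y[v]$ carrying that anchor by combining the VC-dimension bound of \cref{lem:bd-vcdimension} with the Sauer--Shelah Lemma applied over a ground set of size only $\Oof(|A|^{\epsilon_3})$.

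First I would prove the key structural fact: if $v\in V(G')$ has $Y[v]\neq\emptyset$ and $M$ denotes the $L$-maximal vertex of $Y[v]$, then $Y[v]\subseteq\WReach_{2r}[G',L,M]$. To see this, fix $u\in Y[v]$. Since $Y[v]\subseteq\WReach_r[G',L,v]$, both $u$ and $M$ are weakly $r$-reachable from $v$, so there are paths $P_u$ and $P_M$ in $G'$, each of length at most $r$, connecting $v$ with $u$ and with $M$ respectively, on which $u$ (resp.\ $M$) is $L$-minimal; hence every vertex of $P_u$ is $\ge_L u$ and every vertex of $P_M$ is $\ge_L M\ge_L u$, using that $u\le_L M$. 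Concatenating $P_M$ walked from $M$ to $v$ with $P_u$ walked from $v$ to $u$, and then shortcutting the resulting walk into a genuine path, I obtain a path connecting $M$ and $u$ of length at most $2r$ all of whose vertices are $\ge_L u$; since $u$ lies on it, $u$ is its $L$-minimal vertex, and therefore $u\in\WReach_{2r}[G',L,M]$. The point of this inclusion is that $\WReach_{2r}[G',L,M]$ has size at most $\wcol_{2r}(G')$, and $L$ was chosen so that $\wcol_{2r}(G')\le\fwcol(2r,\epsilon_4)\cdot|A|^{\epsilon_3}$; thus once the anchor $M$ is fixed, all sets $Y[v]$ carrying it are contained in a common set of size at most $\fwcol(2r,\epsilon_4)\cdot|A|^{\epsilon_3}$.

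Next I would fix an anchor $M\in\bigcup_{v\in V(G')}Y[v]$ and count the $Y[v]$ with $\max_L Y[v]=M$. By definition $Y[v]=\WReach_r[G',L,v]\cap\WReach_r[G',L,A]$, so combining with the inclusion above,
\[Y[v]\ =\ \WReach_r[G',L,v]\ \cap\ \big(\WReach_r[G',L,A]\cap\WReach_{2r}[G',L,M]\big).\]
Hence the family of such sets $Y[v]$ is contained in the trace of $\WWW_{r,L}=\{\WReach_r[G',L,v]\colon v\in V(G')\}$ onto the set $\WReach_r[G',L,A]\cap\WReach_{2r}[G',L,M]$. As the VC-dimension of a trace never exceeds that of the ambient family, \cref{lem:bd-vcdimension} bounds this VC-dimension by $x(r)$, while $|\WReach_r[G',L,A]\cap\WReach_{2r}[G',L,M]|\le|\WReach_{2r}[G',L,M]|\le\fwcol(2r,\epsilon_4)\cdot|A|^{\epsilon_3}$; so by the Sauer--Shelah Lemma (\cref{thm:sauer_shelah} together with the remark after it) there are at most $2\big(\fwcol(2r,\epsilon_4)\cdot|A|^{\epsilon_3}\big)^{x(r)}=2\,\fwcol(2r,\epsilon_4)^{x(r)}\cdot|A|^{\epsilon_1}$ sets $Y[v]$ carrying the anchor $M$, recalling $\epsilon_1=\epsilon_3\cdot x(r)$.

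Finally I would sum over anchors. Every anchor lies in $\bigcup_{v\in V(G')}Y[v]$, whose cardinality was bounded earlier in this proof by $\fwcol(2r,\epsilon_4)\cdot|A|^{1+\epsilon_3}$; allowing one further set for the possibility that $Y[v]=\emptyset$ then gives
\[\big|\{Y[v]\colon v\in V(G')\}\big|\ \le\ 1+\fwcol(2r,\epsilon_4)\cdot|A|^{1+\epsilon_3}\cdot 2\,\fwcol(2r,\epsilon_4)^{x(r)}\cdot|A|^{\epsilon_1}\ =\ 1+2\,\fwcol(2r,\epsilon_4)^{x(r)+1}\cdot|A|^{1+\epsilon_1+\epsilon_3},\]
which is exactly the claimed bound. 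The step I expect to be the main obstacle is the structural inclusion $Y[v]\subseteq\WReach_{2r}[G',L,M]$, and in particular choosing the right orientation: one must anchor at the $L$-\emph{largest} vertex of $Y[v]$, not the smallest, because weak reachability propagates towards $L$-small vertices, and it is precisely this choice that forces $Y[v]$ into the (small) weakly $2r$-reachable set of the anchor. The remaining pieces --- turning a walk into a path, monotonicity of VC-dimension under traces, and the arithmetic of the $\epsilon_i$'s --- are routine.
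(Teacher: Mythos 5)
Your proposal is correct and follows essentially the same route as the paper: anchor each nonempty $Y[v]$ at its $L$-maximal element, use the concatenation argument to get $Y[v]\subseteq\WReach_{2r}[G',L,M]$, then bound the number of sets per anchor via \cref{lem:bd-vcdimension} and the Sauer--Shelah Lemma on a ground set of size at most $\wcol_{2r}(G')$, and finally sum over anchors in $\bigcup_v Y[v]$. The only cosmetic difference is that you take traces on $\WReach_r[G',L,A]\cap\WReach_{2r}[G',L,M]$ while the paper uses $S_y=\bigcup\gamma^{-1}(y)$, which changes nothing substantive.
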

  \begin{clproof}
  Let $\YYY=\{Y[v] \colon v\in V(G')\}\setminus \{\emptyset\}$ be the family of all non-empty sets of the form $Y[v]$ for $v\in V(G')$; it suffices to show that
  $|\YYY|\leq \fwcol(2r,\epsilon_4)^{x(r)+1} \cdot |A|^{1+\epsilon_1+\epsilon_3}$.
  Define mapping $\gamma\colon \YYY\rightarrow V(G')$ as follows: for $Z\in \YYY$, $\gamma(Z)$ is the largest element of $Z$ with respect to $L$.

  Take any $v\in V(G')$ with $Y[v]\neq \emptyset$, and recall that every vertex in $Y[v]$ is weakly $r$-reachable from $v$. Observe that every vertex $w\in Y[v]$ is weakly
  $2r$-reachable from $\gamma(Y[v])$. To see this, concatenate the two paths of length at most $r$ that certify that $w\in \WReach_r[G',L,v]$ and $\gamma(Y[v])\in \WReach_r[G',L,v]$,
  and note that this path of length at most $2r$ certifies that $w\in \WReach_{2r}[G',L,\gamma(Y[v])]$.
  Consequently, for every $y\in \gamma(\YYY)$, we have
  \[
  \bigcup\gamma^{-1}(y)\subseteq \WReach_{2r}[G',L,y].
  \]
  This implies that the union of all sets of $\YYY$ that choose the same $y$
  via $\gamma$ has size at most $\wcol_{2r}(G')$.

  Fix any $y\in \gamma(\YYY)$ and denote $S_y=\bigcup\gamma^{-1}(y)$
  Let us count how many distinct subsets of $S_y$ belong to $\YYY$.
  Every such $Z=Y[v]$, as a subset of $S_y$, satisfies
  $Y[v]\cap S_y = Y[v] = \WReach_r[G'L,v] \cap \WReach_r[G',L,A]$. As
  for all $v$ the set $\WReach_r[G',L,A]$ is the same, this means that
  the number of different $Y[v]\in \YYY$ that are mapped to a fixed $y$ is not larger 
  than the number of different sets $\WReach_r[G',L,v]\cap S_y$, for $v\in V(G')$.
  
  By Lemma~\cref{lem:bd-vcdimension}, the set
  $\WWW_{r,L} = \{ \WReach_r[G',L,v] \suchthat v \in V(G') \}$ has
  VC-dimension at most $x(r)$, and so has the subfamily
  $\{S_y\cap \WReach_r[G',L,v] \suchthat v \in V(G')\}$. 
  Hence, by the Sauer-Shelah Lemma (\cref{thm:sauer_shelah}), we infer that
  \[
  |\{S_y\cap \WReach_r[G',L,v] \colon v\in V(G')\}|\leq 2\cdot |S_y|^{x(r)}.
  \]
  Finally, we observe that $\gamma(\YYY)\subseteq \bigcup \YYY$ and recall that $|\bigcup \YYY|\leq \fwcol(2r,\varepsilon_4)\cdot|A|^{1+\epsilon_3}$, hence
  \begin{eqnarray*}
  |\YYY| & \leq & \sum_{y\in \gamma(\YYY)} |\{S_y\cap \WReach_r[G',L,v] \colon v\in V(G')\}| \\
  &\leq & 2\cdot \sum_{y\in \gamma(\YYY)} |S_y|^{x(r)} \leq 2\cdot |\gamma(\YYY)|\cdot (\wcol_{2r}(G'))^{x(r)} \\
         & \leq & 2\cdot |\bigcup \YYY|\cdot (\fwcol(2r,\epsilon_4)\cdot |A|^{\epsilon_3})^{x(r)} \leq  2\cdot \fwcol(2r,\varepsilon_4)\cdot|A|^{1+\epsilon_3}\cdot \fwcol(2r,\epsilon_4)^{x(r)+1}\cdot |A|^{\epsilon_1}\\
         & = & 2\cdot \fwcol(2r,\epsilon_4)^{x(r)+1}\cdot |A|^{1+\varepsilon_1+\epsilon_3}.
  \end{eqnarray*}
  \end{clproof}

  By combining~\cref{cl:preserve-Gp},~\cref{cl:equiv-refines},~\cref{cl:prof-on-one}, and~\cref{cl:num-Y}, we conclude that
  \begin{eqnarray*}
    \profnum_r(G,A) & \leq &  \profnum_r(G',A) \le \indx(\equiv) \\
    & \leq & \left| \{ Y[v] \mid v\in V(G') \} \right|\cdot g(r,\epsilon_4)\cdot |A|^{\epsilon_2}\\
    & \leq & (1+2\cdot \fwcol(2r,\epsilon_4)^{x(r)+1} \cdot |A|^{1+\epsilon_1+\epsilon_3})\cdot g(r,\epsilon_4)\cdot |A|^{\epsilon_2}\\
    & \leq & 3\cdot \fwcol(2r,\epsilon_4)^{x(r)+1}\cdot g(r,\epsilon_4)\cdot |A|^{1+\epsilon_1+\epsilon_2+\epsilon_3}.
  \end{eqnarray*}
  Thus, if we fix $\epsilon_4>0$ so that $\epsilon_1+\epsilon_2+\epsilon_3<\epsilon$, then we can set $\fnei(r,\epsilon)=3\cdot \fwcol(2r,\epsilon_4)^{x(r)+1}\cdot g(r,\epsilon_4)$.
\end{proof}

\section{Kernelization for distance-$r$ dominating sets}\label{sec:kernel}

In this section we use the neighborhoods complexity tools to prove~\cref{thm:main-kernel}.
Our proof will rely on the approach of Drange et al.~\cite{drange2016kernelization}, and therefore we need to first explain some tools borrowed from there,
in particular we recall the notion of $r$-projection and introduce $r$-projection profiles.
Throughout the section we fix a nowhere dense class $\CCC$.
Whenever we say that the running time of some algorithm on a graph $G$ is {\em{polynomial}}, we mean that it is of the form $\Oof((|V(G)|+|E(G)|)^\cst)$,
where $\cst$ is a universal constant that is independent of $\CCC$, $r$, $\epsilon$, or any other constants defined in the context. However,
the constants hidden in the $\Oof(\cdot)$-notation may depend on $\CCC$, $r$, and $\epsilon$.

\paragraph*{Projections and projection profiles.}
Let $G\in \CCC$ be a graph and let $A\subseteq V(G)$ be a subset of vertices. For vertices $v\in A$ and $u\in V(G)\setminus A$, a path $P$ connecting $u$ and $v$ is called {\em{$A$-avoiding}}
if all its vertices apart from $v$ do not belong to $A$. For a positive integer $r$, the {\em{$r$-projection}} of any $u\in V(G)\setminus A$ on $A$, denoted $M^G_r(u,A)$ is the set of all vertices $v\in A$ that
can be connected to $u$ by an $A$-avoiding path of length at most $r$. The {\em{$r$-projection profile}} of a vertex $u\in V(G)\setminus A$ on $A$ is a function $\rho^G_r[u,A]$ mapping vertices of
$A$ to $\{0,1,\ldots,r,\infty\}$, defined as follows: for every $v\in A$, the value $\rho^G_r[u,A](v)$ is the length of a shortest $A$-avoiding path connecting $u$ and~$v$, and~$\infty$ in case this length
is larger than $r$. Similarly as for $r$-neighborhoods and $r$-distance profiles, we define 
\[\projnum_r(G,A)=|\{M_r^G(u,A)\colon u\in V(G)\setminus A\}|\quad\textrm{and}\quad \projprof_r(G,A)=|\{\rho_r^G[u,A]\colon u\in V(G)\setminus A\}|\]
to be the number of different $r$-projections and $r$-projection profiles realized on $A$, respectively. Clearly, again it always holds that $\projnum_r(G,A)\leq \projprof_r(G,A)$.

In their kernelization algorithm, Drange et al.~\cite{drange2016kernelization} use a linear bound on the number of different $r$-projections in any graph class of bounded expansion.
We now verify that, in the nowhere dense setting the near-linear bound on the number of different $r$-distance profiles can be used also to give a near-linear bound the number of different $r$-projection profiles.
The proof is based on the idea of creating a ``layered'' graph, which was also used by Drange et al.~\cite{drange2016kernelization}.

\begin{lemma}\label{lem:projection-complexity}
  Suppose $\CCC$ is a nowhere dense class of graphs. Then there is 
  a function $\fproj(r,\epsilon)$ such that for every $r\in \N$, 
  $\epsilon>0$, graph $G\in \CCC$, and vertex subset $A\subseteq V(G)$, 
  it holds that $\projprof_r(G,A)\leq \fproj(r,\epsilon)\cdot |A|^{1+\epsilon}$.
\end{lemma}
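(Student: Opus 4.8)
The plan is to derive the claimed bound from the near-linear bound on distance-profile complexity in \cref{thm:main-profiles}, by means of the \emph{layered graph} construction used by Drange et al.~\cite{drange2016kernelization} in the bounded-expansion setting. First I would build, from $G$ and $A$, a layered graph $\widehat G$ together with a distinguished vertex set $\widehat A\subseteq V(\widehat G)$ and, for each $u\in V(G)\setminus A$, a designated vertex $\hat u\in V(\widehat G)$. Roughly, $\widehat G$ has $r+1$ levels: the levels carry copies of the vertices of $G-A$, consecutive levels are joined according to the edges of $G-A$, and the vertices of $A$ are attached to these levels in a way that prevents a path of $\widehat G$ from using a copy of a vertex of $A$ as an internal ``shortcut''. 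The upshot is that, for every $u\in V(G)\setminus A$ and $v\in A$, the length of a shortest $A$-avoiding $u$--$v$ path in $G$ (capped at $r$) is recorded faithfully by the distances in $\widehat G$ from $\hat u$ to the copies of $v$ lying in $\widehat A$; consequently $\rho^G_r[u,A]$ is a function of the distance profile $\pi^{\widehat G}_{c_2(r)}[\hat u,\widehat A]$, where $c_2(r)$ depends only on $r$, and moreover $|\widehat A|\le c_1(r)\cdot|A|$ for a constant $c_1(r)$ depending only on $r$. Hence $\projprof_r(G,A)\le\profnum_{c_2(r)}(\widehat G,\widehat A)$.

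Second, I would check that $\widehat G$ belongs to a nowhere dense class depending only on $\CCC$ and $r$. Indeed, $\widehat G$ is obtained from $G$ by taking a bounded number (namely $r+1$) of copies of the vertex set and inserting edges only between copies of vertices that are adjacent or equal in $G$, so $\widehat G$ is a subgraph of a blow-up of $G$ of bounded multiplicity; that such blow-ups of a nowhere dense class again form a nowhere dense class is standard and is the nowhere-dense counterpart of the fact exploited by Drange et al.\ for bounded-expansion classes (it can also be verified directly by estimating the edge density of shallow minors: a depth-$d$ minor of the blow-up projects onto a family of connected, radius-$\le d$ subsets of $G$ meeting each vertex a bounded number of times). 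Applying \cref{thm:main-profiles} to this nowhere dense class yields a function $h$ with
\[\profnum_{c_2(r)}(\widehat G,\widehat A)\ \le\ h(c_2(r),\epsilon)\cdot|\widehat A|^{1+\epsilon}\ \le\ h(c_2(r),\epsilon)\cdot c_1(r)^{1+\epsilon}\cdot|A|^{1+\epsilon},\]
so that $\fproj(r,\epsilon):=h(c_2(r),\epsilon)\cdot c_1(r)^2$ works (we may assume $\epsilon\le 1$, and the cases $|A|\le 1$ are trivial, as there are then at most $r+1$ projection profiles on $A$).

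I expect the technical heart, and the main obstacle, to be the precise design of the layered graph: one must attach the vertices of $A$ so that no path in $\widehat G$ can ``cheat'' by routing through a vertex of $A$ in a way not realized by an honest $A$-avoiding path of $G$, while keeping $|\widehat A|$ linear in $|A|$. This is exactly the construction of Drange et al.~\cite{drange2016kernelization}, which we reuse; the genuinely new ingredients are only the remark that $\widehat G$ stays nowhere dense and the invocation of \cref{thm:main-profiles}. Verifying the identity expressing $\rho^G_r[u,A]$ through $\pi^{\widehat G}_{c_2(r)}[\hat u,\widehat A]$, and the final arithmetic, are routine.
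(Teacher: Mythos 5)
Your proposal is correct and follows essentially the same route as the paper: a layered graph with $r+1$ levels built from $G$ (a subgraph of the $(r+1)$-blowup $G\bullet K_{r+1}$, hence lying in a nowhere dense class depending only on $\CCC$ and $r$), in which edges toward the next layer are forbidden from copies of $A$-vertices so that distance profiles of the layer-$0$ copies on $A\times\{0,\ldots,r\}$ encode $r$-projection profiles, followed by an application of \cref{thm:main-profiles} and the observation $|A\times\{0,\ldots,r\}|=(r+1)|A|$. The only cosmetic difference is that the paper keeps a copy of every vertex (including those of $A$) in each layer and simply omits the outgoing inter-layer edges at $A$-copies, rather than attaching $A$ separately.
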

\begin{proof}
For a graph $G$ and a positive integer $c$, the {\em{$c$-blowup of $G$}}, denoted $G\bullet K_c$, is defined as follows\footnote{The $c$-blowup is a special case of a more general operation called 
{\em{lexicographic product}}, but we will not use this operation here.}.
The vertex set of $G\bullet K_c$ consists of pairs of the form $(u,i)$, where $u\in V(G)$ and $i\in \{1,2,\ldots,c\}$;
vertex $(u,i)$ is called the {\em{$i$-th}} copy of $u$.
The copies of every vertex $u\in V(G)$ are made into a clique, i.e., there is an edge $(u,i)(u,j)$ for all $u\in V(G)$ and $1\leq i<j\leq c$.
Moreover, whenever there is an edge $\{u,v\}\in V(G)$, then we make every copy of $u$ adjacent to every copy of~$v$, i.e., we put an edge $\{(u,i)(v,j)\}$ for all $1\leq i,j\leq c$.

For a class $\CCC$, by $\CCC^c$ we denote the closure under subgraphs of the class $\{G\bullet K_c\colon G\in \CCC\}$. 
It is well-known that if $\CCC$ is nowhere dense, then for any fixed positive integer $c$, the class $\CCC^c$ is also nowhere dense; see e.g.\ Chapter 4.7 of~\cite{sparsity}, or, simply observe that if $\CCC$ is uniformly quasi-wide, say with margins $s$ and $N$, then $\CCC^c$ is uniformly quasi-wide with margins $c\cdot s$ and $c\cdot N$.

Let us fix $r\in \N$ and $\epsilon>0$. Given $G\in \CCC$ and $A\subseteq V(G)$, we construct a graph $H$ as follows.
The vertex set of $H$ consists of pairs $(u,i)$, where $u\in V(G)$ and $i\in \{0,1,\ldots,r\}$.
Again, vertex $(u,i)$ is called the {\em{$i$-th}} copy of $u$, while the subset $V(G)\times \{i\}$ will be called the {\em{$i$-th layer}} of $H$.
The edge set of $H$ is defined as follows: for each edge $\{u,v\}\in
E(G)$ and each $i\in \{1,2,\ldots,r\}$, we put an edge $\{(u,i-1),(v,i)\}$, 
but only if $u\notin A$, and we put an edge $\{(v,i-1),(u,i)\}$, but only if $v\notin A$.
It is easy to see that $H$ is a subgraph of $G\bullet K_{r+1}$, the $(r+1)$-blowup of $G$, hence $H\in \CCC^{r+1}$.

Let $B=A\times \{0,1,\ldots,r\}$. We now prove that for every $u\in V(G)\setminus A$, the $r$-projection profile of $u$ on $A$ in $G$ can be uniquely decoded from the $r$-distance profile of $(u,0)$ on $B$ in $H$.

\begin{claim}\label{cl:encode}
For every $u\in V(G)\setminus A$ and $v\in A$, the value $\rho^G_r[u,A](v)$ is equal to the smallest index~$i$ such that $\pi^H_r[(u,0),B]((v,i))=i$, or $\infty$ if no such index exists.
\end{claim}
\begin{clproof}
Observe that, for each $i\in \{0,1,\ldots,r\}$, the distance from $(u,0)$ to the $i$-th layer of~$H$ is at least $i$, hence we always have $\pi^H_r[(u,0),B]((v,i))\geq i$.
To prove the claim, we show inequalities in two directions.

In one direction, suppose that $P$ is a shortest $A$-avoiding path connecting $u$ and $v$.
Let $P=(u=x_0,x_1,\ldots,x_i=v)$, where the length of $P$ is $i$, and suppose $i\leq r$.
Then it follows that $(x_0,0)(x_1,1)\ldots(x_i,i)$ is a path in $H$, certifying that $\pi^H_r[(u,0),B]((v,i))\leq i$.
Together with the inequality from the previous paragraph this implies that $\pi^H_r[(u,0),B]((v,i))=i$.

In the other direction, suppose $\pi^H_r[(u,0),B]((v,i))=i$ for some $i$, and let $Q$ be the path in~$H$ certifying this.
Since $Q$ has length $i$ and connects $(u,0)$ with a vertex of $i$-th layer, it follows that~$Q$ has to have the form $(x_0,0),(x_1,1)\ldots,(x_{i-1},i-1),(x_i,i)$,
where $u=x_0,x_1,\ldots,x_{i-1},x_i=v$ are vertices of $H$. 
By the construction of $H$, we infer that $\{x_{i-1},x_i\}$ is an edge in $G$ for all $i=1,2,\ldots,r$, and moreover $x_0,x_1,\ldots,x_{i-1}$ do not belong to $A$.
Therefore, $(x_0,x_1,\ldots,x_i)$ is a walk of length $i$ in $G$ connecting $u$ with $v$, where only the last vertex belongs to $A$.
It follows that in $G$ there is an $A$-avoiding path of length at most $i$ connecting $u$ with $v$, which certifies that $\rho^G_r[u,A](v)\leq i$.
\end{clproof}

By~\cref{cl:encode}, we have
\[\projprof_r(G,A)\leq \profnum_r(H,B).\]
On the other hand, by~\cref{thm:main-profiles} we have
\[\profnum_r(H,B)\leq \fnei'(r,\epsilon)\cdot |B|^{1+\epsilon}=\fnei'(r,\epsilon)\cdot (r+1)^{1+\epsilon}\cdot |A|^{1+\epsilon},\]
where function $\fnei'(r,\epsilon)$ is given for the (nowhere dense) class $\CCC^{r+1}$.
However, $\CCC^{r+1}$ depends only on $\CCC$ and $r$, hence $\fnei'(r,\epsilon)$ depends only on $\CCC$, $r$, and $\epsilon$.
Therefore, we can set $\fproj(r,\epsilon)=\fnei'(r,\epsilon)\cdot (r+1)^{1+\epsilon}$.
\end{proof}

We next recall the main tool for projections proved by Drange et al.~\cite{drange2016kernelization}, namely the {\em{Closure Lemma}}.
Intuitively, it says that any vertex subset $A\subseteq V(G)$ can be ``closed'' to a set $\cl_r(A)$ that is not much larger than $A$, such that all $r$-projections on $\cl_r(A)$ are small.
The next lemma follows from a straightforward adaptation of the proof of Drange et al.

\begin{lemma}[Lemma 2.9 of~\cite{drange2016kernelization}, adjusted]\label{lem:closure}
There is a function $\fcl(r,\epsilon)$ and a polynomial-time algorithm that, given $G\in \CCC$, $X\subseteq V(G)$, $r\in \N$ and $\epsilon>0$, computes the {\em{$r$-closure}} of $X$, denoted $\cl_r(X)$
with the following properties. 
\begin{itemize}
  \item $X\subseteq \cl_r(X)\subseteq V(G)$;
  \item $|\cl_r(X)|\leq \fcl(r,\epsilon)\cdot |X|^{1+\epsilon}$; and
  \item $|M_r^G(u,\cl_r(X))|\leq \fcl(r,\epsilon)\cdot |X|^{\epsilon}$ for each $u\in V(G)\setminus \cl_r(X)$.
\end{itemize}
\end{lemma}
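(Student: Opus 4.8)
The plan is to closely follow the proof of the Closure Lemma of Drange et al.~\cite{drange2016kernelization}, with the single substantive change that wherever they invoke the \emph{linear} bound on the number of distinct $r$-projections available in bounded expansion classes, we instead invoke the \emph{near-linear} bound of \cref{lem:projection-complexity}. Fix $r$ and $\epsilon$, and an auxiliary parameter $\epsilon'>0$ to be pinned down at the end as a small fraction of $\epsilon$. As in Drange et al., the closure is built in a number of rounds that depends only on $r$ (morally, one round per radius $1,2,\ldots,r$): we produce a nested sequence $X = Y_0 \subseteq Y_1 \subseteq \cdots \subseteq Y_\ell$ with $\ell = \ell(r)$, where $Y_i$ arises from $Y_{i-1}$ by adding, for every vertex of $V(G)\setminus Y_{i-1}$ whose projection onto $Y_{i-1}$ is still ``too large'', a bounded-size set of \emph{witness vertices} taken from the $Y_{i-1}$-avoiding paths certifying that projection. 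The witness-selection rule and the proof that it drives all $r$-projections down to the target threshold $\tau := \fcl(r,\epsilon)\cdot|X|^{\epsilon}$ are exactly those of Drange et al.; we finally output $\cl_r(X):=Y_\ell$.

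The point at which \cref{lem:projection-complexity} enters is the control of $|Y_i|$. Vertices with the same $r$-projection profile onto $Y_{i-1}$ can be assigned a common witness set, so at most $c_r\cdot\projprof_r(G,Y_{i-1})$ vertices are added in round $i$, where $c_r$ depends only on $r$; by \cref{lem:projection-complexity} this is at most $c_r\cdot\fproj(r,\epsilon')\cdot|Y_{i-1}|^{1+\epsilon'}$. Hence $|Y_i| \le h(r,\epsilon')\cdot|Y_{i-1}|^{1+\epsilon'}$ for a suitable function $h$, and unrolling over the $\ell$ rounds gives $|\cl_r(X)| = |Y_\ell| \le H(r,\epsilon')\cdot|X|^{(1+\epsilon')^{\ell}}$ for a suitable $H$. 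Choosing $\epsilon'$ small enough that $(1+\epsilon')^{\ell}\le 1+\epsilon$ (and small enough for the inequality defining $\tau$) yields the size bound $|\cl_r(X)|\le \fcl(r,\epsilon)\cdot|X|^{1+\epsilon}$ together with the projection bound $|M^G_r(u,\cl_r(X))|\le \fcl(r,\epsilon)\cdot|X|^\epsilon$, with $\fcl(r,\epsilon)$ depending only on $\CCC$, $r$, and $\epsilon'$. The algorithm runs in polynomial time: there are only $\ell(r)$ rounds; in each round the relevant projections are computed by running BFS from every vertex of the auxiliary layered graph used in the proof of \cref{lem:projection-complexity}; identifying the oversized projections and their witness sets is elementary; and $|\cl_r(X)|$ is polynomially bounded in $|V(G)|$.

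The main obstacle is precisely this bookkeeping. In the bounded expansion setting of Drange et al., each round enlarges the set by only a constant factor, so the output size ends up genuinely linear; here each round may enlarge it by a factor as large as $|X|^{O(\epsilon')}$, and one must verify that $\ell$-fold compounding of this inflation still fits inside the slack $|X|^{\epsilon}$ --- which works only because $\ell$ is a function of $r$ alone, independent of $|X|$. The secondary thing to check is that Drange et al.'s witness-selection step and its correctness proof use nothing about bounded expansion beyond the projection-count bound; inspecting their argument confirms that the remaining ingredients (BFS, shortest paths, and elementary counting over projection profiles) are insensitive to the passage from bounded expansion to nowhere dense, which is why the adaptation is indeed ``straightforward''.
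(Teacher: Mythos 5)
Your adaptation targets the wrong ingredient of Drange et al.'s argument. In their proof of Lemma 2.9 the linear bound on the number of distinct $r$-projections is not used at all; the only bounded-expansion-specific quantity is $\xi=\lceil 2\grad_{r-1}(\CCC)\rceil$, i.e.\ the edge density of depth-$(r-1)$ minors. Their construction is a single loop, not a bounded number of radius-indexed rounds: while some vertex $u\notin Y$ has $|M_r^G(u,Y)|\geq 2\xi$, one selects $2\xi$ of its $Y$-avoiding paths, contracts their union (minus the endpoints in $Y$) onto $u$, and adds the contracted vertex to $Y$; termination after at most $|X|$ iterations is proved by a density count --- otherwise $Y$ would span more than $\xi\cdot(2|X|+1)$ edges inside a depth-$(r-1)$ minor on at most $2|X|+1$ vertices, contradicting the density bound. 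Consequently the correct adjustment for a nowhere dense class, and the one the paper makes, is to replace the constant $\xi$ by $\fgrad(r-1,\epsilon)\cdot |X|^{O(\epsilon)}$ via \cref{lem:gradbound}, which is legitimate precisely because the shallow minor in question has only polynomially many (in fact $O(|X|)$) vertices; \cref{lem:projection-complexity} plays no role in \cref{lem:closure}.

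Beyond this mischaracterization, your own scheme has concrete gaps. First, vertices with equal $r$-projection profiles on $Y_{i-1}$ need not admit a common witness set: the profile only records which vertices of $Y_{i-1}$ are reachable by avoiding paths and at what distance, not the paths themselves, so the internal vertices of the certifying paths of two profile-equivalent vertices can be completely disjoint, and adding witnesses extracted from one representative's paths does nothing to shrink the projections of the other members of the class. Hence the bound ``at most $c_r\cdot\projprof_r(G,Y_{i-1})$ vertices added per round'' is unsubstantiated. Second, you give no argument that a bounded number $\ell(r)$ of rounds drives every projection below the threshold $\tau$; in Drange et al.\ the analogous termination/size claim is exactly the density counting argument, which is the one step that \emph{is} sensitive to the passage from bounded expansion to nowhere dense and whose repair (via \cref{lem:gradbound}) constitutes the entire content of the adjustment. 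So the concluding assertion that the remaining ingredients are insensitive to this passage inverts the actual situation, and as written the proposal does not yield a proof of \cref{lem:closure}.
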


Let us elaborate on the adaptation. Drange et al.\ work in the bounded expansion setting, and they give the following upper bounds on $|\cl_r(X)|$ and $|M_r^G(u,\cl_r(X))|$, 
for each $u\in V(G)\setminus X$:
\[|\cl_r(X)|\leq ((r-1)\xi+1)\cdot |X|\quad\textrm{and}\quad |M_r^G(u,\cl_r(X))|\leq \xi(1+(r-1)\xi),\]
where $\xi=\lceil 2\grad_{r-1}(\CCC)\rceil$; here $G$ is assumed to belong to a class of bounded expansion $\CCC$. In the nowhere dense setting, $\grad_{r-1}(\CCC)$ is not bounded by a constant,
but~\cref{lem:gradbound} ensures that the edge density of any $r$-shallow minor $H$ of a graph from $\CCC$ is bounded by $\fgrad(r,\epsilon)\cdot |V(H)|^\epsilon$. In the proof of Drange et al.,
the only place where the parameter $\xi$ comes from is to estimate the edge density in a graph that is an $(r-1)$-shallow minor of the original graph $G$, in which, moreover, the number of vertices is
polynomial in $|X|$. It follows by~\cref{lem:gradbound} that in the nowhere dense setting, we can substitute $\xi$ with $f(r,\epsilon)\cdot |X|^{\epsilon}$ 
for some function $f$, thus obtaining the statement of~\cref{lem:closure} after rescaling $\epsilon$.

Finally, we need another lemma from Drange et al., called the {\em{Short Paths Closure Lemma}}. 
Here, it is shown that a set can be closed without blowing up its size too much, so that short distances between its elements are preserved in the subgraph induced by the closure.
Again, the proof is a straightforward adaptation of the proof of Drange et al.

\begin{lemma}[Lemma 2.11  of~\cite{drange2016kernelization}, adjusted]\label{lem:pathsclosure}
There is a function $\fpaths(r,\epsilon)$ and a polynomial-time algorithm which on input $G\in \CCC$, $X\subseteq V(G)$, $r\in \N$, and $\epsilon>0$, 
computes a superset $X'\supseteq X$ of vertices with the following properties:
\begin{itemize}
\item whenever $\dist_G(u,v)\leq r$ for $u,v\in X$, then $\dist_{G[X']}(u,v)=\dist_G(u,v)$; and
\item $|X'|\leq \fpaths(r,\epsilon)\cdot |X|^{1+\epsilon}$.
\end{itemize}
\end{lemma}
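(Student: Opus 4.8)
The plan is to follow the proof of Lemma~2.11 of Drange et al.~\cite{drange2016kernelization}, with the linear-size guarantees they use replaced by the almost-linear ones available in the nowhere dense setting, namely~\cref{lem:closure} and~\cref{lem:projection-complexity}. First I would apply the algorithm of~\cref{lem:closure} to $X$ with a small parameter $\epsilon_1$, obtaining $Y:=\cl_r(X)$ with $X\subseteq Y$, $\abs{Y}\leq \fcl(r,\epsilon_1)\cdot\abs{X}^{1+\epsilon_1}$, and $\abs{M^G_r(w,Y)}\leq \fcl(r,\epsilon_1)\cdot\abs{X}^{\epsilon_1}=:t$ for every $w\in V(G)\setminus Y$. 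The output set $X'$ will always contain $Y$, so every edge of $G$ between two vertices of $Y$ survives in $G[X']$; the remaining task is to add, for suitably chosen pairs of vertices of $Y$, the vertex sets of short connecting walks. The guiding observation is that if $P$ is a shortest $u$--$v$ path in $G$ with $u,v\in X$ and $\abs{E(P)}\leq r$, and $u=a_0,a_1,\ldots,a_k=v$ are the vertices of $V(P)\cap Y$ listed in the order they appear on $P$, then $P$ splits into segments $P_1,\ldots,P_k$, where $P_i$ runs from $a_{i-1}$ to $a_i$ and has all internal vertices outside $Y$; each $P_i$ has length $\ell_i\leq r$ and $\sum_i\ell_i=\dist_G(u,v)$.

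For the construction I would classify the vertices of $V(G)\setminus Y$ by their $r$-projection profile on $Y$; let $\PPP$ be the finite set of profiles $\rho^G_r[w,Y]$ realized by some $w\in V(G)\setminus Y$. For each $\rho\in\PPP$ fix a representative vertex $w_\rho$ realizing it, and for each pair of distinct vertices $a,b\in M^G_r(w_\rho,Y)$ with $\rho(a)+\rho(b)\leq r$, compute by breadth-first search a shortest $Y$-avoiding path from $w_\rho$ to $a$ and one from $w_\rho$ to $b$ (these have lengths exactly $\rho(a)$ and $\rho(b)$), concatenate them into a walk $Q_{\rho,a,b}$ of length $\rho(a)+\rho(b)\leq r$ from $a$ to $b$, and add all its vertices to $X'$. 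Running the closure algorithm, computing all projection profiles (a bounded number of BFS-type traversals per vertex), and computing the walks $Q_{\rho,a,b}$ are all doable in polynomial time with the $\Oof(\cdot)$-constants depending only on $\CCC$, $r$, and $\epsilon$.

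For correctness I would show $\dist_{G[X']}(u,v)=\dist_G(u,v)$ for $u,v\in X$ with $\dist_G(u,v)\leq r$. The inequality ``$\geq$'' is immediate since $G[X']$ is an induced subgraph of $G$. For ``$\leq$'', take a shortest path $P$ and its $Y$-decomposition $P_1,\ldots,P_k$ as above; it suffices to prove $\dist_{G[X']}(a_{i-1},a_i)\leq\ell_i$ for each $i$ and sum. If $\ell_i\leq 1$ this is clear, since $a_{i-1},a_i\in Y\subseteq X'$ and the possible edge between them lies in $G$. If $\ell_i\geq 2$, pick an internal vertex $w$ of $P_i$ and let $\rho=\rho^G_r[w,Y]$. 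The subpath of $P_i$ from $w$ to $a_{i-1}$ is a shortest path in $G$ that is moreover $Y$-avoiding, so $\rho(a_{i-1})$ equals its length $\dist_{P_i}(w,a_{i-1})$; likewise $\rho(a_i)=\dist_{P_i}(w,a_i)$, whence $\rho(a_{i-1})+\rho(a_i)=\ell_i\leq r$. Thus the pair $a_{i-1},a_i$ was processed for the profile $\rho$, and $Q_{\rho,a_{i-1},a_i}$, built from $w_\rho$ (which realizes $\rho$), has length $\rho(a_{i-1})+\rho(a_i)=\ell_i$ and all its vertices in $X'$, giving $\dist_{G[X']}(a_{i-1},a_i)\leq\ell_i$.

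Finally, the size bound, which I expect to be the one point requiring care. By~\cref{lem:projection-complexity}, $\abs{\PPP}=\projprof_r(G,Y)\leq \fproj(r,\epsilon_2)\cdot\abs{Y}^{1+\epsilon_2}$; for a fixed $\rho$, every processed pair lies in $M^G_r(w_\rho,Y)$, so there are at most $\binom{t}{2}\leq t^2$ of them, each contributing at most $r+1$ new vertices. Hence
\begin{align*}
\abs{X'} &\leq \abs{Y}+(r+1)\cdot\projprof_r(G,Y)\cdot t^2 \\
&\leq \fcl(r,\epsilon_1)\abs{X}^{1+\epsilon_1}+(r+1)\,\fproj(r,\epsilon_2)\,\fcl(r,\epsilon_1)^{3+\epsilon_2}\cdot\abs{X}^{(1+\epsilon_1)(1+\epsilon_2)+2\epsilon_1},
\end{align*}
and the exponent in the dominant term is $(1+\epsilon_1)(1+\epsilon_2)+2\epsilon_1=1+3\epsilon_1+\epsilon_2+\epsilon_1\epsilon_2$. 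Choosing $\epsilon_1,\epsilon_2>0$ small enough that $3\epsilon_1+\epsilon_2+\epsilon_1\epsilon_2\leq\epsilon$ makes this at most $1+\epsilon$, and collecting the constants (all depending only on $\CCC$, $r$, $\epsilon$) into $\fpaths(r,\epsilon)$ yields $\abs{X'}\leq\fpaths(r,\epsilon)\cdot\abs{X}^{1+\epsilon}$. The genuine subtlety is that naively inserting shortest paths for all $\binom{\abs{Y}}{2}$ pairs of $Y$ would be quadratic; grouping the connecting vertices by $r$-projection \emph{profile} (so that a single representative per group determines the correct lengths), combined with the small-projection guarantee of~\cref{lem:closure}, replaces the factor $\abs{Y}^2$ by $\abs{Y}^{1+\epsilon_2}\cdot\abs{X}^{2\epsilon_1}$.
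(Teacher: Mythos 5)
Your proof is correct and is essentially the paper's intended argument: the paper only sketches the lemma as a ``straightforward adaptation'' of Lemma~2.11 of Drange et al., obtained by substituting the bounded-expansion closure and projection bounds with their nowhere-dense counterparts, and your reconstruction does exactly that, using \cref{lem:closure} for the small projections and \cref{lem:projection-complexity} (in place of the linear projection-count bound of Drange et al.) to group the connecting vertices by $r$-projection profile and add one representative walk per profile and pair. The segment decomposition along $Y$-vertices, the observation that subpaths of a shortest path realize the exact profile values, and the $\epsilon$-rescaling in the size bound all match the intended adaptation, so there is nothing to fix.
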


The upper bound on $|X'|$ given by Drange et al. is $|X'|\leq Q_{r-1}(\grad_{r-1}(\CCC))\cdot |X|$, for some polynomial $Q_{r-1}$.
Again, if $\CCC$ is only assumed to be nowhere dense, $\grad_{r-1}(\CCC)$ is not bounded, but the analysis of the proof of Drange et al. reveals that the only place this factor comes
from is an application of the Closure Lemma at the beginning of the proof. After replacing this lemma with its counterpart for a nowhere dense class (\cref{lem:closure}),
factor $Q_{r-1}(\grad_{r-1}(\CCC))$ is substituted with a polynomial of $\fcl(r,\epsilon)\cdot |X|^{\epsilon}$, which is of the form $\fpaths(r,\epsilon)\cdot |X|^{\epsilon}$ after rescaling~$\epsilon$.

\paragraph{Towards the proof.} With all the tools gathered, we can now proceed to the proof of~\cref{thm:main-kernel}. For convenience, we recall its statement.

\setcounter{theorem}{1}

\begin{theorem}[restated]
  Let $\CCC$ be a fixed nowhere dense class of graphs, let $r$ be a fixed positive integer, and let $\epsilon>0$ be any fixed real.
  Then there is a polynomial-time algorithm that, given a graph $G\in \CCC$ and a positive integer $k$,
  returns a subgraph $G'\subseteq G$ and a vertex subset $Z\subseteq V(G')$ with the following properties:
  \begin{itemize}
  \item there is a set $D\subseteq V(G)$ of size at most $k$ which 
  $r$-dominates $G$ if and only if there is a set $D'\subseteq V(G')$
  of size at most~$k$
  which $r$-dominates $Z$ in $G'$; and
  \item $|V(G')|\leq \fker(r,\epsilon)\cdot k^{1+\epsilon}$, for some function $\fker(r,\epsilon)$ depending only on the class $\CCC$.
  \end{itemize}
\end{theorem}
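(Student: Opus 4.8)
The plan is to follow the two-phase kernelization scheme of Drange et al.~\cite{drange2016kernelization}, designed there for classes of bounded expansion, and to replace the three ingredients that rely on bounded expansion by their nowhere dense counterparts. The scheme first performs a \emph{core reduction}: it maintains an \emph{$r$-domination core} $Z\subseteq V(G)$ -- a set with the property that every minimum-size set that $r$-dominates $Z$ also $r$-dominates $V(G)$ -- and repeatedly shrinks it, one vertex at a time, until $|Z|\le f(r,\epsilon)\cdot k^{1+\epsilon}$. It then performs a \emph{dominator reduction}: using the small core $Z$ it carves out an induced subgraph $G'\subseteq G$ of near-linear size that retains all the information needed to $r$-dominate $Z$. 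The three ingredients of Drange et al.\ that do not survive verbatim are: (i) the bootstrap, which in~\cite{drange2016kernelization} is an iterated application of Dvo\v{r}\'ak's constant-factor approximation -- we instead run a single $\Oof(\log \gamma_r(G))$-approximation coming from the bounded VC-dimension of the power graph $G_{\le r}$ (\cref{thm:adleradler}); (ii) the linear bound on the number of $r$-projections, which we replace by the near-linear bound on the number of $r$-projection \emph{profiles} given by \cref{lem:projection-complexity}; and (iii) the large $2r$-independent set that Dvo\v{r}\'ak's algorithm returns on failure and that Drange et al.\ feed into later stages -- we substitute for it a large set that is $2r$-independent only after deleting a small separator, extracted through the polynomial uniform quasi-wideness of \cref{thm:uqw}, whose effective polynomial margins and efficient algorithm are exactly what makes this step possible and polynomial-time; the separator is small and can be folded into the domination budget. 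As in the proof of \cref{thm:main-profiles}, the $\epsilon$-losses incurred by \cref{lem:closure}, \cref{lem:pathsclosure} and \cref{lem:projection-complexity} compose to a bound of the form $\fker(r,\epsilon)\cdot k^{1+\epsilon}$ by starting from a sufficiently small $\epsilon'$ and rescaling at the end.

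The algorithm begins by computing an approximate $r$-dominating set. Viewing \textsc{Distance-$r$ Dominating Set} on $G$ as the set cover instance with ground set $V(G)$ and sets $\{N_r^G[v]\colon v\in V(G)\}$, whose VC-dimension is bounded by a constant depending only on $\CCC$ and $r$ by \cref{thm:adleradler}, the standard machinery for set cover in set systems of bounded VC-dimension yields in polynomial time a set $\tilde D$ that $r$-dominates $G$ with $|\tilde D|=\Oof(\gamma_r(G)\log \gamma_r(G))$. If $|\tilde D|$ exceeds the threshold corresponding to the parameter $k$, we conclude $\gamma_r(G)>k$ and output a small no-instance (this case is handled exactly as in~\cite{drange2016kernelization}); otherwise $|\tilde D|=\Oof(k^{1+\epsilon})$. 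We then set $Z:=V(G)$, which is trivially an $r$-domination core, and enter the core reduction loop: while $|Z|$ exceeds a threshold of the form $f(r,\epsilon)\cdot k^{1+\epsilon}$, we find in polynomial time a vertex $z\in Z$ such that $Z\setminus\{z\}$ is still an $r$-domination core, and delete it. Each round decreases $|Z|$, so the loop terminates after at most $|V(G)|$ rounds.

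The crux is the redundant-vertex step, and it is also where I expect the main difficulty. Following Drange et al., take $W=\cl_{2r}(\tilde D)$ via \cref{lem:closure}, so that $|W|$ is still near-linear in $k$ and every vertex outside $W$ has $\Oof(k^\epsilon)$-bounded $2r$-projection onto $W$; pass, if needed, to a short-paths closure of $W$ using \cref{lem:pathsclosure}, so that distances up to $2r$ among vertices of $W$ are realized inside the induced subgraph. Since $\tilde D\subseteq W$ $r$-dominates $G$, every vertex of $Z$ lies within distance $r$ of $W$, hence every potential dominator of a vertex of $Z$ lies within distance $2r$ of $W$ and is therefore classified by its $2r$-projection profile onto $W$; by \cref{lem:projection-complexity} the number of such profiles is at most $\fproj(2r,\epsilon)\cdot |W|^{1+\epsilon}$, which is near-linear in $k$. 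Combining this with the constant-size information describing how a vertex of $Z$ is reached from $W$, one obtains a near-linear bound on the number of \emph{domination types} of vertices of $Z$. As soon as $|Z|$ exceeds this bound, two vertices of $Z$ share a domination type, and an exchange argument -- in which \cref{thm:uqw} supplies the needed ``almost $2r$-independent'' set and its small separator is added to the budget -- shows that one of them can be removed from $Z$ while preserving the core property. The danger to guard against here is a blow-up: a naive count of $r$-projection \emph{profiles} (as opposed to projections) onto a set would be exponential in the projection size, which is only $\Oof(k^\epsilon)$; \cref{lem:projection-complexity}, proved via the near-linear distance-profile complexity of \cref{thm:main-profiles} and ultimately via stability, is precisely what rules this out, but it must be invoked globally rather than vertex by vertex. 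One must also check that folding the quasi-wide separator into the domination budget does not interfere with the core property, and that the composed $\epsilon$-losses still leave a bound of the form $f(r,\epsilon)\cdot k^{1+\epsilon}$.

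Once $|Z|\le f(r,\epsilon)\cdot k^{1+\epsilon}$, the dominator reduction is comparatively routine. Set $W':=\cl_r(Z)$ via \cref{lem:closure}, so $|W'|$ is near-linear in $k$ and $|M_r^G(u,W')|\le \fcl(r,\epsilon)\cdot|Z|^\epsilon$ for every $u\notin W'$; then apply \cref{lem:pathsclosure} to $W'$ to obtain $W''\supseteq W'$ of near-linear size in which all distances up to $r$ between vertices of $W'$ (in particular of $Z$) are preserved. By \cref{lem:projection-complexity} the vertices of $V(G)\setminus W'$ realize at most $\fproj(r,\epsilon)\cdot|W'|^{1+\epsilon}$ distinct $r$-projection profiles onto $W'$; for each such profile keep one representative vertex $v'$, together with the vertices of a shortest $W'$-avoiding path from $v'$ to each element of $M_r^G(v',W')$, which adds only $\Oof(r)\cdot|Z|^\epsilon$ extra vertices per representative. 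Let $R$ be the set of all these representatives and path vertices, put $G':=G[W''\cup R]$, and keep $Z$ as the annotation. Then $|V(G')|$ is a product of a near-linear-in-$k$ factor and an $\Oof(k^\epsilon)$ factor, hence bounded by $\fker(r,\epsilon)\cdot k^{1+\epsilon}$ after rescaling $\epsilon$. For the equivalence: if $D'\subseteq V(G')$ with $|D'|\le k$ $r$-dominates $Z$ in $G'$, then since $G'$ is an induced subgraph it also does so in $G$, so a minimum-size set $r$-dominating $Z$ in $G$ has size $\le k$ and, as $Z$ is an $r$-domination core, $r$-dominates all of $G$; conversely, given $D\subseteq V(G)$ with $|D|\le k$ $r$-dominating $G$, replace every $v\in D$ that lies outside $W'$ by the representative $v'$ of its profile class -- a short computation using the distances preserved in $W''$ and the added witness paths shows that $v'$ $r$-dominates in $G'$ every vertex of $Z$ that $v$ $r$-dominates in $G$, so the resulting set of size $\le k$ works. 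Verifying these two directions and the polynomial running time of every step is the routine part; the combinatorial weight sits entirely in the redundant-vertex step.
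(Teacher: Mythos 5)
Your overall route is the paper's own: the same two-phase scheme of Drange et al.\ (core reduction, then dominator reduction), with the same three substitutions --- a VC-dimension-based $\Oof(\log \mathsf{OPT})$ approximation in place of iterated Dvo\v{r}\'ak, the near-linear projection-profile bound of \cref{lem:projection-complexity}, and polynomial uniform quasi-wideness (\cref{thm:uqw}) as the source of the almost-independent set --- and your dominator-reduction phase is a correct minor variant of \cref{lem:core-kernel}. The problem is that the pivotal redundant-vertex step (the analogue of \cref{lem:findredundantvertex}), which you yourself identify as the crux, is sketched in a form that does not work. First, ``as soon as two vertices of $Z$ share a domination type, an exchange argument shows one of them can be removed'' is false and misses the actual engine of the proof: one needs a single projection-profile class $\kappa$ so large (of size greater than $\beta^{p(2r)}$) that, after applying \cref{thm:uqw} to extract $S$ with $|S|\leq s(r)$ and a set $L\subseteq\kappa$ that is $2r$-independent in $G-S$, and after a further refinement of $L$ so that all surviving vertices have the \emph{same $r$-distance profile on $S$} (a step absent from your sketch, but needed to argue that the dominating paths of the surviving vertices avoid $S$), one still retains a set $R$ with $|R|$ strictly larger than the maximum projection size plus $|S|$. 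The exchange then trades the $|R|-1$ pairwise-distinct private dominators of $R\setminus\{z\}$ for the strictly smaller set $D_{\mathrm{buy}}=M^G_{3r}(z,\cl_{3r}(X))\cup S$, contradicting minimality of a hypothetical dominator missing $z$. In particular $S$ enters the replacement set $D_{\mathrm{buy}}$ inside this contradiction argument; it is not ``folded into the domination budget'' --- the budget $k$ cannot change, since the output equivalence must hold for the same $k$.

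Second, your radius bookkeeping is off, and your classification is of the wrong objects. The re-domination walk for a vertex $a$ that lost its dominator $d(x)$ concatenates three paths of length at most $r$ each (from $x$ to $d(x)$, from $d(x)$ to $a$, and from $a$ to the $(Z,r)$-dominator $X$), so the vertex of $X_{\cl}\cup S$ that re-dominates $a$ is only guaranteed to lie in the \emph{$3r$}-projection of $x$ (hence of $z$) onto $X_{\cl}$; this is exactly why the paper takes $X_{\cl}=\cl_{3r}(X)$ and classifies the \emph{dominatees} $Z\setminus X_{\cl}$ by their $3r$-projection profiles onto $X_{\cl}$. Your version closes and projects at radius $2r$ and classifies ``potential dominators'' by their $2r$-profiles onto $W$, from which a near-linear count of ``domination types'' of vertices of $Z$ does not follow: a dominatee has many potential dominators, and the set of their profiles is not an object controlled by \cref{lem:projection-complexity}. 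These are not cosmetic issues --- with the parameters and the pigeonhole as you state them, the exchange argument cannot be completed, so the core-reduction loop has no justification. (The remaining deviations are harmless: running the approximation once on $G$ rather than on each intermediate $Z$, and building $G'$ from $\cl_r(Z)$ with representatives and witness paths rather than from $Z$ directly, both work.)
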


\setcounter{theorem}{9}

For the rest of this section let us fix constants $r\in \N$ and $\epsilon>0$; they will be used implicitly in the proofs.
Our proof follows the general strategy proposed by Drange et al.~\cite{drange2016kernelization} for the bounded expansion setting, 
however we replace parts that break in the nowhere dense setting
with the new results. The crucial new tool we use is the near-linear bound on the neighborhood complexity in nowhere dense graphs, more precisely the projection variant (\cref{thm:main-profiles}).
However, there are also other places where a fix is necessary. Here, the polynomial bounds on uniform quasi-wideness (\cref{thm:uqw}) 
proved recently by Kreutzer et al.~\cite{siebertz2016polynomial} will appear useful.

Let us recall some terminology from Drange et al.~\cite{drange2016kernelization}, which is essentially also present in the approach of Dawar and Kreutzer~\cite{DawarK09}. 
For an integer $r\in \N$, graph $G$, and vertex subset $Z\subseteq V(G)$, we say that a subset of vertices $D$ is a {\em{$(Z,r)$-dominator}}
if $Z\subseteq N^G_r(D)$. We write $\mathbf{ds}_r(G,Z)$ for the smallest
$(Z,r)$-dominator in $G$ and $\mathbf{ds}_r(G)$ for the smallest $(V(G),r)$-dominator in $G$. 
The crux of the approach of Drange et al. is to perform kernelization in two phases: first compute a small {\em{$r$-domination core}}, and then reduce the size of the graph in one step.

\begin{definition}
An \emph{$r$-domination core} of $G$ is a subset $Z\subseteq V(G)$ such that every minimum size $(Z,r)$-dominator is also a distance-$r$ dominating set in $G$. 
\end{definition}

We shall prove the following analogue of Theorem 4.11 of~\cite{drange2016kernelization}.

\begin{lemma}\label{lem:findcore}
There exists a function $\fcore(r,\epsilon)$ and a polynomial-time algorithm that, given a graph $G\in \CCC$ and integer $k\in \N$, 
either correctly concludes that $G$ cannot be $r$-dominated by $k$ vertices, or finds an $r$-domination core $Z\subseteq V(G)$ of $G$ of size at most $\fcore(r,\epsilon)\cdot k^{1+\epsilon}$.
\end{lemma}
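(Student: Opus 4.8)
The plan is to follow the two-phase kernelization scheme of Drange et al.~\cite{drange2016kernelization}, of which \cref{lem:findcore} is the first (``core-finding'') phase, and to replace every ingredient that is specific to bounded expansion by its nowhere dense counterpart. First I would settle feasibility and produce a cheap approximate solution. By \cref{thm:adleradler} the hypergraph $(V(G),\{N^G_r[v]:v\in V(G)\})$, whose covers are exactly the $r$-dominating sets of $G$, has VC-dimension bounded by a constant $c(r)$ depending only on $\CCC$ and $r$. Feeding this into the standard $\epsilon$-net / LP-rounding set-cover algorithm for set systems of bounded VC-dimension, run with target value $k$, in polynomial time we either learn that the fractional optimum exceeds $k$ --- hence $\mathbf{ds}_r(G)>k$, and we report that $G$ cannot be $r$-dominated by $k$ vertices --- or we obtain an $r$-dominating set $D_{\mathrm{apx}}$ of $G$ with $|D_{\mathrm{apx}}|=\Oof(k\log k)$. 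In the latter case $D_{\mathrm{apx}}$ $r$-dominates all of $V(G)$, so $\mathbf{ds}_r(G,W)\le|D_{\mathrm{apx}}|$ for every $W\subseteq V(G)$; this is the only bound on optimum sizes I would use. Since $\Oof(k\log k)\subseteq\Oof(k^{1+\epsilon})$ and, more generally, any bound of the form $\operatorname{polylog}(k)\cdot k^{1+\epsilon'}$ with $\epsilon'<\epsilon$ lies in $\Oof(k^{1+\epsilon})$, a final rescaling of $\epsilon$ will turn the output of the next phase into the bound claimed in the lemma.

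The second phase is an iterative pruning of the core. I would start from $Z:=V(G)$, which is trivially an $r$-domination core, and repeatedly locate a \emph{safely removable} vertex $z\in Z$ --- one for which every minimum $(Z\setminus\{z\},r)$-dominator still $r$-dominates $z$ --- and replace $Z$ by $Z\setminus\{z\}$, halting once $|Z|$ drops below the target bound. Removing a safely removable vertex preserves the core property: if $D'$ is a minimum $(Z\setminus\{z\},r)$-dominator then, being a $(Z,r)$-dominator as well, it satisfies $|D'|\ge\mathbf{ds}_r(G,Z)\ge\mathbf{ds}_r(G,Z\setminus\{z\})=|D'|$, so $D'$ is in fact a minimum $(Z,r)$-dominator, and since the previous $Z$ was a core, $D'$ $r$-dominates $G$. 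Each round deletes a vertex, so there are at most $|V(G)|$ rounds, and the whole procedure runs in polynomial time provided each search for a safely removable vertex does.

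The heart of the argument --- and the main obstacle --- is thus the following claim, an analogue of Theorem~4.11 of~\cite{drange2016kernelization}: there is a function $\fcore(r,\epsilon)$ and a polynomial-time algorithm that, given an $r$-domination core $Z$ with $|Z|>\fcore(r,\epsilon)\cdot|D_{\mathrm{apx}}|^{1+\epsilon}$, outputs a safely removable vertex of $Z$. I would prove this by adapting the bounded expansion argument with two substitutions. First, where that argument relies on a constant bound on the number of distinct $r$-projections, I would use the near-linear bounds of \cref{lem:projection-complexity} and \cref{thm:main-profiles}: after replacing $Z$ by its $r$-closure $\hat Z=\cl_r(Z)$ from \cref{lem:closure} --- so that $|\hat Z|$ is near-linear in $|Z|$ and every $r$-projection onto $\hat Z$ has size at most $\fcl(r,\epsilon')\cdot|Z|^{\epsilon'}$ --- the number of ``domination types'' that a vertex of $G$ can have relative to $Z$ is near-linear in $|Z|$. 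Second, where Drange et al.\ harvest a large $2r$-independent set from the failure case of Dvo\v{r}\'ak's approximation, I would instead invoke the polynomial uniform quasi-wideness of \cref{thm:uqw}: since $|Z|$ is large, there are a set $S\subseteq V(G)$ with $|S|\le s(2r)=\Oof(1)$ and a set $B\subseteq Z\setminus S$ of any prescribed size $m$ that is $2r$-independent in $G-S$, both computable in polynomial time.

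With $B$ and $S$ in hand I would complete the argument by contradiction, as in Drange et al.: were no element of a suitably chosen scattered $B\subseteq Z$ safely removable, then each such $b$ would come with a minimum $(Z\setminus\{b\},r)$-dominator $D_b$ avoiding $N^G_r[b]$, hence with $N^G_r[D_b]\cap Z=Z\setminus\{b\}$ and $|D_b|\le\mathbf{ds}_r(G,Z)\le|D_{\mathrm{apx}}|$. Since $B$ is $2r$-independent off $S$, any dominator vertex outside $S$ can $r$-dominate at most one element of $B$ along an $S$-avoiding path, so all but boundedly many (in terms of $|D_{\mathrm{apx}}|$) elements of $B$ must be dominated ``through'' the bounded set $S$; controlling these by partitioning $B\cap N^G_r[S]$ according to the few (after the closure step, by \cref{lem:projection-complexity} and \cref{lem:bd-vcdimension}) profiles describing how their short paths enter $S$, and combining with the near-linear bound on the number of distinct $r$-neighborhood traces on $Z$, one contradicts the assumption $|Z|>\fcore(r,\epsilon)\cdot|D_{\mathrm{apx}}|^{1+\epsilon}$. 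The argument is constructive: it pinpoints a concrete element of $B$, read off from the clustering, that every minimum $(Z\setminus\{b\},r)$-dominator is forced to $r$-dominate, so no independent verification of removability is needed. Combining this claim with the feasibility step and the termination analysis, and rescaling $\epsilon$ to absorb the two closures' polynomial overhead, the chained near-linear bounds, and the $\log k$ from the approximation, yields an $r$-domination core of size $\fcore(r,\epsilon)\cdot k^{1+\epsilon}$. The step I expect to be genuinely delicate is precisely this separator bookkeeping: in the bounded expansion proof every auxiliary quantity is a true constant, whereas here each ``constant'' is only an $n^{\epsilon}$ factor, so one must track the nested rescalings of $\epsilon$ and the polynomially many vertices added by \cref{lem:closure} carefully enough that the final exponent stays below $1+\epsilon$.
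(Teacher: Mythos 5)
Your scaffolding matches the paper's (VC-dimension based $\Oof(\log\mathsf{OPT})$-approximation for feasibility, iterative removal of a vertex $z$ such that every minimum $(Z\setminus\{z\},r)$-dominator still dominates $z$, uniform quasi-wideness via \cref{thm:uqw} to produce $S$ and a $2r$-independent set, and profile classifications in place of the constant bounds of the bounded expansion setting). However, the heart of the lemma --- the analogue of \cref{lem:findredundantvertex}, i.e.\ the argument that actually exhibits a removable vertex as soon as $|Z|>\fcore(r,\epsilon)\cdot k^{1+\epsilon}$ --- is not carried out, and the sketch you give in its place does not work quantitatively. Your plan is a counting contradiction: if no $b\in B$ is removable, take for each $b$ a minimum dominator $D_b$ missing $b$, note that vertices outside $S$ dominate at most one element of $B$ along $S$-avoiding paths, and conclude that ``too many'' elements of $B$ are dominated through $S$. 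For this style of argument to close, the scattered set $B$ (or at least one profile class of it) must be larger than $\ds_r(G,Z)\approx k$; but \cref{thm:uqw} only extracts a $2r$-independent set of size $m$ from a set of size $m^{p(2r)}$, so forcing $m\gtrsim k$ requires $|Z|\gtrsim k^{p(2r)}$, which is far above the threshold $k^{1+\epsilon}$ you are allowed to assume. That route recovers only the polynomial core of~\cite{siebertz2016polynomial}, not the almost linear one claimed here. (Also, the quantity you say gets contradicted, $|Z|>\fcore(r,\epsilon)\cdot|D_{\mathrm{apx}}|^{1+\epsilon}$, is the hypothesis you are supposed to exploit, not the conclusion of a contradiction.)

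The missing idea is the paper's local exchange argument, together with the correct placement of the closure. The paper closes the \emph{approximate dominator} $X$ (not $Z$; closing $Z$ as you propose yields a set of size near-linear in $|Z|$ onto which profile classification gives no useful pigeonholing), and does so at distance $3r$, then classifies $Z\setminus X_{\cl}$ by $3r$-projection profiles onto $X_{\cl}$: this gives only about $k^{1+\Oof(\epsilon)}$ classes, so $|Z|>\fcore(r,\epsilon)\cdot k^{1+C\epsilon}$ already forces one class $\kappa$ of size $k^{\Oof(\epsilon)}$, and \cref{thm:uqw} is applied inside $\kappa$ with a tiny target $m=\beta=k^{\Oof(\epsilon)}$. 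After a further pigeonhole on distance profiles on $S$, one gets $R\subseteq\kappa$, homogeneous both on $X_{\cl}$ and on $S$, of size exceeding $\alpha=|M^G_{3r}(z,X_{\cl})|+|S|+1=k^{\Oof(\epsilon)}$. Then any $z\in R$ works: if a minimum $(Z\setminus\{z\},r)$-dominator $D$ missed $z$, the dominating paths of the other vertices of $R$ would have to avoid $X_{\cl}\cup S$ (profile equality would otherwise dominate $z$), hence by $2r$-independence they use pairwise distinct vertices of $D$, and swapping these $>\alpha-1$ vertices for the set $M_{3r}(z,X_{\cl})\cup S$ of size $\leq\alpha-1$ yields a strictly smaller $(Z\setminus\{z\},r)$-dominator, contradicting minimality --- the $3r$-closure and $3r$-profiles are exactly what make the bought set re-dominate everything the sold vertices dominated. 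The decisive point is that the exchange only needs $|R|$ to beat the \emph{projection size} $k^{\Oof(\epsilon)}$ plus $|S|$, not the dominator size $k$; without this device (or an equivalent one), your counting scheme cannot reach the $k^{1+\epsilon}$ core size, so as written the proposal has a genuine gap.
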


Starting with $Z=V(G)$, which is clearly an $r$-domination core of $G$, we try
to iteratively remove vertices from $Z$ while preserving the property
that $Z$ is an $r$-domination core of $G$.  More precisely, we show the
following lemma, which is the analogue of Theorem 4.12
of~\cite{drange2016kernelization} and of Lemma 11
of~\cite{DawarK09}. Observe that \cref{lem:findcore} follows by the
applying~\cref{lem:findredundantvertex} iteratively until the size of the core is reduced to at most $\fcore(r,\epsilon) \cdot k^{1+\epsilon}$.
This iteration is performed at most $n$ times leading to an additional
factor~$n$ in the running time in~\cref{lem:findcore}.

\begin{lemma}\label{lem:findredundantvertex}
There exists a function $\fcore(r,\epsilon)$ and a polynomial-time algorithm that, given a graph $G\in \CCC$, an integer $k\in \N$, and an $r$-domination core
$Z$ of $G$ with $|Z|>\fcore(r,\epsilon) \cdot k^{1+\epsilon}$, either correctly concludes that $Z$ cannot be $r$-dominated by $k$ vertices, or finds a vertex $z\in Z$ such that $Z\setminus\{z\}$ is still an $r$-domination
core of~$G$. 
\end{lemma}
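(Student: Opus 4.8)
The plan is to mirror the core-reduction phase of Drange et al.~\cite{drange2016kernelization} (their Theorem~4.12, which goes back to Lemma~11 of Dawar and Kreutzer~\cite{DawarK09}), replacing every bounded-expansion ingredient by a nowhere-dense analogue developed above. First I would compute an approximate distance-$r$ dominator of $Z$. By \cref{thm:adleradler} the graph $G_{\leq r}$ has VC-dimension at most $c(r)$, so covering $Z$ by $r$-balls of $G$ is a \textsc{Set Cover} instance over a set system of VC-dimension at most $c(r)$, for which the classical $\Oof(\log\mathsf{OPT})$-approximation (via $\epsilon$-nets and multiplicative reweighting) runs in polynomial time and returns a $(Z,r)$-dominator $D$ with $|D| \leq \alpha\cdot\mathbf{ds}_r(G,Z)\cdot\log(\mathbf{ds}_r(G,Z)+2)$, where $\alpha$ depends only on $c(r)$. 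If $|D| > \alpha\,k\,\log(k+2)$, the algorithm reports that $Z$ cannot be $r$-dominated by $k$ vertices; this is correct, since then $\mathbf{ds}_r(G,Z) > k$. Otherwise $|D| \leq \alpha\,k\,\log(k+2) \leq \beta\,k^{1+\epsilon/10}$ for a constant $\beta$, using $\log(k+2) = \Oof_\epsilon(k^{\epsilon/10})$.

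Next I would apply the nowhere-dense Closure Lemma (\cref{lem:closure}) followed by the Short Paths Closure Lemma (\cref{lem:pathsclosure}) to $D$, with a sufficiently small auxiliary parameter $\epsilon'>0$, obtaining a set $W \supseteq D$ with $|W| \leq \Oof(k^{1+\epsilon/4})$ (the polylogarithmic factor from the previous step being absorbed by rescaling the exponents), such that $|M_r^G(u,W)| \leq \Oof(k^{\epsilon})$ for every $u\in V(G)\setminus W$ and $\dist_{G[W]}(u,v) = \dist_G(u,v)$ whenever $u,v\in W$ and $\dist_G(u,v)\leq r$. Finally I would invoke the polynomial uniform-quasi-wideness bounds (\cref{thm:uqw}) to enlarge $W$ by a set $S$ whose size is bounded in terms of $\CCC$ and $r$ only, so that a large part of $Z$ becomes $2r$-independent in $G-W$. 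This $2r$-independence is the substitute for the large $2r$-independent set that Dvo\v{r}\'ak's constant-factor approximation handed to Drange et al.\ on failure and that the weaker $\Oof(\log\mathsf{OPT})$-approximation no longer provides.

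Now assign to each $z\in Z\setminus W$ its $r$-projection profile $\rho^G_r[z,W]$. By \cref{lem:projection-complexity} the number of distinct such profiles is at most $\fproj(r,\epsilon')\cdot|W|^{1+\epsilon'} = \Oof(k^{1+\epsilon/2})$, so if $\fcore(r,\epsilon)$ is chosen large enough that $\fcore(r,\epsilon)\,k^{1+\epsilon}$ exceeds $|W|$ plus this count, the hypothesis $|Z| > \fcore(r,\epsilon)\,k^{1+\epsilon}$ forces two distinct $z_1,z_2\in Z\setminus W$ with $\rho^G_r[z_1,W]=\rho^G_r[z_2,W]$. The combinatorial heart, which I would obtain by adapting the argument of Drange et al.\ and of Dawar and Kreutzer, is that this forces (say) $z_1$ to be redundant, i.e.\ $Z\setminus\{z_1\}$ is again an $r$-domination core. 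It suffices to prove that every $(Z\setminus\{z_1\},r)$-dominator $D'$ with $|D'|\leq\mathbf{ds}_r(G,Z)$ also $r$-dominates $z_1$: then a minimum $(Z\setminus\{z_1\},r)$-dominator, being of size at most $\mathbf{ds}_r(G,Z)$, is also a $(Z,r)$-dominator, hence a minimum one, hence — as $Z$ is a core — a distance-$r$ dominating set of $G$. For the former, pick $d\in D'$ with $\dist_G(d,z_2)\leq r$; if a shortest $d$--$z_2$ path meets $W$, it can be rerouted at its first vertex of $W$ using a path witnessing the common projection profile of $z_1$, yielding $\dist_G(d,z_1)\leq r$; the remaining case, in which every vertex of $D'$ within distance $r$ of $z_2$ reaches it along a path avoiding $W$, is handled using the $2r$-independence of $Z$ in $G-W$ together with the bound on $|D'|$. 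Everything is constructive and polynomial-time; \cref{lem:findcore} follows by iterating \cref{lem:findredundantvertex} at most $n$ times, and \cref{thm:main-kernel} by combining it with the dominator-reduction phase and the annotation gadget.

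The genuinely delicate point is this last redundancy argument. In the bounded-expansion setting the number of projection profiles and the size of $W$ are both linear in $k$, so repetition occurs past a linear threshold on $|Z|$; in the nowhere-dense setting both quantities are only near-linear and, on top of that, the logarithmic loss in the approximation must be absorbed, so one has to introduce several small error exponents ($\epsilon'$, $\epsilon/10$, $\epsilon/4$, $\epsilon/2$, and so on) and check that they sum to less than $\epsilon$. More substantively, the ``path avoids $W$'' case, which in Drange et al.\ is dispatched via structural properties of the approximate dominator obtained from their iterated constant-factor approximation, must here be closed using the $2r$-independence coming from the polynomial uniform-quasi-wideness bounds of \cref{thm:uqw}; isolating and correctly deploying this substitute is the main new obstacle, whereas the approximation step, the closures, the bucketing by profile, and the running-time bookkeeping are routine.
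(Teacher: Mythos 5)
Your first half follows the paper's route: the $\Oof(\log\mathsf{OPT})$-approximation via bounded VC-dimension (the paper's \cref{lem:dvorak}, using Br\"onnimann--Goodrich) and the closure of the approximate dominator via \cref{lem:closure} (the paper does not need \cref{lem:pathsclosure} here; that enters only in the later kernel-construction step), followed by bucketing the remaining vertices of $Z$ by projection profiles. The genuine gap is in the redundancy argument itself, which you leave at the level of ``adapt Drange et al.'' but set up in a way that cannot work. Two vertices $z_1,z_2$ with the same projection profile on $W$ do not make $z_1$ redundant: in your ``remaining case'' (the dominator of $z_2$ reaches it along a $W$-avoiding path) nothing forces $z_1$ to be dominated, and neither $2r$-independence nor $|D'|\leq k$ closes this with only two vertices. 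The paper's proof needs, inside a \emph{single} projection-profile class, a set $R$ of same-profile vertices that is $2r$-independent in $G-S$ and additionally shares its $r$-distance profile on $S$, of size exceeding $|M_{3r}(z,X_{\cl})\cup S|\approx k^{3\epsilon}+s(2r)$. Only then does the exchange argument go through: if a minimum $(Z',r)$-dominator $D$ missed $z$, all the dominating paths $P(x)$, $x\in R\setminus\{z\}$, would avoid $X_{\cl}\cup S$, the dominators $d(x)$ would be pairwise distinct by independence, and swapping them for the smaller ``buy'' set $M_{3r}(z,X_{\cl})\cup S$ would contradict minimality of $D$. This is also why the paper works with $3r$-projections onto the closure (the buy set must catch the first closure vertex on concatenated walks of length up to $3r$), not $r$-projections as you propose.

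Relatedly, your use of \cref{thm:uqw} is both misstated and mis-ordered. Uniform quasi-wideness does not make ``a large part of $Z$'' $2r$-independent: from a set of size $m^{p(2r)}$ it extracts only $m$ independent vertices after deleting a small set $S$. Consequently the order of operations is essential: one must bucket by projection profile \emph{first} and apply \cref{thm:uqw} \emph{inside} one large bucket, because only $\beta\approx k^{3\epsilon}\cdot(r+2)^{s}$ independent same-profile vertices are needed, so a bucket of size $\beta^{p(2r)}$ suffices and the threshold on $|Z|$ stays of the form $\fcore(r,\epsilon)\cdot k^{1+\Oof(\epsilon)}$, fixable by rescaling $\epsilon$. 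With your ordering (independence first, bucketing afterwards), or with the alternative ``more than $k$ independent same-profile vertices would force $|D'|>k$'' counting, you would need an independent set of size exceeding the number of buckets (respectively exceeding $k$), which by \cref{thm:uqw} requires a starting set of size roughly $k^{p(2r)}$; this yields only a polynomial core, not the almost linear bound $\fcore(r,\epsilon)\cdot k^{1+\epsilon}$ claimed in the lemma. So the missing ingredient is precisely the quantitative design of the exchange step — few independent vertices, but more than the size of the buy set $M_{3r}(z,X_{\cl})\cup S$ — together with the extra pigeonholing on distance profiles over $S$.
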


As in Drange et al., the first step of the proof of~\cref{lem:findredundantvertex} is to find a suitable approximation of a $(Z,r)$-dominator.
For this, Drange et al.\ use a quite complicated scheme of iteratively applying the approximation algorithm of Dvo\v{r}\'ak~\cite{dvovrak2013constant}, 
because for the next step they need a dual object -- a large $2r$-independent set -- which is provided by the algorithm of Dvo\v{r}\'ak as a certificate of approximation.
The need of ensuring that this iterative construction finishes in a constant number of rounds is another piece that breaks in the nowhere dense setting.
We circumvent this problem by obtaining a suitably large
$2r$-independent set using the polynomial bounds on uniform
quasi-wideness (\cref{thm:uqw}) instead of the algorithm of Dvo\v{r}\'ak.
Therefore, we can rely on the classic $\Oof(\log \mathsf{OPT})$-approximation of Br\"onnimann and Goodrich~\cite{bronnimann1995almost} for the {\sc{Hitting Set}} problem in 
set families of bounded VC-dimension, as explained in the next lemma.

\begin{lemma}\label{lem:dvorak}
There is a function $\fapx(r,\epsilon)$ and a polynomial-time algorithm that, given $G\in \CCC$ and $Z\subseteq V(G)$, 
computes a $(Z,r)$-dominator of size at most $\fapx(r,\epsilon)\cdot k^{1+\epsilon}$, where $k$ is the size of a minimum $r$-dominating set of $Z$. 
\end{lemma}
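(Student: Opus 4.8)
The plan is to recast the task of finding a small $(Z,r)$-dominator as an instance of \textsc{Hitting Set} in a set system of bounded VC-dimension, and then to invoke the VC-dimension-based approximation algorithm of Br\"onnimann and Goodrich~\cite{bronnimann1995almost}.

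First I would build the range space. For each $z\in Z$ set $R_z=\{v\in V(G)\colon \dist_G(v,z)\leq r\}=N^G_r[z]$, and consider the range space $\big(V(G),\ \{R_z\colon z\in Z\}\big)$. Since $\dist_G$ is symmetric, a set $D\subseteq V(G)$ satisfies $Z\subseteq N^G_r(D)$ if and only if $D\cap R_z\neq\emptyset$ for every $z\in Z$; in other words, the $(Z,r)$-dominators of $G$ are precisely the hitting sets of this range space, and the minimum hitting set size equals $k=\mathbf{ds}_r(G,Z)$ (which is finite, as $Z$ itself is a hitting set; the degenerate case $Z=\emptyset$ is handled by returning $\emptyset$). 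Each range $R_z$ equals $N^G_r[v]$ for some $v\in V(G)$, namely $v=z$, so $\{R_z\colon z\in Z\}$ is a subfamily of the closed-neighbourhood set system of the graph $G_{\leq r}$ from~\cref{thm:adleradler}. By that corollary the latter set system has VC-dimension at most $c(r)$, a constant depending only on $\CCC$ and $r$, and passing to a subfamily does not increase VC-dimension; hence our range space has VC-dimension at most $c(r)$. Setting it up is polynomial: each $R_z$ is computed by a breadth-first search from $z$ truncated at depth $r$.

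Next I would run the Br\"onnimann--Goodrich algorithm on this range space. For a range space of VC-dimension $d$ it computes, in polynomial time, a hitting set of size $\Oof\big(d\cdot \mathsf{OPT}\cdot(1+\log\mathsf{OPT})\big)$, where $\mathsf{OPT}$ denotes the minimum hitting set size; the value $\mathsf{OPT}$ need not be known in advance, since one may run the algorithm on geometrically increasing guesses $t=1,2,4,\dots$ and keep the output of the smallest successful guess, which costs only a constant factor. With $d=c(r)$ and $\mathsf{OPT}=k$ this produces a $(Z,r)$-dominator of size $\Oof\big(c(r)\cdot k\,(1+\log k)\big)$. To conclude, I would absorb the logarithm into the polynomial slack: since $1+\log_2 k\leq \big(1+\tfrac{1}{\epsilon\ln 2}\big)k^{\epsilon}$ for every $k\geq 1$, this bound is at most $\fapx(r,\epsilon)\cdot k^{1+\epsilon}$ once $\fapx(r,\epsilon)$ is chosen large enough in terms of $c(r)$ and $\epsilon$, hence in terms of $\CCC$, $r$, and $\epsilon$ only.

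I do not expect a real obstacle in this lemma: once the problem is phrased as hitting set, it is just the VC-dimension bound coming from the stability of nowhere dense classes (\cref{thm:adleradler}) combined with the classical VC-based \textsc{Hitting Set} approximation. The only points needing a little care are that it is the VC-dimension of the range space $\big(V(G),\{R_z\}\big)$ itself that must be bounded and handed to the approximation algorithm --- which is exactly what \cref{thm:adleradler} provides --- that $k$ is not part of the input, and the elementary estimate turning the $\Oof(k\log k)$ guarantee into the required $\Oof(k^{1+\epsilon})$ one. Finally, it is worth noting that, unlike Dvo\v{r}\'ak's algorithm used by Drange et al., this route does not yield a large $2r$-independent set as a by-product; that object will instead be obtained separately, via the polynomial bounds for uniform quasi-wideness (\cref{thm:uqw}), at the stage of the proof of~\cref{lem:findredundantvertex} where it is required.
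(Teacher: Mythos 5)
Your proposal is correct and follows essentially the same route as the paper: the same range space $\{N^G_r[z] \colon z\in Z\}$ viewed as a subfamily of closed neighborhoods in $G_{\leq r}$, the VC-dimension bound from \cref{thm:adleradler}, the Br\"onnimann--Goodrich hitting-set approximation, and absorbing the logarithmic factor into $k^{\epsilon}$. The only cosmetic difference is that you handle the unknown optimum by geometric guessing while the paper phrases the approximation as taking $k$ as input, which amounts to the same thing.
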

\begin{proof}
Let $\FFF$ be a family of subsets of $V(G)$ defined as $\FFF=\{N^G_r[u]\colon u\in Z\}$.
Family $\FFF$ is a subfamily of the family of all closed $1$-neighborhoods in the graph~$G_{\leq r}$, as defined in~\cref{thm:adleradler}.
By \cref{thm:adleradler}, $G_{\leq r}$ has VC-dimension at most $c=c(r)$, where $c$ depends on $\CCC$ and $r$,
hence also the VC-dimension of $\FFF$ is bounded by $c$.

It is clear that $(Z,r)$-dominators in $G$ correspond to hitting sets of $\FFF$, that is, subsets of $V(G)$ that intersect every set of $\FFF$.
Br\"onnimann and Goodrich~\cite{bronnimann1995almost} gave a polynomial-time algorithm that, given a set family of VC-dimension at most $c$ and an integer $k$,
either correctly concludes that there is no hitting set of size at
most $k$, or finds a hitting set of size at most $\Oof(ck\log (ck))$.
More precisely, the running time of this pseudopolynomial algorithm
becomes polynomial, with the exponent not dependent on~$\CCC$, if we
precompute $\FFF$, store it in a binary matrix,
and implement subsystem and witness oracles, as defined in~\cite{bronnimann1995almost}, as linear scans over all the elements of $\FFF$.
Finally, we can choose the function $\fapx(r,\epsilon)$ so that the upper bound on the size of the obtained hitting set, being at most $\Oof(ck\log (ck))$, is upper bounded by $\fapx(r,\epsilon)\cdot k^{1+\epsilon}$.
Hence, the set provided by the algorithm of~\cite{bronnimann1995almost} applied to $\FFF$ can be returned.
\end{proof}

We are now ready to prove~\cref{lem:findredundantvertex}.

\begin{proof}[of~\cref{lem:findredundantvertex}]
  The function $\fcore(r,\epsilon)$ will be defined in the course of the proof.
  For convenience, for now we assume that $|Z|>\fcore(r,\epsilon) \cdot k^{1+C\epsilon}$ for some large constant $C$, for at the end we will rescale $\epsilon$ accordingly.

  We first apply~\cref{lem:dvorak}, to either conclude that there is
  no $(Z,r)$-dominator of size at most~$k$ or to compute a $(Z,r)$-dominator
  $X$ of size at most $\fapx(r,\epsilon)\cdot k^{1+\epsilon}$. 
  This application takes polynomial time.
  In the first case we can reject the instance, hence assume that we are in the second case.

  We apply the algorithm of~\cref{lem:closure} to the set $X$ and distance parameter $3r$,
  thus computing its closure $\cl_{3r}(X)$, henceforth denoted by $X_{\cl}$.
  By~\cref{lem:closure}, we have
  \begin{eqnarray*}
    |X_{\cl}| & \leq & \fcl(3r,\epsilon)\cdot |X|^{1+\epsilon}\\
             & \leq & \fcl(3r,\epsilon)\cdot \fapx(r,\epsilon)^{1+\epsilon}\cdot k^{1+2\epsilon+\epsilon^2}\leq \fcl(3r,\epsilon)\cdot \fapx(r,\epsilon)^{1+\epsilon}\cdot k^{1+3\epsilon},
  \end{eqnarray*}
  and, for every $u\in V(G)\setminus X_\cl$,
  \begin{eqnarray*}
  |M^G_{3r}(u,X_\cl)| & \leq & \fcl(3r,\epsilon)\cdot |X|^{\epsilon}\\
                      & \leq & \fcl(3r,\epsilon)\cdot \fapx(r,\epsilon)^{\epsilon}\cdot k^{2\epsilon+\epsilon^2}\leq \fcl(3r,\epsilon)\cdot \fapx(r,\epsilon)^{\epsilon}\cdot k^{3\epsilon}.
  \end{eqnarray*}
  
  We now classify the elements of $Z\setminus X_{\cl}$ according to their
  $3r$-projection profiles on $X_{\cl}$. More precisely, let us define an equivalence relation $\sim$ on $Z\setminus X_{\cl}$ as follows:
  \[u\sim v\quad\textrm{if and only if}\quad \rho^G_{3r}[u,X_\cl]=\rho^G_{3r}[v,X_\cl].\]
  By~\cref{lem:projection-complexity}, the number of equivalence classes of $\sim$ is bounded as follows: 
  \begin{eqnarray*}
  \indx(\sim) & \leq & \fnei(3r,\epsilon)\cdot |X_\cl|^{1+\epsilon}\\
              & \leq & \fnei(3r,\epsilon)\cdot \fcl(3r,\epsilon)^{1+\epsilon}\cdot \fapx(r,\epsilon)^{(1+\epsilon)^2}\cdot k^{1+7\epsilon}.
  \end{eqnarray*}
  Note that the partition of $V(G)$ into the equivalence classes of $\sim$ can be computed in polynomial time, by just computing the $3r$-projection profile for each vertex using breadth-first search,
  and then comparing the profiles pairwise.
  
  Let $p(r)$ and $s(r)$ be the functions provided by~\cref{thm:uqw} for the class $\CCC$.
  Denote 
  \begin{eqnarray*}
  \alpha & = & \fcl(3r,\epsilon)\cdot\fapx(r,\epsilon)^{\epsilon}\cdot k^{3\epsilon}+s(2r)+1\quad\textrm{and}\\
  \beta & = & \alpha \cdot (r+2)^{s(r)}+1.
  \end{eqnarray*}
  From the above inequalities on $|X_\cl|$ and $\indx(\sim)$, it follows that setting $C=7+3\cdot p(2r)$, we can fix the function $\fcore(r,\epsilon)$ so that the following inequality is always satisfied: 
  \[\fcore(r,\epsilon)\cdot k^{1+C\epsilon} \geq |X_\cl|+\indx(\sim)\cdot \beta^{p(2r)}.\]
  Since we assumed that $|Z|>\fcore(r,\epsilon)\cdot k^{1+C\epsilon}$, it follows that 
  $|Z\setminus X_\cl|>\indx(\sim)\cdot \beta^{p(2r)}$.
  Hence, by the pigeonhole principle, there exists an equivalence class $\kappa$ of $\sim$ that contains more than $\beta^{p(2r)}$ vertices.
  By applying the algorithm of~\cref{thm:uqw} to any subset of $\kappa$ of size exactly $\beta^{p(2r)}$,
  we find sets $S\subseteq V(G)$ and $L\subseteq \kappa\setminus S$ such that $|S|\leq s(r)$, $|L|\geq \beta$, and $L$ is $2r$-independent in $G-S$.
  This application takes time $\Oof(r\cdot c\cdot \beta^{p(2r)\cdot (c+6)}\cdot |V(G)|^2)$.
  Provided $\epsilon$ satisfies $3\cdot p(2r)\cdot (c+6)\cdot \epsilon<1$, which we can assume without loss of generality, we have that $\beta^{p(2r)\cdot (c+6)}\leq \Oof(k)$;
  here, the constants hidden in the $\Oof(\cdot)$-notation may depend on $\CCC$.
  Since we can assume that $k\leq |V(G)|$, this application of the algorithm of~\cref{thm:uqw} takes then time $\Oof(|V(G)|^3)$, which is polynomial with the degree independent of $\CCC$.

  We now classify the elements of $L$ according to their $r$-distance profiles on $S$.
  Note that $|l|\leq \beta$ and that the number of $r$-distance profiles on $S$ is bounded by $(r+2)^{|S|}\leq (r+2)^{s(r)}$.
  Since $\beta>\alpha \cdot (r+2)^{s(r)}$, by the pigeonhole principle we infer that there is a subset $R\subseteq L$ of size $|R|>\alpha$ such that
  \[\pi_r^G(v,S)=\pi_r^G(w,S)\quad\textrm{for all}\quad v,w\in R.\]
  
  At this point the situation is almost exactly as in Lemma 3.8 of~\cite{drange2016kernelization}. More precisely,
  every vertex of $R$ is an irrelevant dominatee, i.e., it can be excluded from $Z$ without spoiling the property that $Z$ is a domination core.
  For the sake of completeness, we repeat the argumentation from~\cite{drange2016kernelization}, which is based on an exchange argument. 
  Let $z$ be any vertex of $R$, which also belongs to $Z$ due to $R\subseteq L\subseteq \kappa\subseteq Z\setminus X_\cl$,
  and let $Z'=Z\setminus \{z\}$.
  
  \begin{claim}\label{cl:exchange}
  The set $Z'$ is also a domination core of $G$.
  \end{claim}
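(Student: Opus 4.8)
The plan is to reduce \cref{cl:exchange} to a statement about $r$-domination numbers and then establish that statement by the exchange argument of Drange et al.~\cite{drange2016kernelization}, fed with the data assembled in the proof of \cref{lem:findredundantvertex}. Set $Z'=Z\setminus\{z\}$. I would first check that $Z'$ is an $r$-domination core of $G$ if and only if every minimum $(Z',r)$-dominator $r$-dominates $z$. For the nontrivial direction, if $D'$ is a minimum $(Z',r)$-dominator with $z\in N^G_r[D']$, then $D'$ is a $(Z,r)$-dominator, and from $|D'|=\mathbf{ds}_r(G,Z')\le\mathbf{ds}_r(G,Z)\le|D'|$ it is in fact a minimum $(Z,r)$-dominator; since $Z$ is an $r$-domination core, $D'$ is then a distance-$r$ dominating set of $G$. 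So it remains to prove that every minimum $(Z',r)$-dominator $r$-dominates $z$.

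Suppose for contradiction that $D'$ is a minimum $(Z',r)$-dominator with $z\notin N^G_r[D']$. I would rely on the facts already secured for $R$, $S$ and $X_\cl$: (a) $X$ is a $(Z,r)$-dominator, so $z$ and every vertex of $R$ lies within distance $r$ of $X\subseteq X_\cl$, and as $R\cap X_\cl=\emptyset$ each vertex of $R$ reaches some vertex of its $3r$-projection onto $X_\cl$ by an $X_\cl$-avoiding path of length at most $r$ --- the same target vertex and distance for all of $R$, $z$ included, since they share one $3r$-projection profile on $X_\cl$; (b) $R$ is $2r$-independent in $G-S$, $|S|$ is bounded by a constant, and all vertices of $R$ share one $r$-distance profile on $S$; (c) $|R\setminus\{z\}|$ exceeds $|S|$ plus the size of a single $3r$-projection onto $X_\cl$ (this is how $\alpha$ was chosen). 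Since $R\setminus\{z\}\subseteq Z'$, every $v\in R\setminus\{z\}$ has a dominator $d_v\in D'$; fix a shortest $v$--$d_v$ path $P_v$, of length at most $r$.

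For each $v\in R\setminus\{z\}$, I would consider the first vertex $p_v$ of $P_v$ (read from $v$) that belongs to $S\cup X_\cl$, if any. If $p_v\in S$ then the common $r$-distance profile on $S$ gives $\dist_G(z,p_v)\le r$; if $p_v\in X_\cl$ and $P_v$ meets $X_\cl$ before $S$, then the common $3r$-projection profile on $X_\cl$ gives $\dist_G(z,p_v)\le r$ --- this is exactly where closure radius $3r$ is needed, a length-$\le r$ domination path together with a length-$\le r$ attachment of a twin to $X_\cl$ and a length-$\le r$ re-attachment of $z$ having total length at most $3r$. In either case $p_v$ is a single vertex that $r$-dominates both $v$ and $z$, with $\dist_G(d_v,p_v)\le r$ as well. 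By the $2r$-independence of $R$ in $G-S$, no two vertices of $R\setminus\{z\}$ whose $D'$-paths avoid $S$ can share their $X_\cl$-pivot or their dominator; combining this with (c) by pigeonhole, I would isolate a vertex $v^\star\in R\setminus\{z\}$ and a pivot $p$ such that deleting $d_{v^\star}$ from $D'$ (and, in the two-for-one sub-case, a second now-superfluous dominator) and inserting $p$ produces a set that is strictly smaller than $D'$ and still $r$-dominates $Z'$ --- because $v^\star$ is covered by $p$ and the deleted dominators carried no other private dominatee --- contradicting the minimality of $D'$. Hence $z\in N^G_r[D']$, which proves \cref{cl:exchange}.

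The step I expect to be the main obstacle is precisely the pigeonhole-plus-redundancy bookkeeping of the previous paragraph: guaranteeing that some twin in $R\setminus\{z\}$ has a $D'$-dominator that can be traded for a pivot while leaving every vertex of $Z'$ dominated. Making this work requires orchestrating the minimality of $D'$ (private dominatees), the constant bound on $|S|$, the near-linear bound on $3r$-projection sizes onto $X_\cl$ (from \cref{lem:projection-complexity} via \cref{lem:closure}), and the $2r$-independence of $R$ in $G-S$ (from \cref{thm:uqw}) all at once --- the same delicate balance struck in Lemma~3.8 of~\cite{drange2016kernelization}, which we are here essentially transcribing.
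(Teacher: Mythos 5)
Your opening reduction matches the paper's first case distinction, but the heart of your argument---the local pivot swap---has a genuine gap, and it is exactly the point you flag as the ``main obstacle.'' A one-for-one exchange cannot contradict minimality: deleting $d_{v^\star}$ and inserting a pivot does not make the set smaller, and nothing guarantees that the other vertices of $Z'$ that only $d_{v^\star}$ dominated remain covered; in a minimum $(Z',r)$-dominator every vertex can be expected to have private dominatees, so the assertion that ``the deleted dominators carried no other private dominatee'' is unsupported, and the ``two-for-one sub-case'' is never justified. The paper avoids this entirely with a global many-for-few exchange, and the step that enables it is one your analysis stops just short of: under the hypothesis that $z$ is not $r$-dominated by $D$, \emph{every} path $P(x)$ from $x\in R\setminus\{z\}$ to its dominator $d(x)$ must avoid $X_\cl\cup S$ altogether. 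Indeed, if $P(x)$ first met $S$ (or $X_\cl$) at a vertex $y$ at distance $\ell\le r$ from $x$, the shared $r$-distance profile on $S$ (resp.\ the shared $3r$-projection profile on $X_\cl$) gives $\dist_G(z,y)\le \ell$, and adding the remaining $r-\ell$ edges of the \emph{same} path yields $\dist_G(z,d(x))\le r$---an immediate contradiction, since $d(x)$ lies in the dominator. You only record $\dist_G(z,p_v)\le r$ and $\dist_G(d_v,p_v)\le r$ separately, which contradicts nothing because $p_v$ need not belong to $D'$; consequently you never conclude that the paths avoid $S\cup X_\cl$, which is also what makes the dominators $d(x)$ pairwise distinct via the $2r$-independence of $R$ in $G-S$ (your hedge ``whose $D'$-paths avoid $S$'' leaves this case analysis open).

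With the paths confined to $G-(X_\cl\cup S)$, the paper sells the whole set $D_{\mathrm{sell}}=\{d(x)\colon x\in R\setminus\{z\}\}$, of size exceeding $\fcl(3r,\epsilon)\cdot\fapx(r,\epsilon)^{\epsilon}\cdot k^{3\epsilon}+s(2r)$, and buys $D_{\mathrm{buy}}=M_{3r}(z,X_\cl)\cup S$, of size at most that same quantity (this is precisely how $\alpha$ was chosen), so the new set is strictly smaller. Correctness of the trade is then not private-dominatee bookkeeping but a structural fact: if $a\in Z'$ loses its dominator $d(x)$, concatenate $P(x)$, a path $Q_1$ of length at most $r$ from $d(x)$ to $a$, and a path $Q_2$ of length at most $r$ from $a$ to $X_\cl$ (which exists since $X\subseteq X_\cl$ is a $(Z,r)$-dominator); the first vertex of this walk of length at most $3r$ lying in $X_\cl\cup S$ is off $P(x)$, hence within distance $r$ of $a$, and by the shared $3r$-projection profile of $x$ and $z$ it belongs to $M_{3r}(z,X_\cl)\cup S=D_{\mathrm{buy}}$. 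This mechanism---buying the entire $3r$-projection of $z$ together with $S$ and verifying coverage via the $3r$-walk---is what your pigeonhole-plus-redundancy plan is missing, and without it the argument does not close.
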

  \begin{clproof} 
  Suppose $D$ is a minimum-size $(Z',r)$-dominator in $G$.
  Consider first the case when $z$ is $r$-dominated by $D$.
  Then $D$ is in fact a $(Z,r)$-dominator, and it must be a minimum-size $(Z,r)$-dominator since $Z'\subseteq Z$ and $D$ is a minimum-size $(Z',r)$-dominator.
  Since $Z$ is an $r$-domination core by assumption, we infer that $D$ is an $r$-dominating set of $G$, hence we are done in this case.
  Thus, for the rest of the proof we assume that $D$ does not $r$-dominate $z$, and we aim at a contradiction.
  
  Consider any vertex $x\in R\setminus \{z\}$. Since $x\in Z'$, we have that $x$ is $r$-dominated by $D$.
  For each such $x$, let us select any vertex $d(x)\in D$ that $r$-dominates $x$.
  Further, let $P(x)$ be an arbitrary path of length at most $r$ that connects $d(x)$ with $x$.
  
  We first prove that for each $x\in  R\setminus \{z\}$, the path $P(x)$ does not traverse any vertex of $X_\cl\cup S$.
  For suppose otherwise, and assume that $y$ is the first (closest to $x$) vertex of $P(x)$ that belongs to $X_\cl$.
  Suppose further that the length of the prefix of $P(x)$ from $x$ to $y$ is $\ell$, for some $0\leq \ell \leq r$;
  thus $\dist_G(x,y)\leq \ell$ and $\dist_G(y,d(x))\leq r-\ell$. If we had $y\in S$, then since all vertices of $R$ have the same distance profile on $S$,
  we would have $\dist_G(z,y)=\dist_G(x,y)\leq \ell$, implying
  \[\dist_G(z,d(x))\leq \dist_G(z,y)+\dist_G(y,d(x))\leq \ell+r-\ell=r,\]
  so $z$ would be $r$-dominated by $d(x)\in D$. On the other hand, if we had $y\in X_\cl$, then the prefix of $P(x)$ from $x$ to $y$, being an $X_\cl$-avoiding path, certifies that $\rho^G_{3r}[x,X_\cl](y)\leq \ell$.
  Vertices of $R\subseteq \kappa$ have the same projection profiles on $X_\cl$, hence also $\rho^G_{3r}[z,X_\cl](y)=\rho^G_{3r}[x,X_\cl](y)\leq \ell$, implying in particular that $\dist_G(z,y)\leq \ell$.
  Then the same reasoning as in the previous case leads to the contradiction with the supposition that $z$ is not $r$-dominated by $D$.
  
  We conclude that the paths $\{P(x)\colon x\in R\setminus \{z\}\}$ are entirely contained in $G-(X_\cl\cup S)$. 
  Since $R\setminus \{z\}$ is $2r$-independent in $G-S$, and each path $P(x)$ has length at most $r$, we conclude that
  vertices $d(x)$ for $x\in R\setminus \{z\}$ are pairwise different.
  In particular, if we denote $D_{\mathrm{sell}}=\{d(x)\colon x\in R\setminus \{z\}\}$, then
  \[|D_{\mathrm{sell}}|=|R\setminus \{z\}|>\alpha-1 = \fcl(3r,\epsilon)\cdot\fapx(r,\epsilon)^{\epsilon}\cdot k^{3\epsilon}+s(2r).\]
  Denote $D_{\mathrm{buy}}=M_{3r}(z,X_\cl)\cup S$. Since $|S|\leq s(2r)$ and the $3r$-projection of each vertex of $V(G)\setminus X_\cl$ onto $X_\cl$ has size at most 
  $\fcl(3r,\epsilon)\cdot\fapx(r,\epsilon)^{\epsilon}\cdot k^{3\epsilon}$, we infer that
  \[|D_{\mathrm{buy}}|\leq \fcl(3r,\epsilon)\cdot\fapx(r,\epsilon)^{\epsilon}\cdot k^{3\epsilon}+s(2r).\]
  Thus $|D_{\mathrm{sell}}|>|D_{\mathrm{buy}}|$.
  Denote $D'=(D\setminus D_{\mathrm{sell}})\cup D_{\mathrm{buy}}$; then we have $|D'|<|D|$.
  We claim that $D'$ is still a $(Z',r)$-dominator in $G$, which will contradict the assumption that $D$ is a minimum-size $(Z',r)$-dominator in $G$.
  
  For the sake of contradiction, suppose there is some vertex $a\in Z'$ that is not $r$-dominated by $D'$.
  Since $D$ is a $(Z',r)$-dominator and $D\setminus D'\subseteq D_{\mathrm{sell}}$, it follows that there is some vertex $x\in R\setminus \{z\}$ 
  such that $d(x)$ was the vertex of $D$ that $r$-dominated $a$, i.e., $\dist_G(d(x),a)\leq r$.
  Let $Q_1$ be any path of length at most $r$ connecting $d(x)$ and $a$.
  Futhermore, since $a\in Z$ and $X_\cl$ is a $(Z,r)$-dominator, there is some path $Q_2$ of length at most $r$ that connects $a$ with a vertex of $X_\cl$.
  
  Consider now the concatenation of paths $P(x)$, $Q_1$, and $Q_2$. This is a walk of length at most $3r$ connecting $x$ with a vertex of $X_\cl$.
  Let $b$ be the first vertex of this walk (from the side of~$x$) that belongs to $X_\cl\cup S$.
  We have that $b$ does not lie on $P(x)$, since we argued that $P(x)$ is disjoint with $X_\cl\cup S$.
  Therefore, $b$ lies on $Q_1$ or $Q_2$, however every vertex lying on any of these paths is at distance at most $r$ from $a$.
  Hence $\dist_G(a,b)\leq r$, so it remains to argue that $b\in D'$.
  This is obvious if $b\in S$ since $S$ was explicitly included in $D_{\mathrm{buy}}$, so assume that $b\in X_\cl$.
  Then the prefix of the considered walk from $x$ to $b$ avoids $X_\cl$, which certifies that $b\in M_{3r}(x,X_\cl)$.
  However, $x,z\in \kappa$ and all vertices of $\kappa$ have the same $3r$-projection profile on $X_\cl$, so in particular the same $3r$-projection on $X_\cl$.
  We conclude that $b\in M_{3r}(z,X_\cl)\subseteq D_{\mathrm{buy}}$, which finishes the proof.
\end{clproof}

\cref{cl:exchange} ensures us that vertex $z$ is an irrelevant dominatee that can be returned by the algorithm.
Note that we were able to find $z$ provided $|Z|>\fcore(r,\epsilon)\cdot k^{1+C\epsilon}$ for some constant~$C$ depending on $\CCC$ and $r$.
Hence, we conclude the proof by rescaling $\epsilon$ to $\epsilon/C$ throughout the reasoning.
\end{proof}

Finally, having computed a suitably small domination core, we can construct a kernel from it.
This part of reasoning is exactly the same as in~\cite{drange2016kernelization}.

\begin{lemma}\label{lem:core-kernel}
There exists a function $\ffin(r,\epsilon)$ and a polynomial-time algorithm that, given a graph $G\in \CCC$ and an $r$-domination core $Z\subseteq V(G)$ of $G$,
computes a graph $G'$ with at most $\ffin(r,\epsilon)\cdot |Z|^{1+\epsilon}$ vertices such that $Z\subseteq V(G')$ and $\ds_r(G',Z)=\ds_r(G,Z)$.
\end{lemma}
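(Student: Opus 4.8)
The plan is to follow the reduction-from-core argument of Drange et al.~\cite{drange2016kernelization} essentially verbatim, replacing their bounded-expansion ingredients by our nowhere-dense analogues: \cref{lem:closure} in place of their closure lemma, \cref{lem:pathsclosure} in place of their short-paths closure lemma, and \cref{lem:projection-complexity} in place of their linear bound on the number of $r$-projections (we shall in fact need the profile version, that $\projprof_r$ is almost linear, not merely $\projnum_r$). The kernel will be an induced subgraph $G'=G[W]$ with $W\supseteq Z$, built as follows. First, apply \cref{lem:closure} to $Z$ with distance parameter $r$ to get $\widehat{Z}:=\cl_r(Z)$, so that $|\widehat{Z}|\leq\fcl(r,\epsilon)\cdot|Z|^{1+\epsilon}$ and $|M^G_r(u,\widehat{Z})|\leq\fcl(r,\epsilon)\cdot|Z|^{\epsilon}$ for every $u\in V(G)\setminus\widehat{Z}$. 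Second, apply \cref{lem:pathsclosure} to $\widehat{Z}$ with distance parameter $r$ to get $X'\supseteq\widehat{Z}$ with $|X'|\leq\fpaths(r,\epsilon)\cdot|\widehat{Z}|^{1+\epsilon}$ and $\dist_{G[X']}(u,v)=\dist_G(u,v)$ whenever $u,v\in\widehat{Z}$ and $\dist_G(u,v)\leq r$. Third, classify the vertices of $V(G)\setminus\widehat{Z}$ by their $r$-projection profiles $\rho^G_r[\cdot,\widehat{Z}]$; by \cref{lem:projection-complexity} there are at most $\fproj(r,\epsilon)\cdot|\widehat{Z}|^{1+\epsilon}$ of them, and for each realized profile $\rho$ we fix one representative $u_\rho\in V(G)\setminus\widehat{Z}$ realizing it, together with a shortest $\widehat{Z}$-avoiding path $P_{\rho,v}$ from $u_\rho$ to $v$ (of length $\rho(v)$) for each $v\in M^G_r(u_\rho,\widehat{Z})$. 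Finally, set $W=X'\cup\bigcup_\rho\bigl(\{u_\rho\}\cup\bigcup_v V(P_{\rho,v})\bigr)$ and $G'=G[W]$; all of this runs in polynomial time (the algorithms of \cref{lem:closure} and \cref{lem:pathsclosure} plus breadth-first searches).

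For the size bound, there are at most $\fproj(r,\epsilon)\,|\widehat{Z}|^{1+\epsilon}$ representatives, each contributing itself plus at most $r\cdot\fcl(r,\epsilon)\,|Z|^{\epsilon}$ further vertices (at most $\fcl(r,\epsilon)|Z|^{\epsilon}$ paths, each with at most $r-1$ internal vertices), so $|W|\leq|X'|+\fproj(r,\epsilon)\,|\widehat{Z}|^{1+\epsilon}\cdot(1+r\,\fcl(r,\epsilon)\,|Z|^{\epsilon})$. Substituting $|\widehat{Z}|\leq\fcl(r,\epsilon)|Z|^{1+\epsilon}$ and $|X'|\leq\fpaths(r,\epsilon)\,\fcl(r,\epsilon)^{1+\epsilon}|Z|^{(1+\epsilon)^2}$, every term is a function of $r,\epsilon$ times $|Z|^{1+c\epsilon}$ for some absolute constant $c$; rescaling $\epsilon\mapsto\epsilon/c$ throughout gives $|V(G')|\leq\ffin(r,\epsilon)\cdot|Z|^{1+\epsilon}$, and $Z\subseteq\widehat{Z}\subseteq W$ is clear.

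Correctness has an easy direction and an exchange direction. Since $G'=G[W]$ is a subgraph of $G$ and $Z\subseteq V(G')$, distances in $G$ are at most distances in $G'$, so any $(Z,r)$-dominator of $G'$ is one of $G$ and $\ds_r(G,Z)\leq\ds_r(G',Z)$. For the reverse inequality I would run the exchange argument of Drange et al.: from a minimum $(Z,r)$-dominator $D$ of $G$, form $\widehat{D}=(D\cap\widehat{Z})\cup\{u_{\rho^G_r[d,\widehat{Z}]}:d\in D\setminus\widehat{Z}\}\subseteq W$, which has $|\widehat{D}|\leq|D|$. Given $z\in Z$ dominated in $G$ by $d\in D$ via a path $Q$ of length $\ell\leq r$: if $d\in\widehat{Z}$ then $d\in\widehat{D}$ and the short-paths property gives $\dist_{G[X']}(d,z)=\dist_G(d,z)\leq r$; if $d\notin\widehat{Z}$, let $b$ be the first vertex of $\widehat{Z}$ on $Q$ and $j$ the length of the $d$-to-$b$ prefix --- this prefix is $\widehat{Z}$-avoiding, so $b\in M^G_r(d,\widehat{Z})$ with $\rho^G_r[d,\widehat{Z}](b)\leq j$, whence the representative $u_{\rho^G_r[d,\widehat{Z}]}$ (having the same $r$-projection profile) reaches $b$ in $G'$ along $P_{\rho^G_r[d,\widehat Z],b}$ in at most $j$ steps, while $\dist_{G[X']}(b,z)=\dist_G(b,z)\leq\ell-j$, so $u_{\rho^G_r[d,\widehat{Z}]}$ $r$-dominates $z$ in $G'$. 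Hence $\widehat{D}$ is a $(Z,r)$-dominator of $G'$ and $\ds_r(G',Z)\leq|D|=\ds_r(G,Z)$, giving equality.

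I expect no serious obstacle here; this is the step where Drange et al.'s reasoning carries over with only numerical changes. The points to watch are: keeping exactly $D\cap\widehat{Z}$ and replacing the rest by representatives of their profiles, so that the replacement is injective and $|\widehat{D}|\leq|D|$; that we genuinely need the profile (exact-distance) version of \cref{lem:projection-complexity}, since the exchange invokes equality of the functions $\rho^G_r[\cdot,\widehat{Z}]$ rather than just of the sets $M^G_r(\cdot,\widehat{Z})$; and the $\epsilon$-rescaling absorbing the finitely many $(1+\epsilon)$-exponents and the stray $|Z|^{\epsilon}$ factor. Everything else (e.g.\ that enlarging $X'$ to $W$ can only shorten distances, so the preserved distances remain preserved) is routine and identical to \cite{drange2016kernelization}.
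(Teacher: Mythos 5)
Your proposal is correct and follows essentially the same route as the paper: one representative vertex per $r$-projection profile, a short-paths closure to preserve distances up to $r$, and the exchange argument swapping each dominator outside the distinguished set for the representative of its profile (which indeed requires the profile version of \cref{lem:projection-complexity}, exactly as the paper uses it). The only cosmetic difference is that the paper simply adds the representatives (of profiles on $Z$ itself) to $Z$ and applies \cref{lem:pathsclosure} to the resulting set, so it needs neither \cref{lem:closure} nor your explicitly attached shortest $\widehat{Z}$-avoiding paths, whereas you first close $Z$ and glue in those paths by hand; both variants give the same bound after rescaling $\epsilon$.
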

\begin{proof}
  First, we classify the vertices of $V(G)\setminus Z$ according to their $r$-projections onto $Z$.
  Precisely, consider an equivalence relation $\sim$ on $V(G)\setminus Z$ defined as follows:
  \[u\sim v\quad\textrm{if and only if}\quad \rho^G_{r}[u,Z]=\rho^G_{r}[v,Z].\]
  By~\cref{lem:projection-complexity}, the equivalence relation $\sim$ has at most $\fproj(r,\epsilon)\cdot |Z|^{1+\epsilon}$ equivalence classes.
  For each equivalence class $\kappa$ of $\sim$, let us arbitrarily select a vertex $v_\kappa$ belonging to $\kappa$.
  Let us define:
  \[A=Z\cup \{v_\kappa\ \colon\ \kappa\textrm{ is an equivalence class of }\sim\}.\]
  Thus we have
  \[|A|\leq |Z|+\indx(\sim)\leq (1+\fproj(r,\epsilon))\cdot |Z|^{1+\epsilon}.\]
  Now, we apply~\cref{lem:pathsclosure} to $G$, the set $A$, distance parameter $r$, and $\epsilon$.
  Thus we compute a superset $A'\supseteq A$ with the following conditions satisfied:
  for every pair of vertices $u,v\in A$, if $\dist_G(u,v)\leq r$ then $\dist_G(u,v)=\dist_{G[A']}(u,v)$.
  Moreover, we have 
  \[|A'|\leq \fpaths(r,\epsilon)\cdot |A|^{1+\epsilon}\leq \fpaths(r,\epsilon)\cdot (1+\fproj(r,\epsilon))^{1+\epsilon}\cdot |Z|^{1+3\epsilon}.\]
  Now, we claim that the algorithm can output the graph $G[A']$; the correctness of this step follows from the following.
  \begin{claim}\label{cl:correctness-closure}
  It holds that $\ds_r(G)=\ds_r(G[A'],Z)$.
  \end{claim}
  The proof of~\cref{cl:correctness-closure} is the same as the proof of Claims 3.12 and 3.14 in~\cite{drange2016kernelization}, and hence we omit it and refer the reader to~\cite{drange2016kernelization} instead.
  Let us only remark that the crucial observation is that if $D$ is a minimum-size dominating set in $G$, then  
  \[D'=(D\cap Z)\cup \{v_\kappa\ \colon\ \kappa\textrm{ is an equivalence class of }\sim\textrm{ such that }\kappa\cap D\neq \emptyset\}.\]
  is also a minimum-size dominating set in $G$, while $D'\subseteq A'$.
  We conclude by rescaling $\epsilon$ by factor~$3$ in order to achieve an upper bound $|A'|\leq \ffin(r,\epsilon)\cdot |Z|^{1+\epsilon}$ for some function $\ffin(r,\epsilon)$. 
\end{proof}

\cref{thm:main-kernel} follows by combining~\cref{lem:findcore} and~\cref{lem:core-kernel}, and rescaling $\epsilon$ by factor $3$.

\section{Conclusion}\label{sec:conclusion}

In this paper we proved that for monotone graph classes, nowhere denseness is equivalent to admitting an almost linear upper bound on the neighborhood complexity of any subset of vertices of a graph from the class.
As a concrete algorithmic application of this characterization, we gave an almost linear kernel for the {\sc{Distance-$r$ Dominating Set}} problem on any nowhere dense class of graphs and any $r\in \N$.
This yields a dichotomy theorem for the parameterized complexity of {\sc{Distance-$r$ Dominating Set}} on monotone classes: there is a sharp jump in the complexity exactly on 
the boundary between nowhere and somewhere denseness.

Our study contributes to the general research programme of understanding the connections between the combinatorial, algorithmic, and logical aspects of the notions of abstract sparsity;
this programme is currently under intensive development. In particular, we believe that this work together with~\cite{siebertz2016polynomial} provides fundamental applications of model-theoretic tools borrowed
from the stability theory to the combinatorial and algorithmic theory of nowhere dense classes. Conceptually, we think that the most important technical insight can be summarized as follows.
When working with a graph $G$ from a fixed nowhere dense class $\CCC$, one often would like to define auxiliary objects that are not necessarily sparse; examples include
the hypergraph of balls of radius $r$ around every vertex, or the hypergraph of weak $r$-reachability sets that we encountered in~\cref{sec:neighborhood-complexity}. 
While we cannot directly apply the sparsity toolbox to these objects due to their dense nature,
we can still infer some of their properties, predominantly the boundedness of the VC-dimension, from the stability of the graph class $\CCC$. This, together with the Sauer-Shelah Lemma, is often enough
to get a refined combinatorial understanding that leads to conclusions for the original sparse graph $G$. More concretely, as~\cite{siebertz2016polynomial} and this work shows, 
a bound on the VC-dimension combined with the Sauer-Shelah Lemma is often the right tool for turning exponential blow-ups in combinatorial estimations into polynomial, which enables lifting
reasonings from the bounded expansion setting to the nowhere dense setting.

\bibliographystyle{abbrv}

\end{document}